\documentclass[11pt,a4paper]{article}
\usepackage[utf8]{inputenc}
\usepackage[T1]{fontenc}
\usepackage{lmodern}
\usepackage{microtype}
\usepackage[margin=1.05in]{geometry}
\usepackage{amsmath,amssymb,amsthm,mathtools,bm}
\usepackage{enumitem}
\usepackage{booktabs,array}
\usepackage{xcolor}
\usepackage{dsfont}
\usepackage{tikz}
\usetikzlibrary{arrows.meta,calc,positioning}
\usepackage[colorlinks=true,linkcolor=black,citecolor=black,urlcolor=black]{hyperref}

\setlist{itemsep=2pt,topsep=3pt}
\hypersetup{pdftitle={Beyond IP = PSPACE and QIP = PSPACE: Interactive Proofs in Arbitrary Physical Theories},pdfauthor={Kishor Bharti}}

\theoremstyle{plain}
\newtheorem{theorem}{Theorem}[section]
\newtheorem{proposition}[theorem]{Proposition}
\newtheorem{lemma}[theorem]{Lemma}
\newtheorem{corollary}[theorem]{Corollary}
\newtheorem{fact}[theorem]{Fact}
\theoremstyle{definition}
\newtheorem{definition}[theorem]{Definition}
\newtheorem{example}[theorem]{Example}
\newtheorem{question}[theorem]{Open question}
\theoremstyle{remark}
\newtheorem{remark}[theorem]{Remark}

\newcommand{\cls}[1]{\mathsf{#1}}
\renewcommand{\P}{\cls{P}}
\newcommand{\NP}{\cls{NP}}
\newcommand{\PSPACE}{\cls{PSPACE}}
\newcommand{\EXP}{\cls{EXP}}
\newcommand{\NEXP}{\cls{NEXP}}
\newcommand{\RE}{\cls{RE}}
\newcommand{\BPP}{\cls{BPP}}
\newcommand{\PP}{\cls{PP}}
\newcommand{\MA}{\cls{MA}}
\newcommand{\AM}{\cls{AM}}
\newcommand{\QMA}{\cls{QMA}}
\newcommand{\IP}{\cls{IP}}
\newcommand{\QIP}{\cls{QIP}}
\newcommand{\MIP}{\cls{MIP}}
\newcommand{\MIPs}{\cls{MIP}^{*}}
\newcommand{\BQP}{\cls{BQP}}
\newcommand{\AWPP}{\cls{AWPP}}

\newcommand{\DSPACE}{\cls{DSPACE}}
\newcommand{\DTIME}{\cls{DTIME}}
\newcommand{\GIP}{\cls{GIP}}
\newcommand{\GIPcl}{\cls{GIP}_{\mathrm{cl}}}
\newcommand{\GIPqm}{\cls{GIP}_{\mathrm{qm}}}
\newcommand{\GMA}{\cls{GMA}}
\newcommand{\BGP}{\cls{BGP}}

\newcommand{\R}{\mathbb{R}}
\newcommand{\C}{\mathbb{C}}
\newcommand{\N}{\mathbb{N}}

\newcommand{\Vp}[1]{V^{+}_{#1}}
\newcommand{\Om}[1]{\Omega_{#1}}
\newcommand{\Eff}[1]{\mathcal{E}_{#1}}
\newcommand{\Chan}[2]{\mathrm{Chan}_{\mathcal{G}}(#1,#2)}
\newcommand{\GG}{\mathcal{G}}
\newcommand{\tmin}{\otimes_{\min}}
\newcommand{\tmax}{\otimes_{\max}}
\newcommand{\Herm}{\mathrm{Herm}}
\newcommand{\Pos}{\mathrm{Pos}}
\newcommand{\Sym}{\mathrm{Sym}}
\newcommand{\Tr}{\mathrm{Tr}}
\newcommand{\val}{\mathrm{val}}
\newcommand{\id}{\mathrm{id}}
\newcommand{\poly}{\mathrm{poly}}
\newcommand{\acc}{\mathrm{acc}}
\newcommand{\conv}{\mathrm{conv}}
\newcommand{\cone}{\mathrm{cone}}
\newcommand{\ket}[1]{|#1\rangle}
\newcommand{\bra}[1]{\langle #1|}
\newcommand{\ketbra}[1]{|#1\rangle\langle #1|}
\newcommand{\QC}{\mathsf{QC}}
\newcommand{\eps}{\varepsilon}
\newcommand{\one}{\mathds{1}}

\title{\textbf{Beyond $\IP = \PSPACE$ and $\QIP = \PSPACE$: Interactive Proofs in Arbitrary Physical Theories}}
\author{Kishor Bharti$^{1,2,3\,*}$\\[6pt]
\small $^{1}$QuICS, NIST/University of Maryland, College Park, MD 20742, USA\\
\small $^{2}$UMIACS, University of Maryland, College Park, Maryland 20742, USA\\
\small $^{3}$IHPC, A*STAR, Singapore\\[4pt]
\small $^{*}$Currently at IonQ, Inc.}
\date{September 2026}

\begin{document}
\maketitle
\begin{abstract}
The equalities $\IP = \PSPACE$ and $\QIP = \PSPACE$, the latter achievable with three messages, raise a basic question: how much of an interactive proof’s power comes from the underlying physical theory? We study interactive proofs in general probabilistic theories, which include classical and quantum theory. The answer depends on what the prover and verifier exchange and how the theory specifies efficient operations. When they exchange only classical messages, protocols in every theory satisfying our standard assumptions decide exactly $\PSPACE$. For protocols with a quantum verifier and quantum messages, allowing a prover to use any theory containing quantum theory does not increase the maximum acceptance probability. Thus, the three-message $\PSPACE$ result remains valid against such provers. When messages may be arbitrary systems, the interactive-proof class can strictly exceed $\PSPACE$.
\end{abstract}

\section{Introduction}\label{sec:intro}

An interactive proof system lets a verifier with bounded computational power check a claim by exchanging messages with a prover whose power is unbounded \cite{GMR89,Bab85}. Completeness requires that an honest prover convinces the verifier on yes-instances; soundness requires that no prover, honest or not, convinces it on no-instances except with small probability. Two theorems describe the power of this model. Classical interactive proofs decide exactly the problems solvable in polynomial space,
\[
\IP=\PSPACE\quad\text{\cite{LFKN92,Sha92,She92}},
\]
and quantum interactive proofs, in which the verifier is a polynomial-time quantum circuit and the messages are quantum registers \cite{Wat03,KW00}, decide the same class,
\[
\QIP=\PSPACE\quad\text{\cite{JJUW11}}.
\]
Quantum interaction does not enlarge the class, but it shortens the interaction: three quantum messages suffice, $\QIP(3)=\QIP$ \cite{KW00}, whereas any constant number $k\ge2$ of classical messages gives $\AM$ \cite{GS86,BM88}, which equals $\PSPACE$ only if the polynomial hierarchy collapses. Standard accounts are given in Refs.~\cite{AB09,Gol08,VW16,Wat09}.

These results raise a basic question. The verifier and the prover are physical agents, and the operations they perform and the systems they exchange are governed by a physical theory: classical probability theory in $\IP$, quantum theory in $\QIP$. How much of the power of an interactive proof comes from this theory? Could a prover with stronger-than-quantum correlations convince a quantum verifier of a false statement? Could a verifier equipped with such a theory check more than $\PSPACE$? For computation without interaction, questions of this kind have concrete answers: nonlinear modifications of quantum mechanics would solve $\NP$-complete problems in polynomial time \cite{AL98}, postselection raises $\BQP$ to $\PP$ \cite{Aar05}, and quantum circuits with arbitrary real amplitudes decide undecidable problems \cite{ADH97}; see Ref.~\cite{Aar04} for a discussion of this theme.

We study interactive proofs in general probabilistic theories. This framework, which grew out of the operational approach to quantum foundations and is developed in Refs.~\cite{Har01,Bar07,CDP10,CDP11,JH14,Mul21,Pla23,CS16}, describes a theory by its systems, states, measurements, transformations and composite systems. Classical and quantum theory are instances, and so are theories with stronger-than-quantum correlations such as Barrett's boxworld \cite{Bar07}, which contains the Popescu--Rohrlich box \cite{PR94}. Lee and Barrett introduced polynomial-time uniform circuits in this framework and proved that every theory satisfying a few assumptions has a bounded-error class contained in $\AWPP\subseteq\PP$ \cite{LB15}; later work studied advice and single-message proofs \cite{LH16}, query complexity \cite{LB16,LS16,BLS18}, reversible circuit models \cite{KM19} and {the complexity classes attainable in different theories} \cite{BdBHL19}. Here we allow polynomially many rounds of interaction between a verifier that runs a polynomial-time uniform family of circuits of the theory and a prover that may use any system and any transformation of the theory, with no computational bound.

\subsection{Results}\label{sec:intro-results}

We write $\GIP(\GG)$ for the class of promise problems that have an interactive proof in the theory $\GG$ with completeness $2/3$ and soundness $1/3$, and $\GIP(\GG;k)$ for the subclass with at most $k$ messages. Every theory we consider satisfies four standing assumptions, stated in Section~\ref{sec:standing}: local tomography, causality, containment of classical theory, and computability of the verifier's gates. Under these assumptions the answer to the question above depends on two things: which systems the verifier and the prover exchange, and how the theory specifies its efficient operations.

\paragraph{Classical messages.}
Let $\GIPcl(\GG)$ be the subclass of $\GIP(\GG)$ in which every message is a classical register. The verifier may still prepare, transform and measure arbitrary systems of $\GG$ internally, and the prover may use arbitrary systems and transformations of $\GG$.

\begin{theorem}[Theorem~\ref{thm:cl-universal}]\label{thm:intro-A}
For every theory $\GG$ satisfying the standing assumptions, $\GIPcl(\GG)=\PSPACE$.
\end{theorem}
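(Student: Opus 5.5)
The plan is to prove the two inclusions separately. The inclusion $\PSPACE\subseteq\GIPcl(\GG)$ is a simulation argument. The inclusion $\GIPcl(\GG)\subseteq\PSPACE$ is a game-tree evaluation, preceded by a reduction of an arbitrary $\GG$-prover to a classical one.

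\emph{Lower bound.} By the standing assumption that $\GG$ contains classical theory, the verifier has classical registers, a uniform source of random bits (a classical state prepared and measured), and the classical reversible gates needed for polynomial-time computation. Hence any $\IP$ verifier can be run as a uniform polynomial-size $\GG$-circuit that uses only classical systems, exchanging the same classical messages. A prover in $\GG$ that sends classical registers can always play the honest classical strategy, so completeness is preserved. Soundness against $\GG$-provers then follows from the reduction in the next paragraph: no $\GG$-prover beats the best classical prover. Together with $\IP=\PSPACE$ this gives $\PSPACE\subseteq\GIPcl(\GG)$.

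\emph{Reduction of the prover to a classical strategy.} This is the step I expect to be the main obstacle. Fix an arbitrary prover in $\GG$, with internal systems and transformations of $\GG$, interacting through classical registers. By causality, the probability of the prover's $j$-th message, given the verifier messages received so far, does not depend on future verifier messages. So the prover's behaviour induces a well-defined family of conditional distributions $q(a_j\mid v_1,\dots,v_j,a_1,\dots,a_{j-1})$. Since the prover's systems are disjoint from the verifier's, and the only interface consists of classical registers, local tomography and the composition rules should factor the joint transcript probability. It should factor as the product of the verifier's conditionals $p(v_{j}\mid \text{history})$ and the prover's conditionals $q$. In this product the verifier's internal $\GG$-state enters only through $p$. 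I would prove this factorization by induction on rounds, using the fact that a classical register is a system whose state space is a simplex, so that conditioning on its value is a legitimate operation.

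Given the factorization, the acceptance probability is multilinear in the $q$'s. It is therefore maximized by a deterministic classical strategy. Consequently
\[
\val(h)=\begin{cases}\sum_{v}p(v\mid h)\,\val(hv) & \text{at verifier turns},\\ \max_a \val(ha) & \text{at prover turns},\end{cases}
\]
evaluated at the empty history, equals the optimal acceptance probability over all $\GG$-provers.

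\emph{Upper bound.} Messages have polynomial length and there are polynomially many rounds, so the recursion has polynomial depth with exponential branching. It can be evaluated by depth-first search in polynomial space, provided each $p(v\mid h)$ can be approximated in $\PSPACE$ to precision $2^{-\poly}$. To obtain this, I would write $p(v\mid h)=\Pr[h v]/\Pr[h]$, where each $\Pr[\cdot]$ is the outcome probability of a polynomial-size $\GG$-circuit whose classical inputs are fixed to the prover's messages in $h$. Alternatively, I would evaluate unnormalized joint probabilities directly, so that no division is needed. Following the Lee--Barrett analysis, local tomography lets such a probability be written as a tensor contraction of the gates' real matrices. A Feynman-style path sum evaluates this contraction in polynomial space, and the computability assumption supplies the gate entries to any polynomial number of bits. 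Truncation errors accumulate only additively over $2^{\poly}$ paths, so they can be kept below $1/12$. Comparing the approximated value with $1/2$ then separates the completeness threshold $2/3$ from the soundness threshold $1/3$, which yields $\GIPcl(\GG)\subseteq\PSPACE$.
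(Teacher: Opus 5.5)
Your proposal is correct and follows essentially the paper's proof: you simulate a $\PSPACE$ interactive proof for the lower bound, and you reduce an arbitrary $\GG$-prover to classical conditional response distributions by a round-by-round product-form induction (Lemmas~\ref{lem:product} and~\ref{lem:classical-sim}, with the key decomposition $\Pi_i(\delta_q\otimes\tau)=\sum_a\delta_a\otimes\Pi_i^{qa}(\tau)$). You then pass to a deterministic prover via the sum/max game tree (Fact~\ref{fact:deterministic}) and evaluate its leaves in polynomial space by path sums (Fact~\ref{fact:pathsum}). Of your two options for the leaves, the unnormalised one is what the paper does, via the weights $W(q,a)$, and it is the cleaner choice because it needs no lower bound on the denominators $\Pr[h]$, which may be exponentially small.
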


The theory enters only through the verifier's own circuit; the prover's nonclassical resources are of no use to it. Conditioned on the classical transcript, the verifier's and the prover's private systems remain in a product state, so the prover's behaviour is described by classical conditional distributions of answers given questions, and a deterministic prover is optimal. The value of the protocol is then an alternating sum-and-maximum expression of polynomial depth whose leaves are acceptance probabilities of the verifier's circuit with the answers fixed, and these leaves can be approximated in polynomial space by the path-sum method of Lee and Barrett \cite{LB15}. This extends the argument of Feldman and of Papadimitriou for $\IP\subseteq\PSPACE$ \cite{Fel86,Pap85} to arbitrary verifier circuits.

\paragraph{Quantum verifiers.}
Let $\GG$ contain quantum theory, in the sense that its qubit registers are ordinary quantum registers on which all quantum channels are allowed, and let $\GIPqm(\GG)$ be the subclass of $\GIP(\GG)$ in which the verifier is a polynomial-time quantum circuit and all messages are qubit registers. The prover may use arbitrary systems and channels of $\GG$, including systems whose joint states with the verifier's qubits lie outside the quantum state space.

\begin{theorem}[Theorem~\ref{thm:qm-universal}]\label{thm:intro-B}
Let $\GG$ contain quantum theory. For every quantum verifier $V$ and every input $x$, the maximum acceptance probability of $V_x$ over provers of $\GG$ equals its maximum over quantum provers. Consequently $\GIPqm(\GG;k)=\QIP(k)$ for every $k$; in particular $\GIPqm(\GG)=\GIPqm(\GG;3)=\PSPACE$.
\end{theorem}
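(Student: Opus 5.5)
The plan is to show that a general prover of $\GG$ talking to a quantum verifier through qubit messages is, from the verifier's point of view, indistinguishable from a quantum prover. The inequality ``quantum $\le$ general'' is immediate because $\GG$ contains quantum theory: every quantum prover strategy is a prover of $\GG$. For the reverse inequality I would fix an arbitrary $\GG$-prover strategy and apply local tomography and causality to the verifier's register $V$. Their consequence is that, at each round, the only thing the verifier can ever probe is the sequence of maps induced on its own qubit messages.

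Concretely, I would describe a $k$-message interaction by its \emph{strategy} (comb) on the message registers. A prover strategy of $\GG$ induces a multilinear map taking the verifier's outgoing qubit messages to incoming qubit messages, and the acceptance probability is a bilinear pairing between the verifier's quantum co-strategy and this induced object. The key lemma is the following. Whatever the prover does on its private system $P$ of $\GG$ and whatever joint states of $P$ and the message qubits it uses, the induced object on the qubit registers is a valid \emph{quantum} strategy in the sense of Gutoski--Watrous. It must satisfy two conditions. First, it is positive when paired with every quantum co-strategy, including co-strategies with arbitrary quantum ancillas entangled with the messages. Second, it obeys the causal (no-signalling-backward) normalisation constraints.

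The normalisation constraints follow from causality of $\GG$: the unique deterministic effect on $P$ makes the marginals on earlier messages independent of later verifier actions. Positivity on all quantum co-strategies needs slightly more. The verifier's ancilla $A$ is a quantum system, and the composite $A\otimes(\text{messages})$ paired with the prover's system must yield nonnegative probabilities for every quantum effect on $A\otimes(\text{messages})$. Because $\GG$ contains quantum theory with all quantum channels allowed on qubits, the prover's reduced behaviour on qubit systems is a completely positive trace-non-increasing map built from $\GG$-states. Local tomography then lets me identify the induced object with an operator whose Choi representation is positive: it pairs nonnegatively with all product and, by the requirement that quantum channels act on qubit subsystems of composites, all entangled quantum effects. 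I would then invoke the Gutoski--Watrous characterization. Every operator satisfying the positivity and linear normalisation constraints of a $k$-round strategy is realised by an actual quantum prover with a finite-dimensional memory, obtained by purification and sequential isometries. Hence the acceptance probability is achieved by a quantum prover, and the two maxima coincide. The class equalities then follow: $\GIPqm(\GG;k)=\QIP(k)$ holds because values coincide instance by instance, so completeness and soundness transfer. Combining with $\QIP(3)=\QIP=\PSPACE$ \cite{KW00,JJUW11} gives the final chain.

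The main obstacle is the positivity step. The joint state of the prover's $\GG$-system with the verifier's qubits may lie outside the quantum state space, possibly in some kind of maximal tensor product. I must check that this does not yield an induced strategy that is only positive on product effects, that is, an entanglement witness rather than a positive operator. I would first try to use the clause that all quantum channels, hence entangling verifier operations and the verifier's retained ancilla, are allowed on qubit registers inside $\GG$. This would force the induced per-round maps to be completely positive by induction on the round. Once complete positivity is established at each round, the remainder is the standard quantum strategy realisation argument.
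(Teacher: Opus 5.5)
Your proposal is correct and follows the paper's proof. The steps match: quantum provers are $\GG$-provers; a $\GG$-prover induces a comb on the message registers that is completely positive and causally normalised; local tomography makes the acceptance probability a linear function of that comb; and the realisation theorem for quantum combs supplies a quantum prover with the same comb.

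The positivity step you flag as the main obstacle needs no round-by-round induction. The prover's channels are positive on all extensions of $\GG$, including qubit reference registers $R$, and composites of qubits carry exactly the cone $\Pos$. So feeding $\tau_0\otimes\xi$ with $\xi\in\Pos(M_1\cdots M_rR)$ through them lands in $\Pos(M'_1\cdots M'_rR)$, and Choi's theorem gives complete positivity of $S$ and of every $S^{(j)}$ directly.
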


A prover's strategy induces a linear map from the question registers to the answer registers. Positivity on all quantum extensions makes this map completely positive by Choi's theorem \cite{Cho75}, and causality gives it the normalisation constraints of a quantum strategy in the sense of Gutoski and Watrous \cite{GW07}, equivalently of a quantum comb in the sense of Chiribella, D'Ariano and Perinotti \cite{CDP08,CDP09}. The realisation theorem for quantum strategies then supplies a quantum prover with the same acceptance probability. Thus the three-message characterisation $\QIP(3)=\PSPACE$ remains valid against provers that are not confined to quantum theory.

\paragraph{Arbitrary messages.}
When the messages may be arbitrary systems of the theory, the class need not be $\PSPACE$, even in theories that contain both quantum theory and classical theory.

\begin{theorem}[Theorems~\ref{thm:triangle}, \ref{thm:Qtriangle} and~\ref{thm:computable}]\label{thm:intro-C}
For every real number $v\in(0,1)$ there is a theory $\GG^{\mathrm Q}_v$ satisfying the standing assumptions, containing quantum theory and classical theory as subtheories, and isomorphic to quantum theory with classical systems, such that $\GIPcl(\GG^{\mathrm Q}_v)=\GIPqm(\GG^{\mathrm Q}_v)=\PSPACE$ and:
\begin{enumerate}[label=(\alph*)]
\item if $v$ is not computable, then a protocol with a single message from the prover decides an undecidable promise problem, so $\GIP(\GG^{\mathrm Q}_v)\supsetneq\PSPACE$;
\item there is a computable $v$ for which $\GIP(\GG^{\mathrm Q}_v)\supsetneq\PSPACE$ unconditionally; the separating problem is decidable in quasi-polynomial space.
\end{enumerate}
\end{theorem}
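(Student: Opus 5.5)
The plan is to build $\GG^{\mathrm Q}_v$ by adding to quantum theory with classical systems one extra system type $T_v$, the "triangle". Its state space is a triangle $\conv\{a,b,c_v\}$ whose vertices $a,b$ have rational coordinates and whose third vertex $c_v$ has a coordinate depending linearly on $v$. Abstractly $T_v$ is a classical trit, so the whole theory is isomorphic to quantum theory with classical systems. Composites are taken via the minimal tensor product with the quantum and classical systems, so local tomography and causality are inherited.

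The point is that the efficient gates the verifier may use on $T_v$ are specified in a \emph{fixed} coordinate frame with computable entries. The verifier therefore never needs $v$ to simulate its own gates, and the computability standing assumption holds. I would include two kinds of gates on $T_v$:
\begin{itemize}
\item a computable effect $f$ with $f(a)=f(b)=0$ and $f(c_v)=v$; choosing the frame so that $c_v=(v,\cdot)$ makes $f$ a coordinate functional;
\item a computable test effect $e$ that equals $1$ exactly on the vertex $c_v$ and is strictly smaller elsewhere.
\end{itemize}
The verifier has no gate that prepares $c_v$, so preparing $c_v$ is something only the prover can do.

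For $\GIPcl(\GG^{\mathrm Q}_v)=\GIPqm(\GG^{\mathrm Q}_v)=\PSPACE$ I would simply invoke Theorems~\ref{thm:intro-A} and~\ref{thm:intro-B}. Once the standing assumptions are checked, they apply verbatim. The leaves of the verifier's path sum involve only its own computable gate entries, and $v$ enters only through states the prover sends, which are excluded when messages are classical or qubits.

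For part (a), take the unary promise problem $L_v=\{1^{2^k}: \text{the $k$-th bit of } v \text{ is } 1\}$, with a suitable promise on the inputs so that estimating $v$ to additive precision $2^{-k}/\poly$ decides membership. If $v$ is uncomputable, then $L_v$ is undecidable. The single-message protocol goes as follows:
\begin{enumerate}
\item The prover sends $N=\poly(2^k)$ copies of $T_v$; the input has length $2^k$, so this is polynomial.
\item The verifier picks a random subset of the copies and applies the test $e$ to them, rejecting on any failure.
\item On the remaining copies it measures $f$ and averages the outcomes, obtaining an estimate of $v$ to precision $2^{-k}/\poly$ by Chernoff, and reads off the bit.
\end{enumerate}

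Soundness is the main obstacle. A cheating prover may send a correlated, non-product state on $T_v^{\otimes N}$ that passes most tests while biasing the $f$ statistics. I would handle this with a de Finetti or sampling argument on the classical-like triangle system. Since $T_v$ is a classical trit in disguise, any state of $T_v^{\otimes N}$ is a probability distribution over vertex strings. Random sampling then shows that if the test passes with probability $\ge 1/3$, all but a small fraction of the untested positions are $c_v$, and that fraction is controlled in the $f$-average. Making the "strictly smaller than $1$ off $c_v$" margin quantitative is where I would expect the real work: the test $e$ may have to be compiled into a multi-round or amplified test so that its gap is constant rather than depending on $v$.

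For part (b), I would choose a computable $v$ by diagonalization. Enumerate the $\PSPACE$ machines $M_1,M_2,\dots$ and set the $k$-th bit of $v$ to disagree with $M_j$ on input $1^{2^k}$ for the appropriate $j$ indexed by $k$. Give each machine space $2^{(\log 2^k)^{2}}$-bounded simulation, so that $M_j$ is eventually run on inputs where its polynomial bound is dominated. Then $L_v\notin\PSPACE$, but $L_v$ is decidable in quasi-polynomial space by running the diagonalization itself. Combined with the protocol from (a), this gives $L_v\in\GIP(\GG^{\mathrm Q}_v)\setminus\PSPACE$.
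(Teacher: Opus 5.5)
Your overall architecture matches the paper's:
\begin{itemize}
\item a triangle whose third vertex carries $v$;
\item a verifier measurement $f$ with computable coordinates whose maximum over the states is $v$;
\item an honest prover who sends copies of that vertex;
\item Theorems~\ref{thm:intro-A} and~\ref{thm:intro-B} for $\GIPcl$ and $\GIPqm$;
\item a space-hierarchy diagonalisation for (b).
\end{itemize}
The genuine gap is the step ``if $v$ is uncomputable, then $L_v$ is undecidable'' for your bit problem. This is false in general.

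You chose the promise so that an estimate of $v$ to precision $\eps_k=2^{-k}/\poly(2^k)$ determines bit $k$. The promise must therefore exclude every $k$ at which $v$ lies within $\eps_k$ of $2^{-k}\mathbb{Z}$, because an $\eps_k$-estimate may then fall on the wrong side of the boundary. Now take $v=\sum_i2^{-k_i}$ with $k_{i+1}\in\{2^{k_i},2^{k_i}+1\}$ chosen according to a noncomputable bit sequence.
\begin{itemize}
\item This $v$ is not dyadic. If it were computable, its binary digits would be computable, hence so would the sequence; so $v$ is noncomputable.
\item At every index $k_i$ that carries a digit $1$, $v$ lies within $2^{1-2^{k_i}}\ll\eps_{k_i}$ of $2^{-k_i}\mathbb{Z}$. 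So $k_i$ violates the promise for all large $i$.
\item All other indices carry the digit $0$. Hence ``always reject'' (with finitely many hard-coded exceptions) decides your $L_v$.
\end{itemize}
The missing idea is to compare $v$ with \emph{all} thresholds: yes if $v\ge\bar t+|t|^{-2}$, no if $v\le\bar t-|t|^{-2}$. From any decider, the largest accepted $\bar t$ of length $m$ is within $m^{-2}+2^{-m}$ of $v$ (Lemma~\ref{lem:approx}). So undecidability follows for every noncomputable $v$.

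The threshold problem also removes what you identify as the main obstacle, and with it the test effect and the sampling/de Finetti argument. The problem is one-sided: soundness only needs the number of $f$-outcomes to stay below $N\bar t$ with high probability. Since $f\le v$ on the whole triangle (a linear functional is maximised at a vertex), and conditioned on earlier outcomes each wire is in a state of $T_v$ (Fact~\ref{fact:marginal}), a correlated or prover-entangled message can only \emph{lower} the statistic. Stochastic domination by $\mathrm{Bin}(N,v)$ then gives soundness directly. Your test is needed only because a bit question is two-sided: lowering the estimate can flip bit $k$. (A test effect with computable coordinates exists only in suitable frames, e.g.\ $c_v=(2,v,1)$ with $e=x/2$ and $f=y$. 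In the frame $c_v=(1,v,1)$ it would force $v$ to be computable.)

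The same boundary issue affects your part (b). Unrestricted diagonal digits can form long runs of equal digits, so diagonal inputs may fall outside the promise, and a polynomial-space decider is then not obliged to answer them correctly. Spacing the digits, as in the paper's $v_T=\sum_{i\in T}4^{-i-1}$, repairs this. With that repair, your direct diagonalisation against polynomial-space deciders is a valid alternative to the paper's route through the unary space hierarchy theorem and Lemma~\ref{lem:approx}.
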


The theory $\GG^{\mathrm Q}_v$ is quantum theory with one additional classical three-level system $W_v$, whose state space is presented as the triangle with vertices $(0,0,1)$, $(1,0,1)$ and $(1,v,1)$. The verifier's gates have integer and dyadic coordinates in this presentation, so the standing assumptions hold whatever $v$ is; but the maximum of the verifier's measurement $f_y(x,y,z)=y$ over the states of $W_v$ equals $v$. A prover that sends many copies of the vertex $(1,v,1)$ lets the verifier estimate $v$ to inverse-polynomial precision, and comparing $v$ with the input decides a promise problem whose difficulty is that of computing $v$. For noncomputable $v$ this problem is undecidable; for a computable $v$ built from the space hierarchy theorem \cite{HLS65} it lies outside $\PSPACE$. The same construction with classical theory in place of quantum theory gives a theory $\GG_v$, isomorphic to classical theory, with the same properties.

\paragraph{Dependence on the specification of efficient operations.}
The theories $\GG_v$ and $\GG^{\mathrm Q}_v$ are isomorphic, as theories, to classical theory and to quantum theory with classical systems, whose interactive-proof classes are $\IP=\PSPACE$ and $\QIP=\PSPACE$ (Lemma~\ref{lem:QC}). {Under these isomorphisms, the designated measurement becomes a measurement using a coin of bias $v$, which need not be an efficient gate in the standard presentation.} If $v$ can be approximated efficiently, the two presentations are computationally equivalent and $\GIP(\GG_v)=\GIP(\GG^{\mathrm Q}_v)=\PSPACE$ (Proposition~\ref{prop:presentation}). Thus $\GIP(\GG)$ is not an invariant of the operational content of $\GG$ alone: it depends on how the theory specifies its efficient operations, and an upper bound on $\GIP(\GG)$ must use a hypothesis about this specification. Table~\ref{tab:summary} in Section~\ref{sec:discussion} summarises the results.

\subsection{Organisation}
Section~\ref{sec:prelim} recalls the complexity classes and the convex geometry we use. Section~\ref{sec:gpt} introduces general probabilistic theories, with the classical, quantum, boxworld and triangular theories as running examples, and states the standing assumptions. Section~\ref{sec:circuits} recalls circuits and the path-sum representation of their outcome probabilities. Section~\ref{sec:model} defines interactive proofs in a theory and proves error reduction. Sections~\ref{sec:cl}, \ref{sec:qm} and~\ref{sec:beyond} prove Theorems~\ref{thm:intro-A}, \ref{thm:intro-B} and~\ref{thm:intro-C}, and Section~\ref{sec:discussion} discusses the results and open questions. Appendix~\ref{app:hierarchy} contains the unary space hierarchy theorem used in Section~\ref{sec:beyond}.

\section{Preliminaries}\label{sec:prelim}

\paragraph{Notation.}
Complexity classes are written in sans-serif. All classes are classes of promise problems. A promise problem is a pair $L=(L_{\mathrm{yes}},L_{\mathrm{no}})$ of disjoint subsets of $\{0,1\}^*$; an algorithm decides it if it accepts every string of $L_{\mathrm{yes}}$ and rejects every string of $L_{\mathrm{no}}$, and a promise problem is decidable if some algorithm that halts on every input decides it. Logarithms are natural unless written $\log_2$. For a finite set $S$, $\R^S$ is the space of real functions on $S$ and $\delta_s\in\R^S$ is the indicator of $s$. Vectors of a real vector space $V$ are written $\omega,\sigma,\tau,\dots$; elements of the dual space $V^*$ are written $f,g,u,\dots$; the pairing is $f(\omega)$. A polynomial $p(n)$ is a fixed polynomial with nonnegative integer coefficients evaluated at the input length $n$, and $\poly(n)$ denotes some such polynomial.

\subsection{Complexity classes and interactive proofs}\label{sec:classes}

Standard references are Refs.~\cite{AB09,Gol08,HO02}; for the quantum classes see Refs.~\cite{Wat09,VW16,Wat18}. $\P$, $\NP$, $\PSPACE$ and $\EXP$ are deterministic polynomial time, nondeterministic polynomial time, polynomial space and deterministic time $2^{\poly(n)}$; $\DSPACE(s)$ and $\DTIME(t)$ are the deterministic space and time classes; $\BPP$ and $\BQP$ are bounded-error probabilistic and quantum polynomial time; $\PP$ is unbounded-error probabilistic polynomial time. We use $\PSPACE\subseteq\EXP$ and the fact that every problem in $\EXP$, and hence in $\PSPACE$, is decidable. The class $\AWPP$ \cite{FFK94,FR99} satisfies $\BQP\subseteq\AWPP\subseteq\PP\subseteq\PSPACE$; it appears only in the discussion of related work.

\begin{definition}[Interactive proofs]\label{def:IP}
A classical interactive proof system for a promise problem $L$ consists of a probabilistic polynomial-time verifier $V$ that exchanges polynomially many messages of polynomial length with a prover $P$ (an arbitrary, possibly randomised, function from the conversation so far to the next message) and then accepts or rejects, such that for every input $x$: (completeness) if $x\in L_{\mathrm{yes}}$, there is a prover with $\Pr[(V,P)(x)\text{ accepts}]\ge2/3$; (soundness) if $x\in L_{\mathrm{no}}$, then for every prover $\Pr[(V,P)(x)\text{ accepts}]\le1/3$. $\IP$ is the class of promise problems having such a system, $\IP(k)$ the subclass in which at most $k$ messages are exchanged, and $\MA=\IP(1)$ with the message sent by the prover.
\end{definition}

The thresholds $2/3$ and $1/3$ can be replaced by $1-2^{-p(n)}$ and $2^{-p(n)}$ for any polynomial {$p$ with $p(n)\ge2$}, and by any constants {$0<s<c<1$}, without changing $\IP${. To amplify a protocol with completeness $c$ and soundness $s$, run it sequentially polynomially many times and accept if the fraction of accepting runs exceeds $(c+s)/2$} (see, e.g., Ref.~\cite[Section~9.1]{Gol08}). Sequential repetition increases the number of messages; for the classes with a bounded number of messages we only use the following facts.

\begin{fact}[Lund, Fortnow, Karloff and Nisan; Shamir; Shen \cite{LFKN92,Sha92,She92}]\label{fact:IP}
$\IP=\PSPACE$. Every promise problem in $\PSPACE$ has an interactive proof system in which the verifier's messages are uniformly random strings (a public-coin system), with perfect completeness and soundness $1/3$ against arbitrary randomised provers.
\end{fact}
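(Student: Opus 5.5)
This is a cited fact, so the plan is to assemble it from the literature rather than reprove it; I will indicate which ingredients carry which part of the statement.

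\textbf{The inclusion $\IP\subseteq\PSPACE$.} I would follow Feldman and Papadimitriou \cite{Fel86,Pap85}.
\begin{enumerate}[label=(\roman*)]
\item For a fixed verifier and input, the optimal prover may be taken deterministic. Its acceptance probability is a linear functional of the prover's conditional distributions, so some deterministic strategy attains the maximum.
\item The optimal acceptance probability is then given by a game tree of polynomial depth. Verifier coin tosses become averaging nodes and prover messages become maximisation nodes.
\item Evaluate this tree recursively by depth-first search. Each level stores only a polynomial-size counter and a rational partial value with polynomially many bits, so the total space is polynomial.
\item Comparing the computed value with $1/2$ decides the promise problem.
\end{enumerate}

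\textbf{The inclusion $\PSPACE\subseteq\IP$.} I would use Shamir's protocol for TQBF, with Shen's degree-reduction operator to keep the polynomials small \cite{Sha92,She92}.
\begin{enumerate}[label=(\roman*)]
\item Arithmetize the quantified Boolean formula over a prime field $\mathbb{F}_q$ with $q$ exponential in a polynomial. Interleave linearization operators so that every intermediate polynomial has degree at most $2$, or $O(n)$, in each variable.
\item Run the sum-check-style reduction operator by operator. In each round the prover sends a univariate polynomial. The verifier checks it against the current claim and then sends a uniformly random field element.
\item In the final round the verifier evaluates the arithmetized formula itself.
\end{enumerate}
This verifier is public-coin, since its messages are exactly its random field elements. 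Completeness is perfect, because the honest prover sends the true polynomials. For soundness, a false claim survives a round only if the random point hits a root of a nonzero low-degree difference polynomial. By Schwartz--Zippel this happens with probability at most $d/q$ per round. A union bound over the polynomially many rounds gives total error at most $\poly(n)/q\le1/3$.

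\textbf{Soundness against randomised provers.} This follows by averaging. A randomised prover is a convex mixture of deterministic provers, so its acceptance probability is at most the maximum over deterministic ones.

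\textbf{Main obstacle.} The delicate step is the degree bookkeeping in Shamir's protocol. Without Shen's linearization, the universal-quantifier products blow up the degrees exponentially. That would break both the message length and the $d/q$ soundness bound. Since the paper only cites this fact, I would reference \cite{LFKN92,Sha92,She92} for this detail rather than reproducing it.
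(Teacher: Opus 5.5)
Your plan matches how the paper treats this fact. The paper only cites it, and the one half it reproves, in generalised form in Fact~\ref{fact:deterministic} and Theorem~\ref{thm:cl-universal}, is the Feldman--Papadimitriou upper bound. Your Shamir--Shen sketch and the averaging argument for randomised provers are fine.

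However, step (ii) of your upper bound is wrong as stated for private-coin verifiers, which Definition~\ref{def:IP} allows. Your tree branches on the verifier's coin tosses, with the prover's maximisation nodes below them. Plain depth-first evaluation of that tree lets each prover message depend on coins the prover never sees. You would then compute the value of a perfect-information game, and that value can exceed the true one. Take the graph non-isomorphism protocol: the verifier sends $\pi(G_b)$ for random $b$ and $\pi$, and accepts iff the prover returns $b$. On isomorphic graphs the true value is $1/2$, or $2^{-k}$ after $k$-fold parallel repetition. Your tree gives value $1$, so the comparison with $1/2$ accepts no-instances.

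The repair is the one in Fact~\ref{fact:deterministic}: branch on message transcripts, not on coins. Sum over verifier messages and maximise over prover messages. Put all of the verifier's randomness into the leaf weights $W(q,a)$, defined as the probability over the coins that the verifier, fed the answers $a$, asks the questions $q$ and accepts. Equivalently, average over the next verifier message with its conditional probability given the transcript so far. Each weight is computable in polynomial space by enumerating coin strings.

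Three smaller points:
\begin{itemize}
\item In (i), the acceptance probability is multilinear, not linear, in the prover's conditionals: it is affine in each one separately. That still suffices for the vertex argument, but your later remark that a randomised prover is a mixture of deterministic provers is the cleaner justification.
\item For the verifier's messages to be literally uniform strings, as the statement requires, work over $\mathrm{GF}(2^k)$. Alternatively, derive near-uniform elements of $\mathbb{F}_q$ from fair coins, at a constant-factor cost in the soundness bound.
\item For a promise problem in $\PSPACE$, run the protocol on a $\PSPACE$ language that contains the yes-instances and excludes the no-instances. The verifier reduces it to TQBF itself.
\end{itemize}
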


\begin{fact}[Babai and Moran; Goldwasser and Sipser \cite{BM88,GS86}]\label{fact:AM}
For every constant $k\ge2$, $\IP(k)=\AM$, the class of two-message public-coin interactive proofs, and $\AM\subseteq\Pi^p_2\subseteq\PSPACE$.
\end{fact}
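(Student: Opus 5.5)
This is a cited fact, so the plan is to assemble the standard proofs rather than develop anything new. The statement has three parts: (i) $\IP(k)=\AM$ for constant $k\ge2$; (ii) $\AM\subseteq\Pi^p_2$; (iii) $\Pi^p_2\subseteq\PSPACE$.

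Part (iii) is immediate. A $\Pi^p_2$ predicate $\forall y\,\exists z\,R(x,y,z)$ with polynomially bounded $y,z$ can be evaluated by enumerating $y$ and $z$ in lexicographic order, reusing the same polynomial space. This works unchanged for promise problems, since only the yes/no instances matter.

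For part (i), the inclusion $\AM\subseteq\IP(k)$ is trivial, because a two-message public-coin protocol is a $k$-message protocol. The reverse inclusion splits into two steps.

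\begin{enumerate}[label=(\alph*)]
\item Convert a private-coin $k$-message protocol into a public-coin one with $k+O(1)$ messages, following Goldwasser and Sipser. The prover convinces the verifier, via public-coin set-lower-bound protocols based on pairwise-independent hashing, that the set of verifier coin sequences consistent with the transcript and leading to acceptance is large. Completeness and soundness gaps are first amplified by parallel repetition, which is valid for a fixed number of messages in the private-coin model because an optimal prover can be analysed round by round.
\item Apply Babai and Moran's collapse $\AM(k)=\AM$ for constant $k$. The key move rewrites an $\MA\!\AM$ pattern as $\AM\!\AM=\AM$ by swapping a Merlin move past an Arthur move. To do this, first amplify Arthur's next round so that the error is below $2^{-m}$, where $m$ is the length of Merlin's message. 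Then let Arthur send his coins before Merlin, and take a union bound over Merlin's possible messages. Each swap costs a polynomial blow-up in message length, so a constant number of swaps keeps everything polynomial.
\end{enumerate}

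For part (ii), the first step is to amplify an $\AM$ protocol to error below $2^{-r}$ on Merlin's side, where $r$ is the length of Arthur's random string. Sipser--Lautemann style shifting then gives a $\Sigma_2$ or $\Pi_2$ characterisation. The cleaner route is the $\Pi_2$ one. On yes-instances, Merlin succeeds for all but a tiny fraction of coin strings; with perfect completeness one can even write $\forall r\,\exists m\,V(x,r,m)$. On no-instances, for most $r$ no $m$ works. To get a $\forall\exists$ formula with a gap, first make completeness perfect, by the Goldreich--Mansour--Sipser transformation or by Lautemann shifts. After that, $x\in L_{\mathrm{yes}}$ holds iff $\forall r\,\exists m\,V$ holds. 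On no-instances this fails, because some $r$ has no accepting $m$.

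I expect the main obstacle to be the private-to-public coin simulation in step (a). Its approximate lower-bound protocol and the bookkeeping of set sizes through the rounds are where the real work lies. Since the paper only uses this as a cited fact, I would invoke Refs.~\cite{GS86,BM88} for these steps and not reprove them in full.
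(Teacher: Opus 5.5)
Your decomposition is the right one. The paper gives no proof of this fact and simply imports it from \cite{BM88,GS86}, so citing those papers for $\IP(k)\subseteq\AM$ and enumerating the quantifiers for $\Pi^p_2\subseteq\PSPACE$ is all that is needed. Two of your sketched steps, however, fail as written.

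\textbf{The amplification in part (ii).} Repetition cannot bring the error of an $\AM$ protocol below $2^{-r}$, where $r$ is the length of Arthur's random string, because repetition lengthens the random string in proportion to the number of repetitions. On the soundness side such a bound is impossible in general. It would mean that no coin string admits an accepting answer on a no-instance, and that would place every $\AM$ problem in $\NP$ (guess the coins and Merlin's reply). What shifting actually needs is error below $1/r$, which $O(\log r)$ parallel repetitions give. With that bound you get the $\Pi^p_2$ form directly, without first making completeness perfect:
\[
x\in L_{\mathrm{yes}}\iff\forall s_1,\dots,s_r\ \exists\rho\ \exists m_1,\dots,m_r\ \bigwedge_{i=1}^{r}V(x,\rho\oplus s_i,m_i)=1 .
\]
Let $A_x$ be the set of coin strings that admit an accepting answer. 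On a yes-instance, a union bound over the $r$ shifted copies of the complement of $A_x$ gives a good $\rho$ for every choice of shifts. On a no-instance, random shifts make $\bigcap_i(A_x\oplus s_i)$ empty with positive probability once the error is below $1/2$. Alternatively, keep your route through the cited perfect-completeness theorem and delete the amplification sentence.

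\textbf{The Babai--Moran swap.} Amplifying only Arthur's next round and then taking a union bound over Merlin's message works for an $\mathrm{MAM}$ protocol, because there soundness holds for every first message of Merlin. When Arthur has already moved before the pair being swapped, as in $\mathrm{AMAM}$, the bounds $2/3$ and $1/3$ hold only on average over his earlier coins, and boosting the tail can invert the gap. Suppose that on a yes-instance the remaining game has value $1$ for a fraction $0.35$ of Arthur's first messages and $0.49$ for the rest. Suppose that on a no-instance it has value $0.51$ for a fraction $0.65$ and $0$ for the rest. The original acceptance probabilities are then at least $2/3$ and at most $1/3$. After boosting at threshold $1/2$, and swapping, they are about $0.35$ and at least about $0.65$; any fixed threshold admits a similar example. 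The collapse therefore first needs an amplification of the whole public-coin protocol, which parallel repetition provides without adding messages. After that, apply Markov's inequality over the earlier coins, and only then boost the tail and take the union bound. With the citations in place the fact stands, but your sketch of the constant-$k$ collapse omits this step.
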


\begin{definition}[Quantum interactive proofs]\label{def:QIP}
$\QIP$ is defined as in Definition~\ref{def:IP} except that the verifier is a polynomial-time uniform family of quantum circuits with intermediate measurements and classical control \cite{AKN98}, the prover may apply arbitrary quantum channels on a finite-dimensional private memory together with the message registers, and the messages are registers of polynomially many qubits. $\QIP(k)$ is the subclass with at most $k$ messages, and $\QMA=\QIP(1)$ with the message sent by the prover \cite{Kni96,KSV02,Wat18,VW16}.
\end{definition}

\begin{fact}[Kitaev and Watrous; Jain, Ji, Upadhyay and Watrous \cite{KW00,JJUW11}]\label{fact:QIP}
$\QIP=\QIP(3)=\PSPACE$.
\end{fact}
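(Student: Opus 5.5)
This statement is quoted from the literature, so the plan is to assemble the cited proofs rather than argue from scratch. There are three inclusions to establish: $\PSPACE\subseteq\QIP(3)$, $\QIP(3)\subseteq\QIP$, and $\QIP\subseteq\PSPACE$. The middle one is trivial from Definition~\ref{def:QIP}.

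\textbf{Lower bound, $\PSPACE\subseteq\QIP(3)$.} First I will show $\PSPACE\subseteq\QIP$. By Fact~\ref{fact:IP}, every problem in $\PSPACE$ has a public-coin classical interactive proof with perfect completeness and soundness $1/3$. A quantum verifier simulates it by measuring each incoming message in the computational basis and generating its coins by measuring $\ket{+}$ states. Any quantum prover then induces, conditioned on the measured transcript, a randomised classical prover, so soundness is preserved. Perfect completeness is preserved because the honest prover is classical.

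Second, I will compress to three messages by the Kitaev--Watrous parallelisation \cite{KW00}:
\begin{itemize}[label={}]
\item Amplify the protocol by sequential repetition to completeness $1$ and soundness $2^{-p}$.
\item Purify the verifier, so that all its intermediate measurements are deferred.
\item The prover sends, in one message, the verifier's private registers at every round $j$.
\item The verifier picks a random round $j$ and checks consistency of the snapshots at rounds $j$ and $j+1$ by running its $j$-th unitary forward or backward with the prover's help.
\end{itemize}
An honest prover passes these checks with probability $1$. For soundness, a cheating prover's acceptance probability is bounded in terms of the original soundness and the number of rounds, via the standard fidelity/hybrid argument. The gap is then restored by parallel repetition, which is valid for three-message quantum interactive proofs \cite{KW00}.

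\textbf{Upper bound, $\QIP\subseteq\PSPACE$.} By the reduction above it suffices to handle $\QIP(3)$; following \cite{JJUW11} one can reduce further to $\QMAM$-type protocols with a single random classical coin sent by the verifier. The maximum acceptance probability of such a protocol is given by a semidefinite program over density operators, of exponential dimension $2^{\poly(n)}$. The plan is to solve this SDP approximately by a matrix multiplicative-weights update method. This method needs only polylogarithmically many iterations in the dimension, and each iteration uses matrix exponentials, products and a spectral projection. These operations lie in $\mathsf{NC}$ as functions of the matrix size, that is, they have polylog-depth circuits over $2^{\poly(n)}$-size matrices. Composing with the polylog-depth iteration gives an $\mathsf{NC}(\poly)$ algorithm, and $\mathsf{NC}(\poly)=\PSPACE$ by Borodin's simulation.

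\textbf{Main obstacle.} I expect the hard step to be the upper bound. The difficulty is showing that the multiplicative-weights iteration converges in depth only polylogarithmic in the dimension while distinguishing completeness from soundness. This requires an appropriately normalised width bound and a rounding step that tolerates the errors of finite-precision, bounded-depth matrix exponentiation. My first attempt would follow \cite{JJUW11}: fix a primal--dual pair for the $\QMAM$ SDP, bound the width by a constant using the strong gap from amplification, and show that each step's approximations contribute at most inverse-polynomial error.
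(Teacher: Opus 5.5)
Your route is the one the paper relies on. The statement is quoted without proof from \cite{KW00,JJUW11}, and your sketch is a faithful outline of those papers: Shamir's protocol, Kitaev--Watrous parallelization and parallel repetition for $\PSPACE\subseteq\QIP(3)$; then the $\cls{QMAM}$ normal form, a matrix multiplicative-weights solver for the resulting SDP, and $\mathsf{NC}(\poly)=\PSPACE$ for $\QIP\subseteq\PSPACE$.

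One step fails as written. Your parallelization begins by amplifying ``to completeness $1$'' by sequential repetition, and your honest-prover analysis (``passes these checks with probability $1$'') depends on this.
\begin{itemize}
\item In the lower bound this is harmless: the protocol of Fact~\ref{fact:IP} is already perfectly complete, and AND-rule repetition keeps it so.
\item In the upper bound you write ``by the reduction above it suffices to handle $\QIP(3)$''. That applies the same reduction to an arbitrary $\QIP$ protocol, whose completeness is only $2/3$. No amount of sequential repetition makes its completeness exactly $1$, so the reduction as you described it does not apply.
\end{itemize}
You need an extra ingredient here, and either of two will do.
\begin{itemize}
\item Cite Kitaev and Watrous's separate transformation that makes an arbitrary quantum interactive proof perfectly complete.
\item Alternatively, show that the parallelization tolerates completeness $1-2^{-p}$, which majority-rule sequential repetition does give. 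The honest prover's final acceptance test then fails with probability $2^{-p}$. By the gentle measurement lemma, the subsequent consistency test fails with probability only $O(2^{-p/2})$. After $k$ AND-rule parallel copies, completeness is still at least $1-k\,O(2^{-p/2})$.
\end{itemize}
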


\subsection{Convex geometry}\label{sec:convex}

Let $V$ be a finite-dimensional real vector space with dual $V^*$. A cone $C\subseteq V$ is a set closed under multiplication by nonnegative scalars; it is convex if $C+C\subseteq C$, pointed if $C\cap(-C)=\{0\}$, and generating if $C-C=V$. A closed convex pointed cone induces the partial order $\omega\le_C\sigma$ iff $\sigma-\omega\in C$. The dual cone is $C^*=\{f\in V^*:f(\omega)\ge0\ \forall\omega\in C\}$. We use the following standard facts \cite{Roc70}; see also Ref.~\cite{Pla23} for their use in the present context.

\begin{fact}\label{fact:convex}
Let $C\subseteq V$ be a closed convex cone.
\begin{enumerate}[label=(\roman*)]
\item $(C^*)^*=C$ (bipolar theorem). Consequently, if $\omega\notin C$, there is $f\in C^*$ with $f(\omega)<0$.
\item If $C$ is pointed and generating, so is $C^*$.
\item If $u\in C^*$ is strictly positive on $C\setminus\{0\}$, then $\Omega=\{\omega\in C:u(\omega)=1\}$ is a compact convex set, every element of $C\setminus\{0\}$ is uniquely of the form $\lambda\omega$ with $\lambda>0$ and $\omega\in\Omega$, and $C$ is generating if and only if $\Omega$ spans $V$.
\item A compact convex set is the convex hull of its extreme points, and a linear functional attains its maximum over it at an extreme point.
\end{enumerate}
\end{fact}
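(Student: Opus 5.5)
The four items of Fact~\ref{fact:convex} are standard finite-dimensional convex analysis, and I would prove them in order, each using the previous ones.

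For (i), the inclusion $C\subseteq (C^*)^*$ is immediate from the definitions, identifying $V^{**}=V$. For the converse, take $\omega\notin C$. Since $C$ is closed and convex, there is a unique nearest point $c\in C$ to $\omega$ in some Euclidean norm. Set $f=\langle c-\omega,\cdot\rangle$. Because $C$ is a cone, the optimality conditions give $\langle \omega-c,c\rangle=0$ and $\langle\omega-c,y\rangle\le 0$ for all $y\in C$. Hence $f\in C^*$, and
\[
f(\omega)=\langle c-\omega,\omega\rangle=-\|\omega-c\|^2<0 .
\]
This proves the separation statement, and therefore $(C^*)^*\subseteq C$.

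For (ii), I would use two dualities:
\[
C \text{ pointed} \iff C^* \text{ generating}, \qquad C \text{ generating} \iff C^* \text{ pointed}.
\]
Assume $C$ is pointed and generating.
\begin{itemize}
\item $C^*$ is pointed. If $f,-f\in C^*$, then $f$ vanishes on $C$, hence on $C-C=V$, so $f=0$.
\item $C^*$ is generating. If $C^*-C^*$ were a proper subspace, it would have a nonzero annihilator $\omega\in V$. Then $f(\omega)=0$ for every $f\in C^*$, so both $\omega$ and $-\omega$ lie in $(C^*)^*$. By (i), $\omega,-\omega\in C$, which contradicts pointedness.
\end{itemize}

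For (iii), first show that $\Omega$ is bounded, by contradiction. Fix a norm and let $S=\{\omega\in C:\|\omega\|=1\}$. This set is compact, and $u$ is continuous and strictly positive on it, so $u\ge m>0$ on $S$. Therefore $u(\omega)\ge m\|\omega\|$ on $C$, which gives $\|\omega\|\le 1/m$ for every $\omega\in\Omega$. The set $\Omega$ is closed and convex, so it is compact. For $\omega\in C\setminus\{0\}$ we have $u(\omega)>0$, and the decomposition $\omega=u(\omega)\cdot(\omega/u(\omega))$ is forced: if $\omega=\lambda\sigma$ with $\sigma\in\Omega$, applying $u$ gives $\lambda=u(\omega)$. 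This shows $C=\cone(\Omega)$. Consequently $C-C=\mathrm{span}(\Omega)$, because $\Omega\subseteq C$ gives one inclusion and $\mathrm{span}\,C=C-C$ for a convex cone gives the other. So $C$ is generating exactly when $\Omega$ spans $V$.

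For (iv), the first claim is the finite-dimensional Krein--Milman (Minkowski) theorem. I would prove it by induction on dimension: a boundary point lies in a supporting face, which is a compact convex set of lower dimension whose extreme points are extreme in the whole set; an interior point lies on a chord between two boundary points. The second claim follows from the first. A linear $f$ attains its maximum $M$ on the compact set, and the set where $f=M$ is a nonempty compact face. Any extreme point of this face is extreme in the whole set.

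The only step needing real care is the Minkowski induction in (iv). If the paper allows it, I would simply cite \cite{Roc70} for (iv) and for the bipolar theorem. Part (iii) is the one most used later, and its proof above is self-contained.
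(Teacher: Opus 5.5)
The paper gives no proof of this fact: it lists (i)--(iv) as standard finite-dimensional convex analysis and cites \cite{Roc70}. Your proposal proves them from scratch, and the arguments are correct.

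\begin{itemize}
\item In (i), the cone property (take $y=0$ and $y=2c$ in the variational inequality $\langle\omega-c,y-c\rangle\le0$) gives exactly $\langle\omega-c,c\rangle=0$. The nearest-point functional then separates $\omega$ from $C$ as you claim.
\item In (ii), you correctly see that pointedness of $C^*$ needs only that $C$ is generating. Closedness and convexity of $C$ enter only through (i), to show $C^*$ is generating.
\item In (iii), the bound $u\ge m\|\cdot\|$ on $C$ is the standard compactness argument.
\item In (iv), the face $\{f=M\}$ correctly reduces the second claim to the first.
\end{itemize}

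Your route buys self-containedness and a record of which hypotheses are used where; the citation buys brevity. If you write it out, fix a few small points:
\begin{itemize}
\item The bipolar theorem needs $C\neq\emptyset$. The paper's definition of a cone does not exclude the empty set, and for it $(C^*)^*=\{0\}$.
\item In (iii), $S$ is empty when $C=\{0\}$. Then $\Omega=\emptyset$ and the claims are trivial, so take $m$ only for $C\neq\{0\}$.
\item Your boundedness argument in (iii) is direct, not by contradiction as you announce it.
\item The Minkowski induction in (iv) must run inside the affine hull of the set, using relative boundary and relative interior. The supporting hyperplane must not contain the whole set; otherwise the face need not have lower dimension.
\end{itemize}
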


For spaces $V_A,V_B$ with closed convex pointed generating cones $C_A,C_B$, the minimal and maximal tensor products are
\begin{align}
C_A\tmin C_B&=\cone\{\alpha\otimes\beta:\alpha\in C_A,\ \beta\in C_B\},\label{eq:minmax}\\
C_A\tmax C_B&=\{\xi\in V_A\otimes V_B:(f\otimes g)(\xi)\ge0\ \forall f\in C_A^*,\ g\in C_B^*\}.\nonumber
\end{align}
Both are closed convex cones (the first is the cone over the convex hull of the compact set of products of base elements), $C_A\tmin C_B\subseteq C_A\tmax C_B$, and $(C_A\tmin C_B)^*=C_A^*\tmax C_B^*$ directly from the definitions; hence $(C_A\tmax C_B)^*=C_A^*\tmin C_B^*$ by the bipolar theorem. The same definitions apply to more than two factors.

\section{General probabilistic theories}\label{sec:gpt}

We follow the conventions of Pl\'avala \cite{Pla23} and of Lee and Barrett \cite{LB15}, restricting attention to finite-dimensional systems. The reader familiar with quantum information may keep the following dictionary in mind: states generalise density matrices, effects generalise POVM elements, the unit effect generalises the trace, channels generalise completely positive trace-preserving maps, and ``positive on all extensions'' generalises complete positivity.

\subsection{Systems, states and measurements}

\begin{definition}[System]\label{def:system}
A system $A$ is a triple $(V_A,\Vp{A},u_A)$, where $V_A$ is a finite-dimensional real vector space, $\Vp{A}\subseteq V_A$ is a closed, convex, pointed and generating cone, and $u_A\in V_A^*$ is strictly positive on $\Vp{A}\setminus\{0\}$. The states of $A$ are the elements of $\Om{A}=\{\omega\in\Vp{A}:u_A(\omega)=1\}$, a compact convex set by Fact~\ref{fact:convex}(iii). The effects of $A$ are the elements of $\Eff{A}=\{f\in V_A^*:0\le f(\omega)\le1\ \forall\omega\in\Om A\}=\{f\in V_A^*:0\le f\le u_A\}$, where the order is that of the dual cone $(\Vp A)^*$; $u_A$ is the unit effect. A measurement with outcomes $\{1,\dots,k\}$ is a family $(f_1,\dots,f_k)$ of effects with $\sum_if_i=u_A$; on the state $\omega$ it produces outcome $i$ with probability $f_i(\omega)$. A state is pure if it is an extreme point of $\Om A$. The trivial system $I$ has $V_I=\R$, $\Vp I=\R_{\ge0}$ and $u_I=\id$.
\end{definition}

The definition encodes two operational ideas. Mixing preparations with probabilities $\lambda$ and $1-\lambda$ gives the state $\lambda\omega+(1-\lambda)\sigma$, so state spaces are convex; outcome probabilities respect mixtures, so effects are affine on $\Om A$, hence linear on $V_A$.

\begin{example}[Classical systems]\label{ex:classical}
The classical $n$-level system $\Delta_n$ has $V=\R^n$, $V^+=\R^n_{\ge0}$ and $u(\omega)=\sum_i\omega_i$. Its states are the probability distributions on $\{1,\dots,n\}$, its pure states the point masses $\delta_1,\dots,\delta_n$, its effects the vectors in $[0,1]^n$, and its $k$-outcome measurements the column-stochastic $k\times n$ matrices. $\Delta_2$ is a classical bit: its states are the vectors $(p,1-p)$, and the effect $(1,0)$ returns the probability $p$.
\end{example}

\begin{example}[Quantum systems]\label{ex:quantum}
The quantum $d$-level system $Q_d$ has $V=\Herm_d(\C)$, the real space of $d\times d$ Hermitian matrices, $V^+=\Pos_d$, the cone of positive semidefinite matrices, and $u=\Tr$. Its states are the density matrices, its pure states the rank-one projections, its effects the operators $0\le E\le\one$, and its measurements the POVMs; the probability of the outcome $E$ on the state $\rho$ is $\Tr(E\rho)$. $Q:=Q_2$ is the qubit. We use the same notation $f(\omega)$ for the pairing in both examples, so that for a qubit $f(\rho)=\Tr(E\rho)$ with $f\leftrightarrow E$.
\end{example}

\begin{example}[The gbit]\label{ex:gbit}
Let $V=\R^3$ with coordinates $(x,y,z)$, $u(x,y,z)=z$, and let $V^+$ be the cone over the square $\Omega=\{(x,y,1):|x|\le1,\ |y|\le1\}$. This system, introduced by Barrett \cite{Bar07}, has the four pure states $(\pm1,\pm1,1)$ and two distinguished two-outcome measurements, $X=(\tfrac12(u+x),\tfrac12(u-x))$ and $Z=(\tfrac12(u+y),\tfrac12(u-y))$, where $x$ and $y$ denote the coordinate functionals. On a pure state both measurements have deterministic outcomes, and every pair of outcomes occurs for exactly one pure state. The dual cone is $\{(g_x,g_y,g_z):g_z\ge|g_x|+|g_y|\}$; its extreme rays are generated by the four outcome effects of $X$ and $Z$. Consequently $X$ and $Z$ have no joint measurement: in a four-outcome measurement $(h_{ab})_{a,b=\pm}$ with $X$ and $Z$ as marginals, each $h_{ab}$ lies below $\tfrac12(u+ax)$ and below $\tfrac12(u+by)$ in the order of the dual cone, hence is a nonnegative multiple of each of these two linearly independent extreme elements, hence zero; but then $h_{++}+h_{+-}=\tfrac12(u+x)\ne0$ is contradicted. The gbit is the elementary system of boxworld (Example~\ref{ex:boxworld}).
\end{example}

\begin{example}[The triangular system $W_v$]\label{ex:triangle}
Fix a real number $v\in(0,1)$. The triangular system $W_v$ has $V=\R^3$, $u(x,y,z)=z$, and $V^+$ the cone generated by
\[
e_0=(0,0,1),\qquad e_1=(1,0,1),\qquad e_2=(1,v,1).
\]
Its state space is the triangle $\conv\{e_0,e_1,e_2\}$ (Figure~\ref{fig:triangle}). A triangle is a simplex, so as a single system $W_v$ is order-isomorphic to the classical three-level system: the linear map $L_v:\R^3\to\R^3$ with $L_v(e_i)=\delta_{i+1}$ maps $\Vp{W_v}$ onto $\R^3_{\ge0}$ and satisfies $u_{\Delta_3}\circ L_v=u$. What distinguishes $W_v$ from $\Delta_3$ is its coordinate representation. The functional $f_y(x,y,z)=y$ takes the values $0,0,v$ on $e_0,e_1,e_2$, so $0\le f_y\le u$ on $V^+$, and $(f_y,u-f_y)$ is a two-outcome measurement whose coordinate vectors are $(0,1,0)$ and $(0,-1,1)$; its maximum $\max_{\Om{W_v}}f_y=v$ is attained at $e_2$. In the canonical coordinates of $\Delta_3$ the same effect reads $f_y\circ L_v^{-1}=(0,0,v)$. The three-outcome measurement $(u-x,\ x-y/v,\ y/v)$ reads the vertex label perfectly, since its $i$-th effect takes the value $1$ on $e_i$ and $0$ on the other vertices; two of its coordinate vectors involve $1/v$. Thus when $v$ is not computable, the measurement $(f_y,u-f_y)$ has computable coordinates although the isomorphism $L_v$ and the label-reading measurement do not.
\end{example}

\begin{figure}[t]
\centering
\begin{tikzpicture}[x=3.2cm,y=3.2cm]
  \draw[-{Stealth[length=2mm]}] (-0.15,0)--(1.35,0) node[right] {$x$};
  \draw[-{Stealth[length=2mm]}] (0,-0.15)--(0,0.95) node[above] {$y$};
  \fill[gray!15] (0,0)--(1,0)--(1,0.6)--cycle;
  \draw[thick] (0,0)--(1,0)--(1,0.6)--cycle;
  \fill (0,0) circle[radius=1.4pt] node[below left] {$e_0=(0,0,1)$};
  \fill (1,0) circle[radius=1.4pt] node[below right] {$e_1=(1,0,1)$};
  \fill (1,0.6) circle[radius=1.4pt] node[above] {$e_2=(1,v,1)$};
  \draw[densely dashed] (0,0.6)--(1,0.6);
  \node[left] at (0,0.6) {$v$};
  \node[align=left,anchor=west] at (1.7,0.3) {$f_y(x,y,1)=y$\\[2pt] $\max_{\Om{W_v}} f_y=v$, attained at $e_2$\\[2pt] $(f_y,u-f_y)=\big((0,1,0),(0,-1,1)\big)$};
\end{tikzpicture}
\caption{The state space of the triangular system $W_v$ in the plane $z=1$. The verifier's measurement $(f_y,u-f_y)$ has integer coordinates for every $v$, but its maximum over the states is the real number $v$.}
\label{fig:triangle}
\end{figure}

\subsection{Composite systems}

\begin{definition}[Composite]\label{def:composite}
A composite of systems $A$ and $B$ is a system $AB$ with
\[
V_{AB}=V_A\otimes V_B,\qquad u_{AB}=u_A\otimes u_B,\qquad \Vp A\tmin\Vp B\subseteq\Vp{AB}\subseteq\Vp A\tmax\Vp B,
\]
where $\Vp{AB}$ is a closed convex cone. For $\alpha\in\Om A$ and $\beta\in\Om B$ the product state $\alpha\otimes\beta$ is a state of $AB$, and for $f\in\Eff A$ and $g\in\Eff B$ the product effect $f\otimes g$ is an effect of $AB$.
\end{definition}

The equation $V_{AB}=V_A\otimes V_B$ is \emph{local tomography} \cite{Ara80,Har01,Bar07}: a state of $AB$ is determined by the probabilities of all product effects, because the product effects span $V_{AB}^*$. Within the two inclusions a theory is free to choose $\Vp{AB}$, and this choice determines which correlations, in particular which entangled states, the theory admits.

\begin{fact}[\cite{Bar07,Pla23}]\label{fact:composite-is-system}
With $V_{AB},u_{AB},\Vp{AB}$ as in Definition~\ref{def:composite}, $(V_{AB},\Vp{AB},u_{AB})$ is a system.
\end{fact}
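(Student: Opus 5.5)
We check the four clauses of Definition~\ref{def:system} for the triple $(V_{AB},\Vp{AB},u_{AB})$ in turn. Finite dimensionality is immediate, since $V_A\otimes V_B$ is finite-dimensional. $\Vp{AB}$ is closed and convex by hypothesis in Definition~\ref{def:composite}. It remains to show that $\Vp{AB}$ is pointed and generating, and that $u_{AB}$ is strictly positive on $\Vp{AB}\setminus\{0\}$. Each of these will be inherited from one of the two inclusions $\Vp A\tmin\Vp B\subseteq\Vp{AB}\subseteq\Vp A\tmax\Vp B$.

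\emph{Generating.} We use the lower inclusion. Since $\Vp A$ and $\Vp B$ are generating, every $\alpha\in V_A$ and $\beta\in V_B$ can be written as $\alpha=\alpha_+-\alpha_-$ and $\beta=\beta_+-\beta_-$ with positive parts. Expanding $\alpha\otimes\beta$ expresses it as a difference of two elements of $\Vp A\tmin\Vp B\subseteq\Vp{AB}$. Product vectors span $V_A\otimes V_B$, so $\Vp{AB}-\Vp{AB}=V_{AB}$.

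\emph{Strict positivity of $u_{AB}$.} We use the upper inclusion. Let $\xi\in\Vp{AB}\subseteq\Vp A\tmax\Vp B$ with $u_{AB}(\xi)=0$; we must show $\xi=0$. By Fact~\ref{fact:convex}(ii), $(\Vp A)^*$ is generating, so every $f\in V_A^*$ is a difference of positive functionals. The key claim is that $(f\otimes g)(\xi)=0$ for all positive $f$ and $g$; by bilinearity this extends to all $f$ and $g$, and since product functionals span $V_{AB}^*$, it gives $\xi=0$.

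To prove the claim, we use that $u_A$ is strictly positive on the closed cone $\Vp A$, so $u_A$ lies in the interior of $(\Vp A)^*$. Hence for each $f\in(\Vp A)^*$ there is $c>0$ with $cu_A-f\in(\Vp A)^*$; the same holds for $g$ and $u_B$. The tensor $\xi$ is positive on products of positive functionals, which gives
\[
0\le(f\otimes g)(\xi)\le c\,(u_A\otimes g)(\xi)\le cc'\,(u_A\otimes u_B)(\xi)=0.
\]
This proves the claim.

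The interiority step is the only delicate point. If one prefers to avoid the interior argument, an alternative is compactness of $\Om A$: because $f$ is continuous on the compact set $\Om A$ and $u_A\equiv 1$ there, the scalar $c=\max_{\Om A}f$ works. Cone elements are nonnegative multiples of states by Fact~\ref{fact:convex}(iii), so $f\le cu_A$ on $\Vp A$, which is exactly $cu_A-f\in(\Vp A)^*$.

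\emph{Pointed.} This follows from strict positivity. If $\xi$ and $-\xi$ both lie in $\Vp{AB}$, then $u_{AB}(\xi)\ge0$ and $u_{AB}(-\xi)\ge0$, so $u_{AB}(\xi)=0$, and strict positivity forces $\xi=0$.

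Finally, the assertions about states and effects follow. A product state $\alpha\otimes\beta$ lies in $\Vp A\tmin\Vp B\subseteq\Vp{AB}$ and has $u_{AB}(\alpha\otimes\beta)=u_A(\alpha)\,u_B(\beta)=1$, so it is a state of $AB$. For effects, $f\otimes g\ge0$ on $\Vp{AB}\subseteq\Vp A\tmax\Vp B$ directly from the definition of $\tmax$. For the upper bound, write
\[
u_{AB}-f\otimes g=(u_A-f)\otimes u_B+f\otimes(u_B-g),
\]
which is a sum of products of positive functionals and hence is nonnegative on $\Vp A\tmax\Vp B$. Thus $0\le f\otimes g\le u_{AB}$, and $f\otimes g$ is an effect of $AB$.
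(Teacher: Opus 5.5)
Your proof is correct and follows essentially the paper's route. Generating comes from the lower inclusion. Strict positivity comes from the upper inclusion, by dominating each positive functional by a multiple of the unit (obtained, as in the paper, from compactness of $\Omega_A$) and then using that product functionals span $V_{AB}^*$; the paper writes this as a four-term decomposition for effects where you use a two-step inequality chain. The only other differences are that you deduce pointedness from strict positivity of $u_{AB}$ rather than proving it directly from the upper inclusion, and that your additional check of product states and product effects is correct but goes beyond what the paper verifies here.
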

\begin{proof}
$\Vp{AB}$ is generating because it contains all products $\alpha\otimes\beta$ of elements of the generating cones $\Vp A$ and $\Vp B$. It is pointed because $\xi,-\xi\in\Vp A\tmax\Vp B$ forces $(f\otimes g)(\xi)=0$ for all $f\in(\Vp A)^*$ and $g\in(\Vp B)^*$, and such products span $V_{AB}^*$ since the dual cones are generating (Fact~\ref{fact:convex}(ii)). For strict positivity of $u_{AB}$, let $\xi\in\Vp A\tmax\Vp B$ with $(u_A\otimes u_B)(\xi)=0$. For effects $f\in\Eff A$ and $g\in\Eff B$ the four products $f\otimes g$, $(u_A-f)\otimes g$, $f\otimes(u_B-g)$ and $(u_A-f)\otimes(u_B-g)$ are nonnegative on $\xi$ and sum to $(u_A\otimes u_B)(\xi)=0$, so $(f\otimes g)(\xi)=0$ for all effects $f,g$. Every element of $(\Vp A)^*$ is a nonnegative multiple of an effect (scale it by the reciprocal of its maximum over the compact set $\Om A$), so effects span $V_A^*$, and likewise for $B$; hence $\xi=0$.
\end{proof}

\begin{fact}[Marginals and conditional states \cite{Bar07,Pla23}]\label{fact:marginal}
Let $AB$ be a composite, $\omega\in\Vp{AB}$ and $g\in(\Vp B)^*$. Then $(\id_A\otimes g)(\omega)\in\Vp A$. For $\omega\in\Om{AB}$ and $g\in\Eff B$ we can write $(\id_A\otimes g)(\omega)=p\,\omega'$ with $p=(u_A\otimes g)(\omega)\in[0,1]$ and, if $p>0$, a unique $\omega'\in\Om A$: the state of $A$ conditioned on the outcome $g$. For $g=u_B$ this is the marginal state $\omega_A$.
\end{fact}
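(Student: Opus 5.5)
The plan has three parts: positivity, normalisation, and then existence and uniqueness of the conditional state.

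\emph{Positivity.} Fix $\omega\in\Vp{AB}$ and $g\in(\Vp B)^*$. I will show $(\id_A\otimes g)(\omega)\in\Vp A$ through the bipolar theorem, Fact~\ref{fact:convex}(i). Since $\Vp A$ is a closed convex cone, it is enough to show $f\bigl((\id_A\otimes g)(\omega)\bigr)\ge0$ for every $f\in(\Vp A)^*$. By linearity this number equals $(f\otimes g)(\omega)$. Definition~\ref{def:composite} gives $\omega\in\Vp{AB}\subseteq\Vp A\tmax\Vp B$, and the definition of the maximal tensor product in \eqref{eq:minmax} makes $(f\otimes g)(\omega)$ nonnegative for all $f\in(\Vp A)^*$ and $g\in(\Vp B)^*$. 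So this step is immediate.

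\emph{Normalisation.} Now take $\omega\in\Om{AB}$ and $g\in\Eff B$, and set $p=u_A\bigl((\id_A\otimes g)(\omega)\bigr)=(u_A\otimes g)(\omega)$. Nonnegativity of $p$ follows from the positivity step, because $u_A\in(\Vp A)^*$. For the upper bound, note that $u_B-g\in(\Vp B)^*$. Applying the positivity step to $u_B-g$ gives $(u_A\otimes(u_B-g))(\omega)\ge0$, hence $p\le(u_A\otimes u_B)(\omega)=u_{AB}(\omega)=1$.

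\emph{Conditional state.} If $p>0$, I define $\omega'=p^{-1}(\id_A\otimes g)(\omega)$. It lies in $\Vp A$ because the cone is closed under positive scaling, and $u_A(\omega')=1$, so $\omega'\in\Om A$. Uniqueness is forced: any $\omega''\in\Om A$ with $p\,\omega''=(\id_A\otimes g)(\omega)$ must equal $\omega'$, since $p\neq0$. Equivalently, Fact~\ref{fact:convex}(iii) says nonzero cone elements are uniquely positive multiples of states. Taking $g=u_B$ gives $p=1$, and the resulting state is the marginal $\omega_A$.

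None of these steps is a real obstacle. The only point to check is that the partial map $\id_A\otimes g:V_A\otimes V_B\to V_A$ is well defined by linear extension from simple tensors. It is, and it satisfies $f\circ(\id_A\otimes g)=f\otimes g$.
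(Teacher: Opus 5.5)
Your proof is correct and is the paper's argument (the bipolar theorem gives positivity, then normalisation and Fact~\ref{fact:convex}(iii)); you justify $p\le1$ more explicitly than the paper, via $u_B-g\in(\Vp B)^*$. The one case you leave implicit is $p=0$: there strict positivity of $u_A$ forces $(\id_A\otimes g)(\omega)=0$, so the decomposition holds with any $\omega'\in\Om A$.
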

\begin{proof}
For $f\in(\Vp A)^*$, $f\big((\id_A\otimes g)(\omega)\big)=(f\otimes g)(\omega)\ge0$ because $\omega\in\Vp A\tmax\Vp B$; by the bipolar theorem $(\id_A\otimes g)(\omega)\in\Vp A$. The rest is Fact~\ref{fact:convex}(iii) together with $(u_A\otimes g)(\omega)\le(u_A\otimes u_B)(\omega)=1$.
\end{proof}

\begin{example}[Minimal and maximal tensor products]\label{ex:minmax}
If $\Vp{AB}=\Vp A\tmin\Vp B$, every state of $AB$ is a mixture of product states: the composite admits no entanglement. If $\Vp{AB}=\Vp A\tmax\Vp B$, the states of $AB$ are all normalised vectors that are nonnegative on all product effects, and the effect cone is $(\Vp A)^*\tmin(\Vp B)^*$: every effect is a nonnegative combination of product effects. For two gbits the maximal tensor product is the set of all no-signalling boxes with binary inputs and outputs, i.e., the conditional distributions $p(a,b\,|\,x,y)$ of the outcomes of the measurements $X$ or $Z$ on each gbit whose marginals for one gbit do not depend on the measurement chosen for the other \cite{Bar07,BLMPPR05}, including the Popescu--Rohrlich box $p(a,b\,|\,x,y)=\tfrac12[a\oplus b=xy]$ \cite{PR94}. For two qubits the quantum cone $\Pos_4$ lies strictly between the two: the maximally entangled state $\Phi^+=\tfrac12\sum_{i,j}\ket{ii}\bra{jj}$ is not a mixture of product states, and its partial transpose, which has the eigenvalue $-\tfrac12$, is nonnegative on all product effects and hence lies in $\Pos_2\tmax\Pos_2$ but not in $\Pos_4$. The two tensor products coincide if and only if one of the cones is a simplex cone \cite{ALPP21}.
\end{example}

\begin{fact}[Composites with a classical factor are unique \cite{Bar07,Pla23}]\label{fact:classical-factor}
Let $\Delta_n$ be classical and $B$ any system. Identifying $\R^n\otimes V_B$ with $V_B^{\oplus n}$ via $\delta_i\otimes\beta\mapsto(0,\dots,\beta,\dots,0)$,
\[
\R^n_{\ge0}\tmin\Vp B=\R^n_{\ge0}\tmax\Vp B=(\Vp B)^{\oplus n}=\Big\{\sum_i\delta_i\otimes\beta_i:\beta_i\in\Vp B\Big\}.
\]
Consequently every composite of $\Delta_n$ with $B$ has the cone $(\Vp B)^{\oplus n}$, and its states are the vectors $\sum_i\delta_i\otimes\beta_i$ with $\beta_i\in\Vp B$ and $\sum_iu_B(\beta_i)=1$: a classical value $i$ occurring with probability $p_i=u_B(\beta_i)$, together with the conditional state $\beta_i/p_i$ of $B$.
\end{fact}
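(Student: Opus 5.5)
The plan is to prove the chain of inclusions
\[
(\Vp B)^{\oplus n}\subseteq\R^n_{\ge0}\tmin\Vp B\subseteq\R^n_{\ge0}\tmax\Vp B\subseteq(\Vp B)^{\oplus n}.
\]
The middle inclusion is the general fact $C_A\tmin C_B\subseteq C_A\tmax C_B$ recalled after~\eqref{eq:minmax}, so only the two outer inclusions need an argument. Throughout, I work in the identification $\R^n\otimes V_B\cong V_B^{\oplus n}$. In it, every $\xi$ can be written uniquely as $\xi=\sum_i\delta_i\otimes\beta_i$ with $\beta_i\in V_B$, because the $\delta_i$ form a basis of $\R^n$.

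For the first inclusion, take $\sum_i\delta_i\otimes\beta_i$ with each $\beta_i\in\Vp B$. Each term $\delta_i\otimes\beta_i$ is a product of an element $\delta_i\in\R^n_{\ge0}$ and an element $\beta_i\in\Vp B$. It therefore lies in the minimal tensor cone, and that cone is convex, so it contains the sum.

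For the last inclusion, let $\xi=\sum_i\delta_i\otimes\beta_i\in\R^n_{\ge0}\tmax\Vp B$. The dual cone of $\R^n_{\ge0}$ is again the nonnegative orthant, so the coordinate functional $\delta_j^*$ belongs to it. For every $g\in(\Vp B)^*$ the definition of $\tmax$ then gives
\[
0\le(\delta_j^*\otimes g)(\xi)=g(\beta_j).
\]
By the bipolar theorem, Fact~\ref{fact:convex}(i), this forces $\beta_j\in\Vp B$, which closes the chain. Equivalently, one can apply Fact~\ref{fact:marginal} with the classical effect $\delta_j^*$ in the role of $g$.

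For the consequences, Definition~\ref{def:composite} sandwiches any composite cone $\Vp{\Delta_nB}$ between the min and max cones, so it equals $(\Vp B)^{\oplus n}$. For the description of the states, note that $u_{\Delta_n}\otimes u_B$ acts as $\sum_i u_B(\beta_i)$. Setting $p_i=u_B(\beta_i)\ge 0$ and, whenever $p_i>0$, taking the conditional state $\beta_i/p_i\in\Om B$ gives the stated form, using Fact~\ref{fact:convex}(iii). When $p_i=0$, strict positivity of $u_B$ gives $\beta_i=0$, so that term simply vanishes.

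None of these steps is a real obstacle. The only point needing care is the bookkeeping: checking that the coordinate functionals lie in the dual orthant, and that the identification of $\R^n\otimes V_B$ with $V_B^{\oplus n}$ is compatible with the pairings used in the definition of $\tmax$.
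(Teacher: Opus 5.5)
Your proof is correct and is essentially the paper's argument. The key step in both is that the coordinate functionals lie in the dual orthant, so $g(\beta_j)\ge0$ for all $g\in(\Vp B)^*$, and the bipolar theorem then gives $\beta_j\in\Vp B$. Your cyclic chain also makes the paper's separate check $\R^n_{\ge0}\tmin\Vp B\subseteq(\Vp B)^{\oplus n}$ unnecessary, since it follows from $\tmin\subseteq\tmax$.
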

\begin{proof}
Every $\alpha\in\R^n_{\ge0}$ is $\sum_i\alpha_i\delta_i$, so $\alpha\otimes\beta=\sum_i\delta_i\otimes(\alpha_i\beta)\in(\Vp B)^{\oplus n}$; conversely $\delta_i\otimes\beta_i$ is a product of cone elements. For the maximal tensor product write $\xi=\sum_i\delta_i\otimes\beta_i$ uniquely; the coordinate functionals $b_i$ lie in $(\R^n_{\ge0})^*$, and $(b_i\otimes g)(\xi)=g(\beta_i)\ge0$ for all $g\in(\Vp B)^*$ gives $\beta_i\in\Vp B$ by the bipolar theorem.
\end{proof}

In quantum theory this fact says that the joint states of a classical register and a quantum register are exactly the classical--quantum states $\sum_i\ketbra{i}\otimes\beta_i$. In particular, for classical $A$ and $B$ the composite is $\Delta_{nm}$.

\subsection{Channels and instruments}

\begin{definition}[Channel, instrument]\label{def:channel}
Fix a collection $\mathcal S$ of systems closed under composition. A linear map $\Phi:V_A\to V_B$ is \emph{positive on all extensions} if $(\Phi\otimes\id_{V_C})(\Vp{AC})\subseteq\Vp{BC}$ for every $C\in\mathcal S$, including $C=I$. A \emph{channel} from $A$ to $B$ is a map positive on all extensions with $u_B\circ\Phi=u_A$. An \emph{instrument} from $A$ to $B$ with outcomes $\{1,\dots,k\}$ is a family $(\Phi_1,\dots,\Phi_k)$ of maps positive on all extensions whose sum is a channel; on $\omega\in\Om A$ it produces the outcome $i$ with probability $u_B(\Phi_i(\omega))$ and leaves $B$ in the state $\Phi_i(\omega)/u_B(\Phi_i(\omega))$. Measurements are instruments with trivial output system; the preparation of $\omega\in\Om A$ is the channel $\R\to V_A$, $1\mapsto\omega$; the discard channel of $A$ is $u_A:V_A\to\R$. Channels compose sequentially and in parallel.
\end{definition}

In quantum theory the maps positive on all extensions are exactly the completely positive maps (Example~\ref{ex:transpose}), and the channels are the completely positive trace-preserving maps. The following consequence of unit preservation is the reason that interactive protocols are well defined: what happens later does not change the statistics of what was recorded earlier.

\begin{fact}[No signalling from the future \cite{CDP10}]\label{fact:causal}
Let $(\Phi_i)_i$ be an instrument from $A$ to $B$, $\Psi$ a channel from $B$ to $C$, and $\omega\in\Om A$. The probability of the outcome $i$ is the same whether or not $\Psi$ is applied afterwards. More generally, in a sequence of instruments the joint distribution of the first $j$ outcomes does not depend on the instruments applied later.
\end{fact}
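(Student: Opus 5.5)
The claim reduces to unit preservation of channels together with the fact that an instrument's outcome probabilities are read off by the unit effect. I would do the single-step case first and then obtain the general statement by induction on the number of instruments.

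\emph{Single step.} Let $\Phi=\sum_i\Phi_i$ be the channel underlying the instrument. Without $\Psi$, the probability of outcome $i$ is $u_B(\Phi_i(\omega))$ by Definition~\ref{def:channel}. If $\Psi$ is applied afterwards and the final system is discarded, the probability is $u_C(\Psi(\Phi_i(\omega)))$. Because $\Psi$ is a channel, $u_C\circ\Psi=u_B$ holds as linear functionals on $V_B$. Applying this to the vector $\Phi_i(\omega)$ gives equality of the two probabilities. No normalisation of $\Phi_i(\omega)$ is needed, since $u_C\circ\Psi=u_B$ is linear on all of $V_B$; this is where linearity of channels matters.

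\emph{General case.} Consider a sequence of instruments $(\Phi^{(1)}_{i_1}),\dots,(\Phi^{(m)}_{i_m})$, where the $t$-th goes from $A_{t-1}$ to $A_t$. Allowing the later choice to depend on earlier outcomes (classical control) does not change the argument. The joint probability of outcomes $i_1,\dots,i_m$ is
\[
u_{A_m}\circ\Phi^{(m)}_{i_m}\circ\cdots\circ\Phi^{(1)}_{i_1}(\omega).
\]
For fixed $i_1,\dots,i_{m-1}$, summing over $i_m$ replaces $\Phi^{(m)}_{i_m}$ by the channel $\Phi^{(m)}$, and $u_{A_m}\circ\Phi^{(m)}=u_{A_{m-1}}$. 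If the $m$-th instrument depends on earlier outcomes, we use the instrument selected by $i_1,\dots,i_{m-1}$, which is still a channel once summed. The marginal on the first $m-1$ outcomes is therefore the expression for the first $m-1$ instruments alone. Induction on $m$, or equivalently summing out the outcomes from the last one backwards to index $j+1$, shows that the marginal on the first $j$ outcomes equals
\[
u_{A_j}\circ\Phi^{(j)}_{i_j}\circ\cdots\circ\Phi^{(1)}_{i_1}(\omega),
\]
which involves only the first $j$ instruments.

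\emph{Parallel systems.} If the instruments act on part of a larger composite, e.g.\ $\Phi\otimes\id_C$, two further points are needed. First, $u_{B}\otimes u_C$ composed with $\Phi\otimes\id_C$ equals $(u_B\circ\Phi)\otimes u_C=u_A\otimes u_C$, so $\Phi\otimes\id_C$ is again unit preserving. Second, positivity on extensions ensures every intermediate vector lies in the cone, so the probabilities are well defined. With these, the same computation applies verbatim.

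The only subtle point is this bookkeeping for instruments acting on subsystems and with classically controlled choices. Both are handled by the observation that parallel composition with an identity and classical control preserve unit preservation.
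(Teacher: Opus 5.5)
Your proof is correct and follows the paper's argument: the single-step case is the identity $u_C\circ\Psi\circ\Phi_i=u_B\circ\Phi_i$, and the general case sums the later instruments over their outcomes to obtain channels and iterates. Your extra remarks on classically controlled instruments and on instruments acting as $\Phi\otimes\id_C$ are valid bookkeeping that the paper leaves implicit. In particular, unit preservation of $\Phi\otimes\id_C$ follows from $u_{AC}=u_A\otimes u_C$ under (T1).
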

\begin{proof}
$u_C\circ\Psi\circ\Phi_i=u_B\circ\Phi_i$. For a sequence of instruments, sum the later instruments over their outcomes, which gives channels, and iterate.
\end{proof}

\begin{fact}[Channels with classical input and output registers \cite{Bar07,Pla23}]\label{fact:classical-io}
Let $\Phi$ be a channel from $\Delta_n\otimes P$ to $\Delta_k\otimes P'$. There are maps $\Phi_{ij}:V_P\to V_{P'}$ ($1\le i\le n$, $1\le j\le k$), positive on all extensions, with $\sum_ju_{P'}\circ\Phi_{ij}=u_P$ for each $i$, such that $\Phi(\delta_i\otimes\tau)=\sum_j\delta_j\otimes\Phi_{ij}(\tau)$. If $k=1$, then $\Phi$ is the classically controlled channel that applies $\Phi_i:=\Phi_{i1}$ on the input $i$.
\end{fact}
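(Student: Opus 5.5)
Linearity does most of the work here; I will use Fact~\ref{fact:classical-factor} (composites with a classical factor) to read off the block structure of $\Phi$. Identify $V_{\Delta_n}\otimes V_P$ with $V_P^{\oplus n}$ and $V_{\Delta_k}\otimes V_{P'}$ with $V_{P'}^{\oplus k}$. For each $i$, the map $\tau\mapsto\Phi(\delta_i\otimes\tau)$ is linear from $V_P$ into $V_{P'}^{\oplus k}$, so it splits uniquely as $\sum_j\delta_j\otimes\Phi_{ij}(\tau)$ with linear $\Phi_{ij}:V_P\to V_{P'}$. Explicitly, $\Phi_{ij}=(b_j\otimes\id_{P'})\circ\Phi\circ(\delta_i\otimes\cdot)$, where $b_j$ is the $j$-th coordinate functional on $\R^k$. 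Two properties remain to be checked: normalisation and positivity on all extensions.

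\emph{Normalisation.} Apply $u_{\Delta_k}\otimes u_{P'}=\sum_j b_j\otimes u_{P'}$ to $\Phi(\delta_i\otimes\tau)$. This gives $\sum_j u_{P'}(\Phi_{ij}(\tau))$, which by unit preservation of $\Phi$ equals $(u_{\Delta_n}\otimes u_P)(\delta_i\otimes\tau)=u_P(\tau)$.

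\emph{Positivity on all extensions.} This is the main step. Fix $C\in\mathcal S$ and $\xi\in\Vp{PC}$. We need $(\Phi_{ij}\otimes\id_C)(\xi)\in\Vp{P'C}$.
\begin{enumerate}
\item Embed $\xi$ as $\delta_i\otimes\xi$. By Fact~\ref{fact:classical-factor} applied to the composite $\Delta_n\otimes(PC)$, this vector lies in the cone of $\Delta_n PC$, provided composites are associative in $\mathcal S$, which I take as part of closure under composition.
\item Since $\Phi$ is positive on all extensions, $(\Phi\otimes\id_C)(\delta_i\otimes\xi)\in\Vp{\Delta_kP'C}$.
\item By Fact~\ref{fact:classical-factor} again, this cone is $(\Vp{P'C})^{\oplus k}$, and its $j$-th block is exactly $(\Phi_{ij}\otimes\id_C)(\xi)$. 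Hence that block lies in $\Vp{P'C}$.
\end{enumerate}
The care needed here is the bookkeeping of tensor factor orderings, which is harmless because only the classical factor is being split off.

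\emph{The case $k=1$.} Here $\Phi(\delta_i\otimes\tau)=\Phi_{i1}(\tau)$ with $u_{P'}\circ\Phi_{i1}=u_P$, so each $\Phi_i$ is a channel. Since the $\delta_i\otimes\tau$ span the input space, linearity gives $\Phi\big(\sum_i\delta_i\otimes\beta_i\big)=\sum_i\Phi_i(\beta_i)$. This is precisely the classically controlled channel that applies $\Phi_i$ on input $i$.
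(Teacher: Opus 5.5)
Your proof is correct and follows essentially the same route as the paper: you define $\Phi_{ij}$ as the $j$-th block of $\Phi(\delta_i\otimes\tau)$, obtain normalisation from unit preservation, and obtain positivity on every extension by applying Fact~\ref{fact:classical-factor} to $\Delta_n(PC)$ and to $\Delta_k(P'C)$. The paper compresses this last step into ``the same argument with an extension $C$''; you spell it out, including the associativity of composites and the identification of the $j$-th block as $(\Phi_{ij}\otimes\id_C)(\xi)$.
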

\begin{proof}
Define $\Phi_{ij}(\tau)$ as the $j$-th component of $\Phi(\delta_i\otimes\tau)$ in $V_{P'}^{\oplus k}$. By Fact~\ref{fact:classical-factor}, $\Phi$ maps $(\Vp P)^{\oplus n}$ into $(\Vp{P'})^{\oplus k}$, so $\Phi_{ij}(\Vp P)\subseteq\Vp{P'}$; the same argument with an extension $C$, whose composite with a classical system is again a direct sum, gives positivity on all extensions. Unit preservation evaluated on $\delta_i\otimes\tau$ gives $\sum_ju_{P'}(\Phi_{ij}(\tau))=u_P(\tau)$.
\end{proof}

\begin{fact}[Positivity for minimal and maximal tensor products \cite{Pla23}]\label{fact:minmax-pos}
Let $\mathcal S$ be a collection of systems closed under composition in which every composite is a minimal tensor product, or in which every composite is a maximal tensor product. Then every linear map $\Phi:V_A\to V_B$ with $\Phi(\Vp A)\subseteq\Vp B$ is positive on all extensions.
\end{fact}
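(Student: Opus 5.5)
We treat the two cases separately. In each case we show that the cone condition $(\Phi\otimes\id_{V_C})(\Vp{AC})\subseteq\Vp{BC}$ follows from $\Phi(\Vp A)\subseteq\Vp B$. Fix $C\in\mathcal S$. The case $C=I$ is the hypothesis itself, since $\Vp{AI}=\Vp A$ under the identification $V_A\otimes\R=V_A$.

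\emph{Minimal tensor products.} Here $\Vp{AC}=\Vp A\tmin\Vp C$ is the cone generated by the products $\alpha\otimes\gamma$ with $\alpha\in\Vp A$ and $\gamma\in\Vp C$. By linearity it suffices to check these generators, and
\[
(\Phi\otimes\id)(\alpha\otimes\gamma)=\Phi(\alpha)\otimes\gamma .
\]
Since $\Phi(\alpha)\in\Vp B$, this vector lies in $\Vp B\tmin\Vp C=\Vp{BC}$. Because $\Vp{BC}$ is a convex cone, nonnegative combinations of such vectors stay in it, which finishes this case.

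\emph{Maximal tensor products.} Here $\Vp{BC}=\Vp B\tmax\Vp C$. We must show that $(g\otimes h)\big((\Phi\otimes\id)(\xi)\big)\ge0$ for all $\xi\in\Vp A\tmax\Vp C$, $g\in(\Vp B)^*$ and $h\in(\Vp C)^*$. The left side equals $\big((g\circ\Phi)\otimes h\big)(\xi)$. Positivity of $\Phi$ gives $g\circ\Phi\in(\Vp A)^*$, because $(g\circ\Phi)(\alpha)=g(\Phi(\alpha))\ge0$ for every $\alpha\in\Vp A$. Hence $(g\circ\Phi)\otimes h$ is a product of dual-cone elements, and it is nonnegative on $\xi$ by the definition of $\tmax$ in \eqref{eq:minmax}.

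The only delicate point is bookkeeping. Because every composite in $\mathcal S$ is of the same kind, $AC$ and $BC$ are both minimal, or both maximal, tensor products, so the same definition applies on each side. Composites of more than two factors are handled the same way, since the definitions extend verbatim to several factors. I expect no real obstacle. The argument is essentially the duality $(C_A\tmin C_B)^*=C_A^*\tmax C_B^*$ applied factorwise, and nothing beyond the definitions in Section~\ref{sec:convex} is needed.
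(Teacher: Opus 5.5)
Your proof is correct and is essentially the paper's own argument. In the minimal case you check the generators $\alpha\otimes\gamma$; in the maximal case you pull back $g\otimes h$ to $(g\circ\Phi)\otimes h$, using $g\circ\Phi\in(\Vp A)^*$.
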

\begin{proof}
Minimal case: $\Vp{AC}$ is generated by the products $\alpha\otimes\gamma$, and $(\Phi\otimes\id)(\alpha\otimes\gamma)=\Phi(\alpha)\otimes\gamma\in\Vp B\tmin\Vp C$. Maximal case: for $\xi\in\Vp A\tmax\Vp C$, $f\in(\Vp B)^*$ and $g\in(\Vp C)^*$, $(f\otimes g)\big((\Phi\otimes\id)\xi\big)=\big((f\circ\Phi)\otimes g\big)(\xi)\ge0$ because $f\circ\Phi\in(\Vp A)^*$.
\end{proof}

\begin{example}[Positivity depends on the composites]\label{ex:transpose}
In classical theory every stochastic matrix is a channel, and every positive unit-preserving map is stochastic. In quantum theory a linear map $\Phi$ on $\Herm_d(\C)$ is positive on all extensions if and only if $\Phi\otimes\id_{\Herm_{d'}}$ maps $\Pos_{dd'}$ into $\Pos_{dd'}$ for all $d'$, i.e., if and only if $\Phi$ is completely positive; and every completely positive map is positive on all extensions, since its tensor product with an identity is again completely positive. The transposition map $\rho\mapsto\rho^{\mathsf T}$ on a qubit and the map $R(\rho)=\Tr(\rho)\one-\rho$ are positive and trace preserving, but not completely positive: $(\id\otimes{\mathsf T})(\Phi^+)$ and $(\id\otimes R)(\Phi^+)=\tfrac12\one\otimes\one-\Phi^+$ both have the eigenvalue $-\tfrac12$. So neither is a channel of quantum theory. In a theory whose composites of quantum systems are minimal tensor products, both are channels by Fact~\ref{fact:minmax-pos}. Example~\ref{ex:unot} uses $R$ to show that the value of a verifier can depend on this choice.
\end{example}

\subsection{Theories and the standing assumptions}\label{sec:standing}

\begin{definition}[Theory]\label{def:theory}
A theory $\GG$ consists of (a) a collection $\mathcal S_\GG$ of systems containing a finite set of elementary systems and closed under forming composites, the composition being associative and symmetric up to the obvious isomorphisms; and (b) for each pair $A,B\in\mathcal S_\GG$ a set $\Chan AB$ of allowed channels and a set of allowed instruments, all satisfying Definition~\ref{def:channel} with respect to $\mathcal S_\GG$, closed under sequential and parallel composition, containing the identity channels, the discard channels $u_A$ and the preparations of all states of all systems, and such that outcomes of instruments may be coarse-grained and every allowed instrument $(\Phi_i)_i$ has an allowed flagged channel $\omega\mapsto\sum_i\delta_i\otimes\Phi_i(\omega)$ that records the outcome in a classical register. A theory $\GG$ is a \emph{subtheory} of $\GG'$ if every system, allowed channel and allowed instrument of $\GG$ is one of $\GG'$.
\end{definition}

Throughout the paper, every theory satisfies the following standing assumptions.
\begin{enumerate}[label=\textbf{(T\arabic*)},leftmargin=3.2em]
\item \textbf{Local tomography.} Composites are as in Definition~\ref{def:composite}: $V_{AB}=V_A\otimes V_B$.
\item \textbf{Causality.} Allowed instruments are normalised: $\sum_i\Phi_i$ preserves the unit effect. This is built into Definition~\ref{def:channel}; we list it separately because it is a physical assumption, equivalent to the uniqueness of the deterministic effect \cite{CDP10}, whose consequence Fact~\ref{fact:causal} is what makes interactive protocols well defined.
\item \textbf{Classical theory is contained.} Every standard $\Delta_n$, with the coordinates of Example~\ref{ex:classical}, is a designated classical system of $\GG$; every stochastic map between classical systems is an allowed channel; the readout effects of $\Delta_n$ are allowed; and classical control is allowed: if $\Phi_1,\dots,\Phi_n\in\Chan AB$, then $\delta_i\otimes\alpha\mapsto\delta_i\otimes\Phi_i(\alpha)$ is an allowed channel from $\Delta_nA$ to $\Delta_nB$, and likewise for instruments.
\item \textbf{Computable gates.} The elementary systems have bounded dimension, and $\GG$ comes with a finite gate set $G$ of allowed instruments, each acting on a composite of at most a constant number of elementary systems, such that in a fixed basis of each elementary $V_A$ every matrix entry of every outcome map of every gate, and every coordinate of every elementary unit effect $u_A$, can be approximated to within $2^{-k}$ by a deterministic algorithm running in time $\poly(k)$. The gate set contains the preparation of a fixed state of each elementary system, the elementary discard channels, the readout of classical bits, the preparation of a fair coin, and a universal set of classical reversible gates \cite{Tof80}. The gate set is part of the specification of $\GG$; no universality for all allowed transformations is assumed.
\end{enumerate}

{Assumption (T4) constrains the verifier's gates and nothing else.} It does not require the state cones of the systems to be computable, and Section~\ref{sec:beyond} shows that this omission matters. {The classes defined below depend on $\GG$ together with its designated gate set; the coordinates determine whether (T4) holds.} We call the coordinates of the elementary systems together with the gate set the \emph{presentation} of $\GG$, and we write $\GIP(\GG)$ with the presentation understood. An \emph{isomorphism} of theories is a family of linear order isomorphisms $\iota_A:V_A\to V_{A'}$ with $u_{A'}\circ\iota_A=u_A$, compatible with composition and mapping allowed channels and instruments onto allowed ones. Isomorphic theories have the same operational content: the same statistics for all preparations, transformations and measurements. An isomorphism need not preserve the computability of gate entries. {Transporting the designated gates together with the states and transformations preserves circuit descriptions and acceptance probabilities; replacing them by a standard gate set can change the class.}

\begin{example}[Classical theory]\label{ex:cl-theory}
Systems $\Delta_n$; composites $\Delta_n\Delta_m=\Delta_{nm}$, forced by Fact~\ref{fact:classical-factor}; channels the column-stochastic matrices; gate set the Toffoli gate, fair coin, copy, bit readout and discard. Polynomial-time uniform circuits of this theory are randomised Boolean circuits, and the bounded-error class is $\BPP$.
\end{example}

\begin{example}[Quantum theory with classical systems, $\QC$]\label{ex:q-theory}
Elementary systems: the qubit $Q$ and the classical bit $\Delta_2$ (we sometimes adjoin further classical systems $\Delta_d$ of fixed dimension as elementary systems). Systems: the composites of elementary systems, which are the classical--quantum registers $\bigoplus_{c}\Herm(\C^{2^a})$ with a classical label $c$ and $a$ qubits, together with quantum systems $Q_d$ and classical systems $\Delta_n$ of all dimensions and their composites. Composites of quantum systems are $\Herm(\C^d\otimes\C^{d'})$ with the positive semidefinite cone, and composites with classical systems are direct sums; (T1) holds because $\Herm(\C^d\otimes\C^{d'})=\Herm_d\otimes\Herm_{d'}$ as real vector spaces. Channels are the completely positive trace-preserving (CPTP) maps and instruments are the families of completely positive maps with trace-preserving sum. Gate set: Hadamard, $\pi/8$, controlled-NOT, preparation of $\ket0$, measurement in the computational basis, and the classical gates. Polynomial-time uniform circuits of $\QC$ are polynomial-time quantum circuits with intermediate measurements and classical control \cite{AKN98}, and the bounded-error class is $\BQP$. We write ``quantum theory'' for $\QC$ when no confusion can arise.
\end{example}

\begin{example}[Boxworld]\label{ex:boxworld}
Elementary systems: the gbit and the classical bit; systems: all composites of gbits with the standard classical systems $\Delta_n$. All composites are maximal tensor products, so the states of $n$ gbits are the no-signalling boxes with $n$ binary inputs and $n$ binary outputs. Channels: all positive unit-preserving maps, which by Fact~\ref{fact:minmax-pos} are exactly the maps positive on all extensions; instruments: all families of positive maps with unit-preserving sum. Boxworld \cite{Bar07} contains the Popescu--Rohrlich box and violates the CHSH inequality \cite{CHSH69} maximally; its reversible transformations are generated by relabellings and permutations \cite{GMCD10}. Gate set: the measurements $X$ and $Z$, preparation of the pure gbit states and of the PR box, discard, and the classical gates. (T1)--(T4) hold, and every gate has entries in $\{0,\pm1,\pm\tfrac12\}$.
\end{example}

\begin{example}[Quantum systems with minimal composites, $\GG_{\min}$]\label{ex:Gmin}
Elementary systems: the qubit and the classical bit; systems: all composites of qubits with the standard classical systems $\Delta_n$, with all composites minimal tensor products, all positive unit-preserving maps as channels and all families of positive maps with unit-preserving sum as instruments (Fact~\ref{fact:minmax-pos}), and the gate set consisting of the Hadamard and $\pi/8$ gates, preparation of $\ket0$, measurement in the computational basis and the classical gates. This is a theory satisfying (T1)--(T4), with the same elementary systems and the same single-qubit gates as $\QC$. It has no entangled states, and the controlled-NOT gate is not a channel of $\GG_{\min}$: it maps the product state $\ketbra+\otimes\ketbra0$ to a state outside $\Pos_2\tmin\Pos_2$. On the other hand, the maps $\mathsf T$ and $R$ of Example~\ref{ex:transpose} are channels of $\GG_{\min}$.
\end{example}

\begin{example}[Real quantum theory: a theory violating (T1)]\label{ex:realQ}
Real quantum theory has systems $\Sym_d(\R)$, the real symmetric $d\times d$ matrices with the positive semidefinite cone and the trace, and composites $\Sym_{dd'}(\R)$ on $\R^d\otimes\R^{d'}$ \cite{Stu60,HW12}. This violates (T1): $\dim\Sym_{dd'}=\tfrac12dd'(dd'+1)$ exceeds $\dim(\Sym_d\otimes\Sym_{d'})=\tfrac14dd'(d+1)(d'+1)$ when $d,d'>1$. Concretely, for two rebits (real two-level systems, $\Sym_2(\R)$) the states $\tfrac14(\one\pm Y\otimes Y)$, where $Y=\left(\begin{smallmatrix}0&-i\\i&0\end{smallmatrix}\right)$ so that $Y\otimes Y$ is real, agree on every product of real observables but are different states. Real quantum theory is a consistent theory, experimentally distinguishable from complex quantum theory \cite{RTW21}; we exclude it here by (T1) and return to it in Section~\ref{sec:discussion}.
\end{example}

\begin{example}[Classical theory with a triangular system adjoined, $\GG_v$]\label{ex:Gv}
Fix $v\in(0,1)$. Elementary systems: the triangular system $W_v$ of Example~\ref{ex:triangle} and the classical bit. Systems: all finite standard classical systems $\Delta_n$ and their composites with copies of $W_v$, every composite being a minimal tensor product; this is associative and symmetric. Channels: all positive unit-preserving maps, which by Fact~\ref{fact:minmax-pos} are positive on all extensions; instruments: families of positive maps with unit-preserving sum. Gate set: the preparation of $e_0$, the discard $u$, the measurement $(f_y,u-f_y)$, and the classical gates required by (T4). Then (T1)--(T3) hold by construction, and (T4) holds because every gate matrix has entries in $\{0,\pm1,\tfrac12\}$ and the unit effect is $(0,0,1)$. The real number $v$ appears in no gate: it appears only in the description of the state cone, which no standing assumption constrains. The states of $W_v^{\otimes k}$ are the mixtures of the products $e_{i_1}\otimes\dots\otimes e_{i_k}$, and the product effects $f_y^{(i)}=u\otimes\dots\otimes f_y\otimes\dots\otimes u$ are effects of $W_v^{\otimes k}$. The maps $L_v^{\otimes k}\otimes\id$ constitute an isomorphism of $\GG_v$ with classical theory having $\Delta_3$ as an additional elementary system. This isomorphism is computable if and only if $v$ is computable: the matrix of $L_v^{-1}$ has the columns $e_0,e_1,e_2$, and that of $L_v$ involves $1/v$.
\end{example}

\begin{example}[Quantum theory with a triangular system adjoined, $\GG^{\mathrm Q}_v$]\label{ex:GQv}
Fix $v\in(0,1)$. Elementary systems: the qubit, the classical bit and $W_v$. Systems: $S=A\otimes W_v^{\otimes c}$, where $A$ is a system of $\QC$ and $c\ge0$; we define their cones, unit effects and composites by transporting those of the $\QC$-systems $A\otimes\Delta_3^{\otimes c}$ through $\id_{V_A}\otimes L_v^{\otimes c}$. Allowed channels and instruments are the transports of the CPTP maps and of the completely positive instruments of $\QC$. Gate set: the quantum and classical gates of Example~\ref{ex:q-theory} together with the preparation of $e_0$, the discard $u$ and the measurement $(f_y,u-f_y)$ on $W_v$. By construction $\GG^{\mathrm Q}_v$ is isomorphic to $\QC$ with $\Delta_3$ adjoined as an elementary system, so (T1)--(T3) hold; (T4) holds because every gate entry is an entry of a quantum gate or lies in $\{0,\pm1,\tfrac12\}$. The theory contains $\QC$, and hence classical theory, as a subtheory with its standard gate set. In the transported coordinates the measurement $(f_y,u-f_y)$ has the effects $(0,0,v)$ and $(1,1,1-v)$ on $\Delta_3$: it can be realised by reading the label and, if the label is $2$, tossing a coin with bias $v$. The isomorphism with $\QC$ is computable if and only if $v$ is.
\end{example}

\begin{example}[Quantum theory with an additional system and maximal composites, $\GG_E$]\label{ex:foil}
Let $E$ be a system with $V_E=\Herm_2(\C)$, $\Vp E=\Pos_2$ and $u_E=\Tr$; as a single system, $E$ is a qubit. Elementary systems: $Q$, $\Delta_2$ and $E$; systems: all composites of these with the standard classical systems $\Delta_n$. For a register with $a$ qubits and $b$ copies of $E$ (classical factors are direct sums, by Fact~\ref{fact:classical-factor}) define the cone
\[
C_{a,b}:=\big(\Pos_{2^a}\tmin\Pos_2\tmin\dots\tmin\Pos_2\big)^*\qquad(b\text{ copies of }\Pos_2),
\]
the maximal tensor product of the quantum block $\Pos_{2^a}$ with the $b$ additional systems. Then $C_{a,0}=\Pos_{2^a}$ by self-duality of the positive semidefinite cone, so composites of qubits are ordinary quantum composites, while $C_{1,1}$ consists of the block-positive operators on $\C^2\otimes\C^2$, i.e., the Hermitian operators $\xi$ with $\bra{a\otimes b}\xi\ket{a\otimes b}\ge0$ for all product vectors. These include $\tfrac12\mathrm{SWAP}$: it is normalised, block-positive since $\bra{a\otimes b}\mathrm{SWAP}\ket{a\otimes b}=|\langle a|b\rangle|^2$, and has the eigenvalue $-\tfrac12$ on the singlet. Allowed channels: all unit-preserving maps positive on all extensions; allowed instruments: all families of maps positive on all extensions with unit-preserving sum; gate set: the quantum and classical gates together with the preparation of $\one/2$ and the discard on $E$. We verify that this is a theory. \emph{Products.} If $\xi\in C_{a,b}$ and $\xi'\in C_{a',b'}$, then for $f\in\Pos_{2^{a+a'}}$ and a product $g=g_1\otimes g_2$ of positive elements on the $b$ and $b'$ additional systems, $(f\otimes g)(\xi\otimes\xi')=f(X\otimes X')$ with $X=(\id\otimes g_1)(\xi)$ and $X'=(\id\otimes g_2)(\xi')$; $X\in\Pos_{2^a}$ because $\xi$ is nonnegative on all products of positive elements, and likewise $X'\in\Pos_{2^{a'}}$, so $(f\otimes g)(\xi\otimes\xi')\ge0$ and $\xi\otimes\xi'\in C_{a+a',b+b'}$. \emph{Inclusion in the maximal tensor product.} The dual of $C_{a,b}$ is the minimal tensor product $\Pos_{2^a}\tmin\Pos_2\tmin\dots$ (bipolar theorem), so a product of effects of $C_{a,b}$ and $C_{a',b'}$ is a sum of products of positive elements, which is nonnegative on $C_{a+a',b+b'}$; hence $C_{a+a',b+b'}\subseteq C_{a,b}\tmax C_{a',b'}$. Associativity and symmetry hold because $C_{a,b}$ depends only on $(a,b)$. \emph{Quantum channels are allowed.} If $\Phi$ is a completely positive map between qubit registers, then for $\xi\in C_{a+a'',b}$, $f\in\Pos_{2^{a'+a''}}$ and a product $g$, $(f\otimes g)\big((\Phi\otimes\id)\xi\big)=\big(((\Phi^*\otimes\id)f)\otimes g\big)(\xi)\ge0$ because $\Phi^*\otimes\id$ is positive; so $\Phi$ is positive on all extensions, and the same argument shows that every completely positive instrument between qubit registers is an allowed instrument. Thus $\GG_E$ satisfies (T1)--(T4) and contains quantum theory in the sense of Definition~\ref{def:containsQ} below. A prover of $\GG_E$ can prepare $\tfrac12\mathrm{SWAP}$ on $Q\otimes E$, send $Q$ to the verifier and keep $E$, so that its memory is correlated with the verifier's qubit through a state outside the quantum cone. Theorem~\ref{thm:qm-universal} shows that this gives it no advantage against a quantum verifier, and Example~\ref{ex:foil-steering} shows why.
\end{example}

\section{Circuits and path sums}\label{sec:circuits}

A circuit in a theory $\GG$ is a finite acyclic arrangement of gates from the gate set $G$ on wires labelled by systems; every gate outcome is recorded in a classical register. Circuits may be adaptive: by (T3) the gate applied at a given step may be chosen as a function of the outcomes recorded so far. Formally, an adaptive circuit is a deterministic procedure that, given the outcomes recorded so far, either halts or names the next gate and the wires it acts on. Circuits may take classical inputs $a$, fed into classical wires; this is how a verifier receives the prover's classical messages.

Let $C$ be an adaptive circuit with $L$ gate applications, $a$ a classical input, and $z=(z_1,\dots,z_L)$ an outcome record. Let $V_j$ be the vector space of the whole register after gate $j$, of dimension $D_j\le D$. By (T1) the outcome map of gate $j$, tensored with the identity on the untouched wires, is a matrix $M_j:V_{j-1}\to V_j$ in the product bases (rectangular when systems are prepared or discarded). The record $z$ occurs with probability
\begin{equation}\label{eq:record}
\Pr[z\,|\,a]=u\big(M_L\cdots M_2M_1\omega_0\big),
\end{equation}
where $\omega_0\in V_0$ is the initial product state, including the classical input, and $u$ is the unit effect of $V_L$. By (T2), $\sum_z\Pr[z\,|\,a]=1$.

\begin{definition}[Uniform circuit families \cite{LB15}]\label{def:uniform}
A family $\{C_x\}_{x\in\{0,1\}^*}$ of adaptive circuits in $\GG$ is polynomial-time uniform if its gates come from $G$, the numbers of gate applications and of wires in $C_x$ are at most $p(|x|)$ for a polynomial $p$, and a polynomial-time Turing machine, given $x$ and the outcomes recorded so far, outputs the next gate and its wires or halts. An acceptance rule is a polynomial-time computable Boolean function $\acc(x,z)$ of the input and the record.
\end{definition}

Lee and Barrett define $\BGP(\GG)$ as the class of promise problems decided with bounded error by such families \cite{LB15}; it equals $\BPP$ for classical theory and $\BQP$ for quantum theory, and for every theory satisfying (T1)--(T4) it is contained in $\AWPP\subseteq\PP$ \cite{LB15}. We use only the following ingredient of their argument.

The matrices $M_j$ in \eqref{eq:record} are exponentially large, but their entries are efficiently computable: by (T1) each $M_j$ is, up to a permutation of tensor factors, the Kronecker product of a constant-size gate matrix with identity matrices, so $(M_j)_{k,l}$ is either zero or a single gate entry, and which one is decided in polynomial time from the binary encodings of $k$ and $l$. Expanding the product gives a sum over paths,
\begin{equation}\label{eq:pathsum}
\Pr[z\,|\,a]=\sum_{k_0,\dots,k_L}u_{k_L}(M_L)_{k_Lk_{L-1}}\cdots(M_1)_{k_1k_0}(\omega_0)_{k_0},
\end{equation}
where $u_{k}$ and $(\omega_0)_k$ are products, over the wires, of coordinates of elementary unit effects and elementary prepared states. For a randomised Boolean circuit this is the sum over computation paths of their probabilities; for a quantum circuit written in the density-matrix picture it is the familiar sum over pairs of Feynman paths that underlies $\BQP\subseteq\PSPACE$ \cite{BV97,ADH97}.

\begin{fact}[Path-sum representation \cite{LB15}]\label{fact:pathsum}
Let $\{C_x\}$ be a polynomial-time uniform family in a theory $\GG$. For every polynomial $q$ there is a deterministic polynomial-space algorithm that, given $x$, a classical input $a$ and a record $z$ of length $\poly(|x|)$, outputs a rational number $\tilde P$ with $|\tilde P-\Pr[z\,|\,a]|\le2^{-q(|x|)}$. The same holds for the sums $\sum_{z\in Z}\Pr[z\,|\,a]$ over any polynomial-time decidable set $Z$ of records.
\end{fact}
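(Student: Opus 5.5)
The plan is to compute $\Pr[z\,|\,a]$ in \eqref{eq:pathsum} term by term in polynomial space, approximating each gate entry, unit-effect coordinate and prepared-state coordinate to sufficient precision using (T4). The path $(k_0,\dots,k_L)$ is enumerated one index at a time, reusing space. Each $k_j$ is an index into $V_j$, of dimension at most $D^{p(|x|)}$ with $D$ the maximal elementary dimension. So each index has $O(p(|x|)\log D)$ bits, and a full path fits in polynomial space.

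First, with $a$ and $z$ fixed, simulate the uniform Turing machine of Definition~\ref{def:uniform} along the record $z$. This recomputes, on demand, the gate applied at step $j$ and its wires, which fixes $M_j$ implicitly. The entry $(M_j)_{k_jk_{j-1}}$ is then either $0$ or a single gate entry, determined in polynomial time from the binary encodings. Likewise $u_{k_L}$ and $(\omega_0)_{k_0}$ are products of at most $p(|x|)$ elementary coordinates; the classical input contributes $0/1$ factors.

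Next, replace every elementary quantity $g$ by a dyadic approximation $\tilde g$ with $|g-\tilde g|\le2^{-r}$, obtained in time $\poly(r)$ by (T4). Because the gate set is finite, all exact entries are bounded by a constant $B\ge1$. A product of $m\le 2p(|x|)+2$ factors then has error at most $m(B+1)^{m}2^{-r}$, by the telescoping estimate $|\prod g_i-\prod\tilde g_i|\le\sum_i|g_i-\tilde g_i|\prod_{l\ne i}\max(|g_l|,|\tilde g_l|)$. The number of paths is at most $D^{(L+1)p(|x|)}=2^{\poly(|x|)}$. Taking $r=q(|x|)+\poly(|x|)$ with a suitable polynomial therefore makes the total error at most $2^{-q(|x|)}$. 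Each approximate term is a dyadic rational with $\poly(|x|)$ bits. The running sum of $2^{\poly}$ such terms also has polynomially many bits, so one accumulator suffices, and $\tilde P$ is that accumulator.

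The main obstacle is the error bookkeeping, since the path sum involves signed terms that need not be small individually. The worst-case bound above, with exponentially many paths times exponential growth $(B+1)^m$, is still $2^{\poly(|x|)}$, and this is absorbed by choosing $r$ polynomially larger. A second point to check is that paths inconsistent with the adaptive choices never arise: the circuit structure is fixed by $z$ and $a$, and the probability in \eqref{eq:record} is exactly the path sum for that structure.

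For the sum over a polynomial-time decidable set $Z$, enumerate all records $z$ of length $\poly(|x|)$ in lexicographic order. For each $z$, test membership in $Z$ and add the approximation of $\Pr[z\,|\,a]$ computed with precision $2^{-q(|x|)-\ell(|x|)}$, where $2^{\ell(|x|)}$ bounds the number of records. The errors then total at most $2^{-q(|x|)}$, and the space is reused across records.
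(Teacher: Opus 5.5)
Your proposal is correct and follows essentially the paper's proof. You enumerate paths (and, for the sum over $Z$, records) lexicographically, approximate each gate entry and elementary coordinate to precision $2^{-r}$ via (T4), bound each term's error by telescoping, absorb the $2^{\poly(|x|)}$ terms into a polynomially larger $r$, and keep a polynomial-bit accumulator. One harmless slip: $u_{k_L}$ and $(\omega_0)_{k_0}$ can each contribute up to $p(|x|)$ factors besides the $L\le p(|x|)$ matrix entries, so a term has up to about $3p(|x|)$ factors rather than $2p(|x|)+2$, but any polynomial bound suffices.
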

\begin{proof}
Let $n=|x|$, $L\le p(n)$ and $D\le2^{p'(n)}$, and let $2^{b(n)}$ bound the absolute value of every gate entry, every elementary unit-effect coordinate and every coordinate of an elementary prepared state. Each term of \eqref{eq:pathsum} is a product of at most $\ell=\poly(n)$ factors (the $L$ matrix entries and the polynomially many elementary coordinates that make up $u_{k_L}$ and $(\omega_0)_{k_0}$; we pad shorter products with factors equal to $1$, so that $\ell$ does not depend on the record), and there are $2^{s(n)}$ terms for a polynomial $s$, including the outer sum over records when required. By (T4) each factor can be computed on a common dyadic grid to error $\eps=2^{-t(n)}$. Replacing the factors of a product one at a time (telescoping) bounds the error of each term by $\ell\,2^{(b(n)+1)(\ell-1)}\eps$, so it suffices to take $t(n)\ge q(n)+s(n)+(b(n)+1)(\ell-1)+\lceil\log_2\ell\rceil+2$. Enumerating paths and records in lexicographic order, every rounded term has the common denominator $2^{t(n)\ell}$, the products and the accumulator have polynomially many bits, and the computation uses polynomial space.
\end{proof}

\section{Interactive proofs in a theory}\label{sec:model}

\subsection{Verifiers and prover strategies}

An interactive protocol alternates messages between a verifier and a prover. We organise it in rounds: in round $i$ the verifier sends a system $Q_i$ and the prover returns a system $A_i$ (Figure~\ref{fig:protocol}). A message system may be trivial, so this pattern includes protocols in which the prover speaks first and protocols with an odd number of messages. After the last round the verifier computes a bit and announces it. We count a message whenever a nontrivial system is transferred; the verifier's final measurement is not a message.

\begin{figure}[t]
\centering
\begin{tikzpicture}[x=1cm,y=1cm,
  box/.style={draw,rounded corners=2pt,minimum height=0.8cm,minimum width=1.15cm,fill=gray!10},
  msg/.style={-{Stealth[length=2mm]},thick},
  mem/.style={thick}]
  \node[box] (V1) at (0,2) {$V^{(1)}$};
  \node[box] (V2) at (3.2,2) {$V^{(2)}$};
  \node[box] (V3) at (6.4,2) {$V^{(3)}$};
  {\node at (7.15,2) {$\cdots$};}
  {\node[box,draw=blue] (Vr) at (7.9,2) {$V^{(r)}$};}
  \node[box] (Vf) at (10.6,2) {$V^{(r+1)}$};
  \node[box] (P1) at (1.6,0) {$\Pi_1$};
  \node[box] (P2) at (4.8,0) {$\Pi_2$};
  \node at (7.4,0) {$\cdots$};
  \node[box] (Pr) at (9.0,0) {$\Pi_r$};
  \draw[mem] (-1.3,2) node[left]{$\omega_0$} -- (V1);
  \draw[mem] (V1) -- node[above]{$V_1^-$} (V2);
  \draw[mem] (V2) -- node[above]{$V_2^-$} (V3);
  {\draw[mem,draw=blue] (V3) -- (7.0,2);}
  {\draw[mem,draw=blue] (Vr) -- (Vf);}
  \draw[mem] (Vf) -- (12.2,2) node[right]{$\acc$};
  \draw[mem] (0.2,0) node[left]{$\tau_0$} -- (P1);
  \draw[mem] (P1) -- node[below]{$P_1$} (P2);
  \draw[mem] (P2) -- (6.8,0);
  \draw[mem] (8.0,0) -- (Pr);
  \draw[mem] (Pr) -- (10.6,0) node[right]{$P_r$};
  \draw[msg] (V1.south east) -- node[left,pos=0.5]{$Q_1$} (P1.north west);
  \draw[msg] (P1.north east) -- node[right,pos=0.5]{$A_1$} (V2.south west);
  \draw[msg] (V2.south east) -- node[left,pos=0.5]{$Q_2$} (P2.north west);
  \draw[msg] (P2.north east) -- node[right,pos=0.5]{$A_2$} (V3.south west);
  {\draw[msg,draw=blue] (Vr.south east) -- node[left,pos=0.5]{$Q_r$} (Pr.north west);}
  \draw[msg] (Pr.north east) -- node[right,pos=0.5]{$A_r$} (Vf.south west);
\end{tikzpicture}
\caption{An $r$-round protocol. The verifier's stages $V^{(i)}$ are adaptive circuits of $\GG$ acting on its private register (top wire); the prover's channels $\Pi_i$ act on its private system (bottom wire). The verifier and the prover start in independently prepared states $\omega_0$ and $\tau_0$ and interact only through the message systems $Q_i$ and $A_i$.}
\label{fig:protocol}
\end{figure}

\begin{definition}[Verifier]\label{def:verifier}
Let $\GG$ be a theory and $r=r(n)$ a polynomial. An $r$-round verifier $V$ for $\GG$ is a polynomial-time uniform family $\{V_x\}$ of adaptive circuits with the following structure. Before round $i$ the verifier holds a private register $V_{i-1}$, which includes all messages retained from earlier rounds; its stage-$i$ circuit $V^{(i)}$ acts on this register and on fresh ancillas, produces an outgoing message system $Q_i$ and a retained register $V_i^-$, and hands $Q_i$ to the prover; on receiving $A_i$ its register becomes $V_i=V_i^-A_i$. In stage $r+1$ it produces a classical bit $\acc$ among its recorded outcomes. All registers contain at most $\poly(n)$ elementary systems, the initial private state is prepared by verifier gates, the message types are fixed by $x$ (with padding), and the gates and wires are determined by $x$ and the recorded outcomes via the uniformity machine.
\end{definition}

\begin{definition}[Prover]\label{def:prover}
A prover strategy $P$ for an $r$-round verifier in $\GG$ consists of a private system $P_0$ of $\GG$ with a state $\tau_0\in\Om{P_0}$, systems $P_1,\dots,P_r$ of $\GG$, and channels $\Pi_i\in\Chan{Q_iP_{i-1}}{A_iP_i}$ for $i=1,\dots,r$. There is no bound on the dimensions of the $P_i$, and any allowed channel may be used.
\end{definition}

We could also allow the prover to use instruments and retain their classical outcomes; this does not change the model, because an instrument followed by a classically controlled continuation is a channel on a larger private system. In classical theory a prover strategy is exactly a randomised prover with memory; in quantum theory it is exactly a quantum prover in the sense of Kitaev and Watrous \cite{KW00}, by Stinespring dilation \cite{Sti55}.

The verifier and the prover begin with independently prepared states. Given $V_x$ and $P$, the interaction is a closed circuit in $\GG$, and $\acc$ has a well-defined probability of being $1$, which we denote by $\Pr[V_x\leftrightarrow P\text{ accepts}]$; by (T2) and Fact~\ref{fact:causal}, the distribution of the verifier's recorded outcomes at each stage does not depend on operations performed later.

\subsection{Values and classes}

\begin{definition}[Value; the classes]\label{def:classes}
The value of $V$ on $x$ is $\val_\GG(V_x)=\sup_P\Pr[V_x\leftrightarrow P\text{ accepts}]$, the supremum over all prover strategies in $\GG$. A promise problem $L$ is in $\GIP(\GG)$ if there is a verifier $V$ for $\GG$ with $\val_\GG(V_x)\ge2/3$ for $x\in L_{\mathrm{yes}}$ and $\val_\GG(V_x)\le1/3$ for $x\in L_{\mathrm{no}}$. Variants:
\begin{itemize}
\item $\GIP(\GG;k)$: at most $k$ nontrivial messages; $\GMA(\GG)=\GIP(\GG;1)$ with the message sent by the prover.
\item $\GIPcl(\GG)$ and $\GIPcl(\GG;k)$: all message systems $Q_i,A_i$ are designated standard classical registers from (T3), encoded by polynomially many bits with the efficient readout of (T4); the verifier and the prover may use arbitrary systems of $\GG$ internally.
\item $\GIPqm(\GG)$ and $\GIPqm(\GG;k)$, defined when $\GG$ contains quantum theory (Definition~\ref{def:containsQ}): the verifier is a quantum verifier in the sense of Definition~\ref{def:QIP}, whose registers are qubit registers together with classical registers for recorded outcomes and whose messages are qubit registers; the prover may use arbitrary systems of $\GG$.
\end{itemize}
By construction, $\GIP(\text{classical theory})=\GIPcl(\text{classical theory})=\IP$, and $\GIPqm(\QC)=\QIP$; Lemma~\ref{lem:QC} shows that also $\GIP(\QC)=\QIP$. The classes $\GMA$ specialise to $\MA$ and $\QMA$.
\end{definition}

\begin{definition}[Theories containing quantum theory]\label{def:containsQ}
A theory $\GG$ contains quantum theory if (i) the qubit $Q$ is an elementary system of $\GG$ and, under the identification of (T1), the composite $Q^{\otimes n}$ in $\GG$ is the system $(\Herm(\C^{2^n}),\Pos_{2^n},\Tr)$ for every $n$; (ii) every CPTP map $Q^{\otimes n}\to Q^{\otimes m}$ is an allowed channel and every family of completely positive maps summing to a CPTP map is an allowed instrument; (iii) the gate set of $\GG$ contains the quantum gates of Example~\ref{ex:q-theory}. A quantum verifier for $\GG$ is then a verifier whose registers are qubit registers with classical registers for recorded outcomes and whose gates are the quantum and classical gates; it is exactly a $\QIP$ verifier.
\end{definition}

The theories $\QC$, $\GG^{\mathrm Q}_v$ and $\GG_E$ contain quantum theory; classical theory, boxworld, $\GG_{\min}$ and $\GG_v$ do not. The theory $\GG_{\min}$ has the qubit as an elementary system and allows all quantum channels on single qubits, but its composites of qubits are not the quantum composites, so condition (i) fails.

We stress that ``classical messages'' refers to the fixed effective encodings of (T3)--(T4). It does not include every simplicial state space merely because an abstract order isomorphism to a standard $\Delta_n$ exists: such an isomorphism need not preserve the computability of the verifier's gates, and Section~\ref{sec:beyond} exploits exactly this distinction.

\subsection{Basic properties and error reduction}

For a fixed verifier the value is monotone under inclusion of theories: if $\GG$ is a subtheory of $\GG'$ and the gate set of $\GG$ is contained in that of $\GG'$, then every $\GG$-verifier is a $\GG'$-verifier and its value can only increase. This does not by itself compare the classes, because a larger value preserves completeness but may violate soundness. The following lemma is the tool we use to compare classes.

\begin{lemma}[Simulation]\label{lem:simulation}
Let $\GG$ be a subtheory of $\GG'$ with gate set contained in that of $\GG'$. Suppose that for every verifier $V$ for $\GG$ and every prover strategy $P'$ in $\GG'$ there is a prover strategy $P$ in $\GG$ with $\Pr[V_x\leftrightarrow P\text{ accepts}]=\Pr[V_x\leftrightarrow P'\text{ accepts}]$ for every $x$. Then $\GIP(\GG)\subseteq\GIP(\GG')$, with the number of messages preserved, and $\GIP(\GG)=\GIP(\GG')$ if in addition every verifier for $\GG'$ is a verifier for $\GG$.
\end{lemma}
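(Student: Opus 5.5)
The plan is to show that, under the hypothesis, the value of any $\GG$-verifier is the same whether it is computed over provers of $\GG$ or over provers of $\GG'$. Both inclusions of classes then follow by reading off completeness and soundness.

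\emph{Step 1: equality of values.} Fix a verifier $V$ for $\GG$ and an input $x$. Since the gate set of $\GG$ is contained in that of $\GG'$, and every system and channel of $\GG$ is one of $\GG'$, the family $\{V_x\}$ is also a polynomial-time uniform verifier for $\GG'$. Every prover strategy $P$ in $\GG$ is also a prover strategy in $\GG'$. One point needs checking here: the closed interaction circuit $V_x\leftrightarrow P$ must produce the same acceptance probability whether it is evaluated in $\GG$ or in $\GG'$. It does, because composites, channels and unit effects of $\GG$ are literally those of $\GG'$ (subtheory), and the acceptance probability is the unit effect applied to the composed linear maps on the initial product state, as in~\eqref{eq:record}. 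Hence $\val_\GG(V_x)\le\val_{\GG'}(V_x)$. For the reverse inequality, take any $P'$ in $\GG'$. The hypothesis supplies a $P$ in $\GG$ with the same acceptance probability, so the supremum over $\GG'$ provers is at most the supremum over $\GG$ provers. Therefore $\val_\GG(V_x)=\val_{\GG'}(V_x)$ for every $x$.

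\emph{Step 2: the inclusion.} Let $L\in\GIP(\GG)$ via $V$. By Step 1 the same $V$, viewed as a $\GG'$-verifier, has value at least $2/3$ on $L_{\mathrm{yes}}$ and at most $1/3$ on $L_{\mathrm{no}}$. The message systems and the round structure are unchanged, so the number of nontrivial messages is preserved. This gives $\GIP(\GG;k)\subseteq\GIP(\GG';k)$ for every $k$, and in particular $\GIP(\GG)\subseteq\GIP(\GG')$.

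\emph{Step 3: equality under the extra assumption.} Suppose every verifier for $\GG'$ is a verifier for $\GG$. Then a $\GG'$-protocol for $L$ uses a verifier $V$ that is also a $\GG$-verifier, so the hypothesis and Step 1 apply to it. Its $\GG$-value equals its $\GG'$-value, and $L\in\GIP(\GG)$ with the same number of messages.

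The only step needing care is the one noted in Step 1: checking that the embedding of $\GG$ into $\GG'$ does not change the probabilities computed for a fixed interaction. This is immediate from the definition of subtheory, because the relevant systems and maps coincide as linear-algebraic data. Everything else is bookkeeping with suprema.
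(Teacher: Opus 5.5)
Your proposal is correct and follows the paper's proof. The same $\GG$-verifier witnesses membership in $\GIP(\GG')$ because its value over $\GG'$-provers equals its value over $\GG$-provers, and the argument reverses when every $\GG'$-verifier is a $\GG$-verifier. You are more explicit than the paper in two places, both of which it uses implicitly: you separate the two inequalities (monotonicity under the subtheory inclusion, which preserves completeness, and the hypothesis, which preserves soundness), and you check that embedding a fixed interaction into $\GG'$ leaves its acceptance probability unchanged.
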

\begin{proof}
A $\GG$-verifier $V$ is a $\GG'$-verifier, and by hypothesis its value over $\GG'$-provers equals its value over $\GG$-provers, so the same $V$ witnesses $L\in\GIP(\GG')$. If every $\GG'$-verifier is a $\GG$-verifier, the same argument applies in the other direction.
\end{proof}

The constants $2/3$ and $1/3$ in Definition~\ref{def:classes} are not important, by the following error reduction. {A prover facing $k$ sequential repetitions may correlate its behaviour across the repetitions through its memory; causality prevents this from improving its success probability.}

\begin{fact}[Stochastic domination \cite{SS07}]\label{fact:domination}
Let $X_1,\dots,X_k$ be $\{0,1\}$-valued random variables with $\Pr[X_{j+1}=1\,|\,X_1,\dots,X_j]\le p$ almost surely for $0\le j<k$. Then $\Pr[\sum_jX_j\ge m]\le\Pr[\mathrm{Bin}(k,p)\ge m]$ for every $m$, where $\mathrm{Bin}(k,p)$ is a binomial random variable.
\end{fact}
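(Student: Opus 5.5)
The plan is to prove the statement by a coupling argument that realises $X_1,\dots,X_k$ and a binomial variable on one probability space, with the sum of the $X_j$ bounded pointwise by the binomial. Since the $X_j$ take finitely many joint values, everything reduces to elementary conditional probabilities on atoms and no measure-theoretic subtleties arise.

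\emph{Step 1: conditional probabilities.} For each $j$ and each history $h=(h_1,\dots,h_{j-1})\in\{0,1\}^{j-1}$ define
\[
p_j(h)=\Pr[X_j=1\mid (X_1,\dots,X_{j-1})=h]
\]
when the history has positive probability, and $p_j(h)=0$ otherwise. The hypothesis, read on atoms of positive probability, gives $p_j(h)\le p$ for every $h$; the convention on null histories keeps this true there as well.

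\emph{Step 2: the coupling.} Let $U_1,\dots,U_k$ be independent uniform random variables on $[0,1]$. Define recursively
\[
X'_j=\one[U_j\le p_j(X'_1,\dots,X'_{j-1})],\qquad Y_j=\one[U_j\le p].
\]
Because $U_j$ is independent of $(U_1,\dots,U_{j-1})$, and hence of $(X'_1,\dots,X'_{j-1})$, we have $\Pr[X'_j=1\mid X'_{<j}=h]=p_j(h)$. By the chain rule, $(X'_1,\dots,X'_k)$ therefore has the same joint law as $(X_1,\dots,X_k)$. This is the one step needing care: we must check that the null-history convention does not matter, since such histories receive probability zero under both laws.

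\emph{Step 3: pointwise domination and conclusion.} The $Y_j$ are independent Bernoulli$(p)$, so $\sum_jY_j\sim\mathrm{Bin}(k,p)$. Since $p_j(\cdot)\le p$, we have $X'_j\le Y_j$ pointwise. Hence $\{\sum_jX'_j\ge m\}\subseteq\{\sum_jY_j\ge m\}$, and therefore
\[
\Pr\Big[\sum_jX_j\ge m\Big]=\Pr\Big[\sum_jX'_j\ge m\Big]\le\Pr[\mathrm{Bin}(k,p)\ge m].
\]

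An alternative route is induction on $k$. Condition on $X_1$, apply the inductive hypothesis to $X_2,\dots,X_k$, and use that $x\mapsto\Pr[\mathrm{Bin}(k-1,p)\ge m-x]$ is nondecreasing in $x\in\{0,1\}$. That monotonicity reduces the first step to the case $\Pr[X_1=1]\le p$. I expect the coupling to be cleaner; its only real obstacle is the identification of joint laws in Step 2.
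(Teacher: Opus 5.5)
Your proof is correct, but your main route differs from the paper's. The paper argues by induction on $k$. It conditions on $X_1$ and applies the induction hypothesis to $X_2,\dots,X_k$ under each value of $X_1$. It then uses $q=\Pr[X_1=1]\le p$ and the monotonicity $\beta_{m-1}\ge\beta_m$ of the tails $\beta_m=\Pr[\mathrm{Bin}(k-1,p)\ge m]$ to pass from $q\beta_{m-1}+(1-q)\beta_m$ to $p\beta_{m-1}+(1-p)\beta_m=\Pr[\mathrm{Bin}(k,p)\ge m]$. This is exactly the ``alternative route'' you sketch at the end.

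Your coupling through common uniforms $U_j$ instead builds a new probability space on which $X'_j\le Y_j$ holds coordinatewise. The identification of laws in your Step 2 is sound. By induction on $j$, $\Pr[X'_{\le j}=x_{\le j}]$ factors as $\Pr[X'_{<j}=x_{<j}]$ times $p_j(x_{<j})$ or $1-p_j(x_{<j})$, by independence of $U_j$ from $U_{<j}$. Both laws vanish on null histories, so the convention there is harmless.

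The coupling buys a stronger conclusion. The whole vector $(X_1,\dots,X_k)$ is stochastically dominated by an i.i.d.\ Bernoulli$(p)$ vector, so the bound holds for every nondecreasing function of the $X_j$, not only for tails of the sum. It also needs no monotonicity property of binomial tails.

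The induction buys economy. It works directly with the given finite distribution and needs no auxiliary continuous randomness or law-identification step. It fits in a few lines, which is all the paper needs for its two applications: sequential repetition in Proposition~\ref{prop:error} and the wire-by-wire soundness bound in Theorem~\ref{thm:triangle}.
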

\begin{proof}
Induction on $k$; the case $k=0$ is trivial. Write $\beta_m=\Pr[\mathrm{Bin}(k-1,p)\ge m]$, which is nonincreasing in $m$ with $\beta_m=1$ for $m\le0$. Conditioned on $X_1$, the remaining variables satisfy the hypothesis, so by induction $\Pr[\sum_{j\ge2}X_j\ge m-1\,|\,X_1=1]\le\beta_{m-1}$ and $\Pr[\sum_{j\ge2}X_j\ge m\,|\,X_1=0]\le\beta_m$. With $q=\Pr[X_1=1]\le p$,
\[
\Pr\Big[\sum_jX_j\ge m\Big]\le q\beta_{m-1}+(1-q)\beta_m\le p\beta_{m-1}+(1-p)\beta_m=\Pr[\mathrm{Bin}(k,p)\ge m],
\]
where the second inequality uses $\beta_{m-1}\ge\beta_m$.
\end{proof}

\begin{proposition}[Error reduction]\label{prop:error}
Let $V$ be an $r$-round verifier for $\GG$ and $k$ an odd polynomial. Let $V^{(k)}$ run $V_x$ $k$ times in sequence on freshly prepared private registers, discard the private register after each run, keep the $k$ accept bits $o_1,\dots,o_k$, and accept if and only if $\sum_jo_j>k/2$. Then
\[
\val(V_x)\ge2/3\implies\val(V^{(k)}_x)\ge1-e^{-k/20},\qquad
\val(V_x)\le1/3\implies\val(V^{(k)}_x)\le e^{-k/18}.
\]
Consequently the thresholds $(2/3,1/3)$ may be replaced by $(1-2^{-p(n)},2^{-p(n)})$ for any polynomial {$p$ with $p(n)\ge2$}, or by any constants {$0<s<c<1$}, without changing $\GIP(\GG)$, $\GIPcl(\GG)$ or $\GIPqm(\GG)$. Sequential repetition increases the number of messages, so it does not by itself amplify the variants with a fixed number of messages.
\end{proposition}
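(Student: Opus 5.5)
The plan has two parts: completeness, which needs one fixed good prover, and soundness, which must hold against every prover of $\GG$, including provers whose memory carries correlations across repetitions.

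\emph{Completeness.} Fix a prover $P$ with $\Pr[V_x\leftrightarrow P\text{ accepts}]\ge2/3-\eta$ for small $\eta>0$; a supremum need not be attained, and $\eta$ is absorbed at the end by taking it smaller than the slack in the constants. The prover for $V^{(k)}$ runs $k$ independent copies of $P$, one per run, each on a freshly prepared private system $P_0$ in state $\tau_0$. The joint state is then a product across runs, so the accept bits are independent Bernoulli variables with mean at least $2/3-\eta$. The verifier rejects only if at most $k/2$ runs accept, a deviation of at least $1/6-\eta$ below the mean. Hoeffding's inequality bounds this by $e^{-2k(1/6-\eta)^2}$, which is at most $e^{-k/20}$ once $\eta$ is small enough, since $2/36=1/18>1/20$.

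\emph{Soundness.} Fix any prover $P'$ for $V^{(k)}$. For each $j$ and each history of outcomes, I claim that the conditional probability that run $j+1$ accepts, given $o_1,\dots,o_j$, is at most $1/3$. The argument goes as follows.
\begin{enumerate}[label=(\roman*)]
\item After run $j$ the verifier discards its private register and records only the classical bits $o_1,\dots,o_j$. The global state is therefore a classical register holding $o_{\le j}$ composed with the prover's current system. By Fact~\ref{fact:classical-factor} it has the form $\sum_{o}\delta_o\otimes\beta_o$, with $\beta_o$ in the prover's cone.
\item Conditioned on $o_{\le j}=o$, the prover holds the normalised state $\beta_o/u(\beta_o)$, uncorrelated with the verifier's freshly prepared register.
\item The prover's remaining channels for run $j+1$ then constitute a prover strategy for $V_x$ in the sense of Definition~\ref{def:prover}, with initial state $\beta_o/u(\beta_o)$. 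Hence the conditional acceptance probability is at most $\val(V_x)\le1/3$.
\item By Fact~\ref{fact:causal}, later operations do not alter the joint distribution of $o_1,\dots,o_{j+1}$, so this conditional probability is well defined inside the full interaction.
\end{enumerate}
Fact~\ref{fact:domination} with $p=1/3$ then bounds the acceptance probability of $V^{(k)}_x$ by $\Pr[\mathrm{Bin}(k,1/3)>k/2]\le e^{-2k(1/6)^2}=e^{-k/18}$, by Hoeffding.

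\emph{Consequences.} Take $k$ polynomially large, for instance $k=20p(n)+1$ made odd, to reach the thresholds $1-2^{-p(n)}$ and $2^{-p(n)}$. For general constants $0<s<c<1$, replace the cutoff $k/2$ by $(c+s)/2\cdot k$ and use the same two arguments; this converts between any constant pair and $(2/3,1/3)$ in both directions. $V^{(k)}$ is still a polynomial-time uniform verifier, and it uses the same message types, so the subclasses with classical messages and with quantum verifiers are preserved.

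\emph{Main obstacle.} The delicate step is the soundness claim in (i)--(iii): that the prover's state conditioned on the transcript is a legitimate initial state $\tau_0$ independent of the verifier's fresh register. This needs the product structure of the composite with the fresh ancillas, the fact that conditioning on a classical register gives states (Fact~\ref{fact:marginal}), and the fact that the verifier keeps nothing beyond the accept bits.
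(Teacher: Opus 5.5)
Your proposal is correct and follows essentially the same route as the paper. Completeness runs independent copies of a near-optimal prover and applies Hoeffding; soundness uses the classical-factor decomposition of the retained record with the prover's system to bound each conditional acceptance probability by $\val(V_x)$, then applies stochastic domination and Hoeffding. The differences are cosmetic: your slack $\eta$ replaces the paper's fixed $0.66$, and you take the retained record to be just the accept bits, whereas the paper keeps the full classical record and sums the prover states $\tau_{o'}$ over records with the given accept bits, which amounts to the same thing by linearity.
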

\begin{proof}
\emph{Completeness.} Choose a strategy $P$ with acceptance probability at least $0.66$ and let the prover run $k$ independent copies of $P$ on disjoint private systems. The closed circuit is then a parallel composition of $k$ independent closed circuits, so the accept bits are independent Bernoulli variables with success probability at least $0.66$, and Hoeffding's inequality \cite{Hoe63} gives $\Pr[\sum_jo_j\le k/2]\le e^{-2k(0.16)^2}\le e^{-k/20}$.

\emph{Soundness.} Let $P'$ be any strategy for $V^{(k)}_x$. We claim that for every $j<k$ and every $o\in\{0,1\}^j$ with $\Pr[o_1\cdots o_j=o]>0$,
\begin{equation}\label{eq:condsound}
\Pr[o_{j+1}=1\,|\,o_1\cdots o_j=o]\le\val(V_x).
\end{equation}
Just before run $j+1$ the verifier holds a classical register $R$ containing its record so far (discarding the earlier private registers does not affect any probability, by Fact~\ref{fact:causal}), a fresh private register in the fixed state $\omega_0$, and the prover holds its private system $P$. By Fact~\ref{fact:classical-factor} the joint state of $R$ and $P$ is $\sum_{o'}\delta_{o'}\otimes\tau_{o'}$ with $\tau_{o'}\in\Vp P$ and $u_P(\tau_{o'})=\Pr[o']$, where $o'$ ranges over the possible records, and the fresh register is in a product with it. Let $\tau_o=\sum_{o'}\tau_{o'}$, the sum over the records $o'$ whose accept bits are $o$, so that $u_P(\tau_o)=\Pr[o]$. Let $\mathrm{Acc}(\tau)$ be the probability that run $j+1$ accepts when the prover's system starts in $\tau\in\Om P$ and the prover applies the channels of $P'$ scheduled for that run; it is the restriction of a linear functional on $V_P$. Then $\Pr[o_1\cdots o_j=o,\ o_{j+1}=1]=\mathrm{Acc}(\tau_o)=\Pr[o]\,\mathrm{Acc}(\tau_o/\Pr[o])$, and $\tau_o/\Pr[o]\in\Om P$ together with the channels of run $j+1$ is a legitimate strategy for $V_x$, so $\mathrm{Acc}(\tau_o/\Pr[o])\le\val(V_x)$. This proves \eqref{eq:condsound}. Fact~\ref{fact:domination} with $p=1/3$ and Hoeffding's inequality give $\Pr[\sum_jo_j>k/2]\le\Pr[\mathrm{Bin}(k,1/3)\ge k/2]\le e^{-2k(1/6)^2}=e^{-k/18}$.

The statement about arbitrary constants {$0<s<c<1$} follows in the same way, with the threshold $k/2$ replaced by $k(c+s)/2$.
\end{proof}

\section{Classical messages}\label{sec:cl}

In this section $\GG$ is an arbitrary theory satisfying (T1)--(T4), and we prove that protocols with classical messages decide exactly $\PSPACE$ in every such theory. The proof has two parts. First, the prover's nonclassical resources are useless to it: its behaviour is reproduced by a classical randomised prover, and in fact by a deterministic one. Second, the value of the protocol is computable in polynomial space, {because the verifier's outcome probabilities are path sums.}

\subsection{Conditional product states}\label{sec:product}

Fix a verifier $V$ all of whose message systems are classical, and a prover strategy $P$ with initial state $\tau_0$ and channels $\Pi_1,\dots,\Pi_r$. Reading a classical register without disturbing it is the instrument whose $q$-th outcome map is the projection $\sum_{q'}\delta_{q'}\otimes\beta_{q'}\mapsto\delta_q\otimes\beta_q$ onto the $q$-th summand of Fact~\ref{fact:classical-factor}; its outcome maps are positive on all extensions (composites with a classical factor are direct sums), they sum to the identity, and their matrix entries are $0$ and $1$. The verifier can realise it with its gates by copying the register into fresh classical bits with reversible gates and reading the copies. We may therefore assume that the verifier reads out each outgoing message $Q_i$ before sending it and each incoming message $A_i$ on receipt; this refines its record and does not change the acceptance probability. Let $z$ denote the verifier's complete record, which includes the questions $q_1,\dots,q_r$ and the bit $\acc$, and let $a=(a_1,\dots,a_r)$ be the answers received. The verifier's circuit with the answers fixed on the wires $A_i$ is an adaptive circuit with classical input $a$, so $\Pr_V[z\,|\,a]$ is defined by \eqref{eq:record}. By Fact~\ref{fact:classical-io},
\begin{equation}\label{eq:prover-decomp}
\Pi_i(\delta_q\otimes\tau)=\sum_a\delta_a\otimes\Pi_i^{qa}(\tau),\qquad\Pi_i^{qa}\text{ positive on all extensions},\qquad\sum_au\circ\Pi_i^{qa}=u.
\end{equation}

\begin{lemma}[Product form]\label{lem:product}
In the interaction of $V_x$ with $P$, for every record $(z,a)$ the unnormalised joint state of the verifier's register and the prover's system at the end of round $i$, conditioned on the part of the record produced so far, is a product
\[
\sigma_i(z,a)\otimes\tau_i(q_1,a_1,\dots,q_i,a_i),\qquad\tau_i=\Pi_i^{q_ia_i}\circ\dots\circ\Pi_1^{q_1a_1}(\tau_0),
\]
where $\sigma_i$ depends only on the verifier's circuit, its record and the answers received. Consequently
\begin{equation}\label{eq:factor}
\Pr[z,a]=\Pr_V[z\,|\,a]\cdot\pi(a\,|\,q),\qquad\pi(a\,|\,q):=u\big(\tau_r(q_1,a_1,\dots,q_r,a_r)\big),
\end{equation}
where $q=(q_1,\dots,q_r)$ are the questions recorded in $z$.
\end{lemma}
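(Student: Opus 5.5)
We argue by induction on the round $i$, tracking the unnormalised joint state of the verifier's register $V_i$ and the prover's system $P_i$ together with the part of the record produced so far. By Fact~\ref{fact:classical-factor}, a state of a classical register $R$ composed with any system is a direct sum $\sum_{\rho}\delta_\rho\otimes\xi_\rho$. So "the state conditioned on the partial record $\rho$" is simply the summand $\xi_\rho$ in $\Vp{V_iP_i}$. For $i=0$ the joint state is $\omega_0\otimes\tau_0$, a product by the definition of the interaction.

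\emph{Inductive step.} Suppose that after round $i-1$ the conditioned summand is $\sigma_{i-1}\otimes\tau_{i-1}$. First the verifier runs its stage-$i$ circuit, which acts only on its own wires. Each gate outcome map, applied to $\sigma\otimes\tau$, gives $(M\sigma)\otimes\tau$. Hence, after refining the record by the outcomes of stage $i$ (including the readout of $q_i$, as arranged above), the summand is still a product $\sigma'\otimes\tau_{i-1}$. Since $Q_i$ is read out, $\sigma'=\delta_{q_i}\otimes\sigma''$ with $\sigma''$ on $V_i^-$.

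Next the prover applies $\Pi_i$ to $Q_iP_{i-1}$. By \eqref{eq:prover-decomp} this yields $\sum_{a_i}\delta_{a_i}\otimes\sigma''\otimes\Pi_i^{q_ia_i}(\tau_{i-1})$, up to reordering of tensor factors. The verifier's readout of $A_i$ selects the summand $a_i$, which is again a product, with the prover part $\tau_i=\Pi_i^{q_ia_i}(\tau_{i-1})$. Setting $\sigma_i=\delta_{a_i}\otimes\sigma''$ completes the step. Both $\sigma_i$ and $\tau_i$ lie in the respective cones, because the outcome maps are positive on all extensions. We use linearity of $\Pi_i$ on the direct sum to justify applying it summand by summand.

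\emph{Factorisation \eqref{eq:factor}.} Every $\sigma_i$ is obtained from $\omega_0$ by applying the verifier's outcome matrices for its record and inserting the classical answers $\delta_{a_j}$ on the wires $A_j$. This is exactly the computation \eqref{eq:record} of the verifier's circuit with classical input $a$. After stage $r+1$ we discard everything. Using $u_{VP}=u_V\otimes u_P$ from Definition~\ref{def:composite}, we obtain $\Pr[z,a]=u_V(\sigma_{r+1})\,u_P(\tau_r)=\Pr_V[z\,|\,a]\,\pi(a\,|\,q)$. Here $\tau_r$ depends only on $q,a$ because each $\Pi_j^{q_ja_j}$ does.

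The main obstacle is bookkeeping rather than mathematics. One step needs checking: inserting the classical answer $\delta_{a_i}$ into the verifier's circuit must reproduce exactly the matrix expression \eqref{eq:record} with classical input $a$. In particular, the readout instrument on $A_i$ must act as the identity on the selected summand, so that no extra factor appears. The other step needing care is that the definition of the conditioned state via Fact~\ref{fact:classical-factor} is compatible with sequential refinement of the record. This compatibility follows from Fact~\ref{fact:causal}.
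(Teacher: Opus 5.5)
Your proposal is correct and follows the paper's proof essentially step for step. Both argue by induction over rounds, with the verifier's gates acting as $M\otimes\id_P$ by (T1), the readouts of $Q_i$ and $A_i$ selecting summands of the direct-sum decomposition of Fact~\ref{fact:classical-factor}, the prover's channel splitting via \eqref{eq:prover-decomp}, and the final probability factorising through $u_{VP}=u_V\otimes u_P$; the bookkeeping points you flag are handled just as briefly in the paper.
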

\begin{proof}
Induction on $i$. Before round $1$ the joint state is $\omega_0\otimes\tau_0$. Suppose that the state after round $i-1$, conditioned on the record so far, is $\sigma\otimes\tau$. The verifier's stage-$i$ gates act as $M\otimes\id_P$ by (T1), giving $\sigma'\otimes\tau$. By Fact~\ref{fact:classical-factor} the verifier's register, which contains the classical wire $Q_i$, is in a state $\sum_q\sigma''_q\otimes\delta_q$, and reading $Q_i$ with result $q_i$ leaves $\sigma''_{q_i}\otimes\delta_{q_i}\otimes\tau$. The prover applies $\Pi_i$ to $Q_iP$, producing $\sigma''_{q_i}\otimes\sum_a\delta_a\otimes\Pi_i^{q_ia}(\tau)$ by \eqref{eq:prover-decomp}, and the verifier reads $A_i$ with result $a_i$, leaving $\sigma''_{q_i}\otimes\delta_{a_i}\otimes\Pi_i^{q_ia_i}(\tau)$, which is again a product. After the final stage the state is $\sigma_{\mathrm{final}}\otimes\tau_r$, where $u(\sigma_{\mathrm{final}})=\Pr_V[z\,|\,a]$ because the verifier's outcome maps have been applied with the answers fixed on the wires $A_i$, and the probability of the record is $u(\sigma_{\mathrm{final}})\,u(\tau_r)$.
\end{proof}

\begin{lemma}[Classical simulation of prover responses]\label{lem:classical-sim}
For every prover strategy $P$ in $\GG$ there are conditional distributions $\pi_i(a_i\,|\,q_1,a_1,\dots,q_{i-1},a_{i-1},q_i)$, $i=1,\dots,r$, such that $\Pr[z,a]=\Pr_V[z\,|\,a]\prod_i\pi_i(a_i\,|\,q_{\le i},a_{<i})$ for every record. Thus $P$ produces the same distribution of records, hence the same acceptance probability, as a classical randomised prover with memory that answers according to $\pi_1,\dots,\pi_r$. Conversely, every classical randomised prover is a strategy in $\GG$ by (T3). Hence
\[
\val_\GG(V_x)=\sup\{\Pr[V_x\leftrightarrow P\text{ accepts}]:P\text{ a classical randomised prover}\}.
\]
\end{lemma}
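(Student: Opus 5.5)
The plan is to start from the factorisation \eqref{eq:factor} of Lemma~\ref{lem:product} and split the single function $\pi(a\,|\,q)=u(\tau_r(q_1,a_1,\dots,q_r,a_r))$ into a product of sequential conditionals. Set $T_i(q_{\le i},a_{\le i}):=u\big(\tau_i(q_1,a_1,\dots,q_i,a_i)\big)$, with $T_0:=u(\tau_0)=1$. The normalisation $\sum_{a_i}u\circ\Pi_i^{q_ia_i}=u$ from \eqref{eq:prover-decomp} gives
\[
\sum_{a_i}T_i(q_{\le i},a_{\le i})=T_{i-1}(q_{<i},a_{<i}),
\]
and this holds for every value of $q_i$. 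Positivity of each $\Pi_i^{qa}$ keeps every $\tau_i$ in the cone, so $T_i\ge0$. We then define $\pi_i(a_i\,|\,q_{\le i},a_{<i}):=T_i/T_{i-1}$ whenever $T_{i-1}>0$, and take an arbitrary fixed distribution (say a point mass on the all-zero answer) when $T_{i-1}=0$.

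Each $\pi_i$ is a probability distribution over $a_i$. The product telescopes to $\prod_i\pi_i=T_r=\pi(a\,|\,q)$ whenever every $T_{i-1}>0$. If some $T_{j}=0$, then by the same monotonicity all later $T$'s vanish, because $\tau_j=0$ and the maps are linear. In that case both sides are $0$: the left side is $0$ since $\Pr[z,a]=\Pr_V[z\,|\,a]\cdot0$, and the right side is $0$ since its $j$-th factor is $T_j/T_{j-1}=0$, taking $j$ minimal. Substituting into \eqref{eq:factor} gives $\Pr[z,a]=\Pr_V[z\,|\,a]\prod_i\pi_i$. The one point needing care is that $\pi_i$ must depend only on $q_{\le i}$ and $a_{<i}$, not on later questions. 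This holds because $\tau_i$ is built only from $\Pi_1,\dots,\Pi_i$ and the first $i$ questions and answers, as stated in Lemma~\ref{lem:product}.

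Next we show that a classical randomised prover with memory, answering $q_i$ by sampling $a_i$ from $\pi_i(\cdot\,|\,q_{\le i},a_{<i})$, produces the same record distribution. Apply Lemma~\ref{lem:product} to this prover, now viewed as a strategy in classical theory with private memory holding the transcript. For this prover $\pi(a\,|\,q)$ is exactly $\prod_i\pi_i$, so the record distributions coincide, and summing over records with $\acc=1$ gives equal acceptance probabilities. Alternatively one can argue directly that the verifier's outcome maps with answers fixed, composed with the stochastic maps, give the same expression. The prover's memory update and sampling are stochastic maps between classical registers, hence allowed channels by (T3), so the converse inclusion is immediate. Together the two directions give the displayed equality of suprema.

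The main obstacle is only bookkeeping: the zero-probability branches, and checking that the normalisation identity is available for every fixed $q_i$, including questions the verifier never asks. Both come directly from \eqref{eq:prover-decomp}, which holds for each input $q$ separately.
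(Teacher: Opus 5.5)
Your proposal is correct and is essentially the paper's proof. You define $\pi_i$ as the ratio $u(\tau_i)/u(\tau_{i-1})$, use the normalisation in \eqref{eq:prover-decomp} to show it is a distribution depending only on the history, telescope to $u(\tau_r)=\pi(a\,|\,q)$ with the zero-denominator branches handled separately, and then invoke \eqref{eq:factor}, with (T3) giving the converse. Two small differences are harmless. Your explicit check, applying Lemma~\ref{lem:product} to the simulating classical prover, spells out a step the paper leaves implicit. And your observation that $u(\tau_i)\le u(\tau_{i-1})$ forces all later values to vanish is an equally valid substitute for the paper's appeal to strict positivity of $u$.
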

\begin{proof}
Put $\pi_i(a_i\,|\,q_{\le i},a_{<i})=u(\tau_i)/u(\tau_{i-1})$ when $u(\tau_{i-1})>0$, and an arbitrary distribution otherwise. The normalisation $\sum_au(\Pi_i^{qa}(\tau_{i-1}))=u(\tau_{i-1})$ in \eqref{eq:prover-decomp} shows that $\pi_i$ is a probability distribution in $a_i$ that depends only on the history. Telescoping, $\prod_i\pi_i=u(\tau_r)/u(\tau_0)=\pi(a\,|\,q)$ whenever all denominators are positive; if some $u(\tau_{i-1})=0$, then $\tau_{i-1}=0$ because $u$ is strictly positive on the cone, so $\tau_r=0$ and both sides vanish. Now apply \eqref{eq:factor}. For the converse, a classical randomised prover with memory keeps the transcript in a classical register and applies stochastic maps, which are allowed channels between classical systems by (T3).
\end{proof}

\begin{example}[A guessing game]\label{ex:guessing}
Let the verifier draw two uniformly random bits $y$ and $x$, keep $y$ in a classical register, send $x$ to the prover, and accept if and only if the prover's reply $a$ satisfies $a=x\oplus y$. A classical prover succeeds with probability $1/2$, since $y$ is uniform and independent of everything the prover sees. A prover holding half of a PR box, or any other system of any theory, cannot do better: by Lemma~\ref{lem:product} the joint state of the verifier's register and the prover's system is a product at every stage, conditioned on the recorded transcript, so the prover's reply is distributed as $\pi(a\,|\,x)$ for some conditional distribution, and $\Pr[a=x\oplus y]=\tfrac12$. Classical messages can create correlations between the two parties, but these are fully described by the classical response distributions of Lemma~\ref{lem:classical-sim}. The situation is different when there are two separated provers who share a nonclassical state, as in the CHSH game \cite{CHSH69}, where provers sharing a PR box win with probability $1$ \cite{PR94}; there the verifier can test correlations between systems it never touches.
\end{example}

Lemma~\ref{lem:classical-sim} shows that a nonclassical prover has no advantage over a classical prover when all messages are classical. The proof uses only positivity, normalisation (causality), the uniqueness of composites with a classical factor, and local tomography in the form ``the verifier's gates act as $M\otimes\id_P$''.

\begin{fact}[Deterministic provers suffice \cite{Fel86,Pap85,HO02}]\label{fact:deterministic}
For a verifier with classical messages define, for each complete transcript $(q,a)$, the weight $W(q,a)=\sum_{z\text{ records }q,\ \acc=1}\Pr_V[z\,|\,a]$, and define $F$ on partial transcripts by backward recursion: $F(q,a)=W(q,a)$; $F(h,q)=\max_aF(h,q,a)$ when the next move is an answer; $F(h)=\sum_qF(h,q)$ when the next move is a question. Then $\val_\GG(V_x)=F(\emptyset)$, and the value is attained by a deterministic prover $f=(f_1,\dots,f_r)$ with $a_i=f_i(q_1,\dots,q_i)$.
\end{fact}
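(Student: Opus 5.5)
By Lemma~\ref{lem:classical-sim}, $\val_\GG(V_x)$ is the supremum over classical randomised provers. By Lemma~\ref{lem:product} each such prover is described by conditional distributions $\pi_i(a_i\,|\,q_{\le i},a_{<i})$, and the acceptance probability is
\[
\sum_{q,a}W(q,a)\prod_i\pi_i(a_i\,|\,q_{\le i},a_{<i}),
\]
where $W(q,a)$ is obtained by summing $\Pr_V[z\,|\,a]$ over the records $z$ that contain the questions $q$ and have $\acc=1$. The plan is to show that this expression is at most $F(\emptyset)$ for every choice of the $\pi_i$, and that a deterministic prover attains $F(\emptyset)$.

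\emph{Upper bound.} I will run a backward induction over partial transcripts. For a history $h$ ending after an answer, define the continuation value $G(h)$: the sum over completions $(q_{>i},a_{>i})$ of $W$ times the prover's conditional probabilities for the remaining answers given $h$. The claim to prove is $G(h)\le F(h)$, and similarly for histories ending after a question.

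The base case is a complete transcript, where both sides equal $W(q,a)$. At a question step, $G(h)=\sum_qG(h,q)$ and $F(h)=\sum_qF(h,q)$, so the bound passes termwise. At an answer step, $G(h,q)=\sum_a\pi(a\,|\,h,q)\,G(h,q,a)$. This is a convex combination, so it is at most $\max_aG(h,q,a)\le\max_aF(h,q,a)=F(h,q)$.

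One point needs care: the verifier's questions must not be counted as weighted by a question distribution. The question probabilities are already contained in $\Pr_V[z\,|\,a]$, because the questions appear in the record $z$. That is why the recursion takes a plain sum over $q$ at question steps. No causality issue arises in this step, since $W$ is a fixed nonnegative function and only the prover's factors are optimised.

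\emph{Attainment.} Define $f_i(h,q)$ to be an answer $a$ achieving $\max_aF(h,q,a)$, breaking ties lexicographically. The answers are finite sets of polynomial-length strings, so the maximum exists. A deterministic prover is the special case $\pi_i=\delta_{f_i}$, and the same induction then holds with equality at every step. Such a prover is a classical randomised prover, so by (T3) it is a strategy in $\GG$.

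Two observations make $f$ a function of the questions alone. First, $f_i$ depends on $h$, but $h$ consists of $q_{<i}$ together with answers that are themselves determined by earlier questions. Second, the prover sees the questions because they are sent to it. Together these give $a_i=f_i(q_1,\dots,q_i)$, and hence $\val_\GG(V_x)=F(\emptyset)$.

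The step I expect to need the most care is the bookkeeping that justifies the factorisation and the meaning of $W$. Specifically, one must check that summing $\Pr_V[z\,|\,a]$ over records consistent with $(q,a)$ is well defined even when a partial record is consistent with several question sequences. For histories $h$ of zero probability one uses $W=0$, which holds by the convention in Lemma~\ref{lem:classical-sim}. Once this is in place, the inequality itself is the standard Feldman--Papadimitriou argument.
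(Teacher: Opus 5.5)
Your proof is correct, but it organises the argument differently from the paper.

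The paper derandomises first. It notes that $\mathrm{Acc}(\pi)=\sum_{(q,a)}W(q,a)\prod_i\pi_i(a_i\,|\,q_{\le i},a_{<i})$ is affine in each conditional vector $\pi_i(\cdot\,|\,h)$ separately, because every transcript contains that factor at most once. So the conditionals can be replaced one at a time by point masses without decreasing $\mathrm{Acc}$. Only afterwards does it evaluate the optimum over deterministic provers as the nested sum--max $F(\emptyset)$.

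You instead run a single backward induction that compares the continuation value $G$ of an arbitrary randomised prover with $F$. The step $\sum_a\pi(a\,|\,h,q)\,G(h,q,a)\le\max_aG(h,q,a)$ at answer nodes is where randomness is shown to be useless. The same induction with a greedy choice $f_i(h,q)$ maximising $F(h,q,\cdot)$ then gives equality.

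Your route is more self-contained. The paper's final sentence, that maximising over $f$ ``gives the recursion'', tacitly uses exactly the node-by-node interchange of maximum and sum that your induction proves. The multilinearity argument is shorter and cleanly separates derandomisation from the evaluation of the value.

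Two of your closing worries are unnecessary. First, each complete record $z$ contains the read-out questions, so it determines $q$ uniquely and $W$ is well defined. Second, $W$ depends only on the verifier, so it is not set to $0$ on zero-probability histories. The convention in Lemma~\ref{lem:classical-sim} concerns the prover's conditionals there, and your inequality holds for arbitrary conditionals anyway.
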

\begin{proof}
By Lemma~\ref{lem:classical-sim} the acceptance probability of a classical prover $\pi$ is
\[
\mathrm{Acc}(\pi)=\sum_{(q,a)}W(q,a)\prod_i\pi_i(a_i\,|\,q_{\le i},a_{<i}).
\]
For fixed $i$ and history $h$, each transcript contains the factor $\pi_i(\cdot\,|\,h)$ at most once, so $\mathrm{Acc}$ is affine in the probability vector $\pi_i(\cdot\,|\,h)$ when the other conditionals are fixed. An affine function on a simplex is maximised at a vertex, so replacing the conditionals one at a time by point masses never decreases $\mathrm{Acc}$ and terminates with a deterministic prover. For a deterministic $f$, $\mathrm{Acc}(f)=\sum_{q_1}\cdots\sum_{q_r}W(q,f(q))$, and maximising over $f$, whose choice of $a_i$ may depend on $q_{\le i}$ separately for each history, gives the recursion defining $F(\emptyset)$.
\end{proof}

\subsection{The theorem}

We call a verifier with classical messages \emph{public-coin} if every question $Q_i$ is a string of fresh fair-coin outcomes, sent to the prover unchanged; the verifiers of the protocols of Fact~\ref{fact:IP} are of this form.

\begin{theorem}[Polynomial-space lower bound]\label{thm:lower}
For every theory $\GG$, $\PSPACE\subseteq\GIPcl(\GG)\subseteq\GIP(\GG)$. The same holds with the public-coin subclass of $\GIPcl(\GG)$ in place of $\GIPcl(\GG)$.
\end{theorem}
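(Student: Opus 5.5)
The plan is to simulate the public-coin protocols of Fact~\ref{fact:IP} inside $\GG$, using only the classical resources that (T3)--(T4) guarantee, and then to show that moving to the larger prover class of $\GG$ costs nothing in soundness. Let $L\in\PSPACE$. Fact~\ref{fact:IP} gives a public-coin classical verifier $V^{\mathrm{cl}}$ with perfect completeness and soundness $1/3$ against arbitrary randomised provers. I will turn it into a $\GG$-verifier $V$ that uses only the classical systems $\Delta_2$ and their composites, the fair-coin preparation, the universal reversible classical gates, bit readout and discard. All of these lie in the gate set by (T4). The messages are standard classical registers, as $\GIPcl(\GG)$ requires. A probabilistic polynomial-time machine becomes a polynomial-time uniform adaptive circuit built from Toffoli gates, ancilla preparations (a fixed classical state, or a coin followed by a reset through a stochastic map), and fair coins. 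The questions are strings of fresh coin outcomes sent unchanged, so $V$ is public-coin. Since this translation is routine, I will only note that uniformity is inherited from the polynomial-time description of $V^{\mathrm{cl}}$.

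For completeness, the honest classical prover of $V^{\mathrm{cl}}$ is a classical randomised prover with memory, hence a strategy in $\GG$ by (T3). The interaction is then exactly the classical interaction, so it accepts with probability $1\ge2/3$. For soundness, I take an arbitrary prover $P$ in $\GG$ and invoke Lemma~\ref{lem:classical-sim}. Because all messages are classical, $P$ induces the same distribution of records as some classical randomised prover with memory. Here $\Pr_V[z\,|\,a]$ is the verifier's classical transition probability, so that prover's acceptance probability against $V$ equals its acceptance probability against $V^{\mathrm{cl}}$, which is at most $1/3$. Hence $\val_\GG(V_x)\le1/3$ on no-instances. This gives $L$ in the public-coin subclass of $\GIPcl(\GG)$, and $\GIPcl(\GG)\subseteq\GIP(\GG)$ holds by definition.

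The step that needs care is the identification in the previous paragraph: the circuit realisation of $V^{\mathrm{cl}}$ in $\GG$ must reproduce exactly the conditional probabilities $\Pr_V[z\,|\,a]$ of the original verifier, so that Lemma~\ref{lem:classical-sim}'s simulating prover really is a prover against $V^{\mathrm{cl}}$. This holds because every gate used is a stochastic map on the standard $\Delta_n$ in the coordinates of Example~\ref{ex:classical}. On classical systems, (T1) together with Fact~\ref{fact:classical-factor} forces all composites to be $\Delta_{nm}$, so the evolution of the verifier's register is ordinary probability propagation. A second point is that Fact~\ref{fact:IP} states soundness only against classical randomised provers. The reduction is therefore complete only if Lemma~\ref{lem:classical-sim} covers provers whose private memory is any system of $\GG$, including entangled or boxworld-type memories. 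It does, since that lemma requires only (T1)--(T3).
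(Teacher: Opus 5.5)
Your proposal is correct and follows the paper's proof essentially step for step. You realise the public-coin verifier of Fact~\ref{fact:IP} with the classical gates guaranteed by (T3)--(T4), use the honest classical prover (a strategy in $\GG$ by (T3)) for completeness, and invoke Lemma~\ref{lem:classical-sim} to replace an arbitrary $\GG$-prover by a classical randomised prover with the same record distribution for soundness, noting as the paper implicitly does that classical composites are $\Delta_{nm}$ and that the lemma places no restriction on the prover's private memory.
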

\begin{proof}
The second inclusion holds by definition. For the first, let $L\in\PSPACE$ and let $(V_S,P_S)$ be a public-coin interactive proof system for $L$ with perfect completeness and soundness $1/3$ (Fact~\ref{fact:IP}). By (T3) and (T4) the verifier $V_S$, a probabilistic polynomial-time machine, is a polynomial-time uniform family of adaptive circuits of $\GG$ using classical wires, fair coins, reversible gates and readouts, with all messages classical; call it $V$. If $x\in L_{\mathrm{yes}}$, the honest prover $P_S$ is a classical randomised prover, hence a strategy in $\GG$, and it makes $V_x$ accept with probability $1$. If $x\in L_{\mathrm{no}}$ and $P$ is any strategy in $\GG$, Lemma~\ref{lem:classical-sim} gives a classical randomised prover with the same distribution of records, which by soundness makes $V_S$ accept with probability at most $1/3$. Hence $\val_\GG(V_x)\le1/3$.
\end{proof}

\begin{theorem}[Classical messages]\label{thm:cl-universal}
For every theory $\GG$, $\GIPcl(\GG)=\PSPACE$. The public-coin subclass of $\GIPcl(\GG)$ also equals $\PSPACE$.
\end{theorem}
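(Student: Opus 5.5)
The lower bound $\PSPACE\subseteq\GIPcl(\GG)$, already for the public-coin subclass, is Theorem~\ref{thm:lower}. It therefore remains to show $\GIPcl(\GG)\subseteq\PSPACE$. Fix a verifier $V$ with classical messages witnessing $L\in\GIPcl(\GG)$. We first apply Proposition~\ref{prop:error} so that $\val_\GG(V_x)\ge 2/3$ on yes-instances and $\val_\GG(V_x)\le1/3$ on no-instances, with a constant gap. The plan is to approximate $\val_\GG(V_x)$ to within $1/12$ in polynomial space and accept iff the approximation exceeds $1/2$.

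By Lemma~\ref{lem:classical-sim} and Fact~\ref{fact:deterministic}, $\val_\GG(V_x)=F(\emptyset)$, where $F$ is the alternating sum/max recursion over partial transcripts. The recursion has depth $2r=\poly(n)$, and every question and answer is a string of $\poly(n)$ bits. The leaves are the weights $W(q,a)=\sum_{z}\Pr_V[z\,|\,a]$, summed over records $z$ that contain the questions $q$ and have $\acc=1$. This set of records is polynomial-time decidable given $(q,a)$, so Fact~\ref{fact:pathsum} yields a polynomial-space algorithm computing $\tilde W(q,a)$ with $|\tilde W-W|\le 2^{-t(n)}$ for any chosen polynomial $t$.

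The algorithm evaluates $\tilde F$ by depth-first recursion, as in Feldman's and Papadimitriou's proof of $\IP\subseteq\PSPACE$. At an answer node it enumerates all $a_i\in\{0,1\}^{\poly(n)}$ and keeps a running maximum. At a question node it enumerates all $q_i$ and keeps a running sum. Each level stores only the current string and an accumulator, and the leaf computation reuses its space, so the total space is depth $\times$ polynomial $=\poly(n)$. Rational accumulators stay of polynomial size if every leaf value is rounded to a grid of mesh $2^{-t(n)}$.

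The step requiring care is the error analysis, because sums over exponentially many questions amplify leaf errors. Maxima do not increase additive error. A sum over $2^{m_i}$ question strings multiplies it by at most $2^{m_i}$. With $M=\sum_i m_i=\poly(n)$ the total number of question bits, this gives $|\tilde F(\emptyset)-F(\emptyset)|\le 2^{M}\cdot 2\cdot 2^{-t(n)}$, counting both the approximation and the rounding. Choosing $t(n)=M+5$ makes this at most $1/12$. An alternative is to normalise by the question probabilities, but the crude bound suffices.

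Both inclusions hold for every theory satisfying (T1)--(T4), and public-coin protocols already reach $\PSPACE$. Together these give $\GIPcl(\GG)=\PSPACE$, and the public-coin subclass also equals $\PSPACE$.
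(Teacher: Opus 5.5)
Your proposal is correct and follows the paper's proof essentially step for step. You reduce to the sum--max recursion $F(\emptyset)$ via Lemma~\ref{lem:classical-sim} and Fact~\ref{fact:deterministic}, approximate the leaves $W(q,a)$ with Fact~\ref{fact:pathsum}, evaluate depth-first, and bound the error by $2^{(\text{total question bits})}\cdot 2^{-t(n)}$; the paper uses $2^{rp(n)}2^{-t(n)}$ with $t=rp(n)+4$. The differences are cosmetic: your appeal to Proposition~\ref{prop:error} is unnecessary because $\GIPcl$ is already defined with thresholds $2/3$ and $1/3$, and your explicit rounding of leaf values is a harmless way to make the polynomial size of the accumulators evident.
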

\begin{proof}
By Theorem~\ref{thm:lower} it remains to show $\GIPcl(\GG)\subseteq\PSPACE$. Let $L\in\GIPcl(\GG)$ with verifier $V$, and let $x$ have length $n$. Each $Q_i,A_i$ is a classical register of dimension at most $2^{p(n)}$, so questions and answers are strings of at most $p(n)$ bits, and $r\le p(n)$. We compute $F(\emptyset)$ of Fact~\ref{fact:deterministic} approximately in polynomial space and accept if and only if the result exceeds $1/2$.

\emph{Leaves.} $W(q,a)$ is a sum, over the polynomial-time decidable set of records that are consistent with $q$ and have $\acc=1$, of the probabilities $\Pr_V[z\,|\,a]$ of a polynomial-time uniform adaptive family with classical input $a$. By Fact~\ref{fact:pathsum}, for any polynomial $t$ a polynomial-space algorithm computes $\tilde W(q,a)$ with $|\tilde W(q,a)-W(q,a)|\le2^{-t(n)}$.

\emph{Recursion.} The tree of partial transcripts has depth $2r\le2p(n)$ and branching at most $2^{p(n)}$. We evaluate $F$ with $\tilde W$ in place of $W$ by depth-first search: at a question node we enumerate the $p(n)$-bit labels $q$ and accumulate a sum; at an answer node we enumerate the labels $a$ and keep a running maximum. Each of the $2r$ levels stores the current branch index and one accumulator, a rational number with $\poly(n)$ bits (at most $2^{\poly(n)}$ leaf values with $\poly(n)$ bits each are added per level). The value obtained is $\tilde F(\emptyset)=\max_f\sum_q\tilde W(q,f(q))$, and since each deterministic $f$ contributes at most $2^{rp(n)}$ terms,
\[
|\tilde F(\emptyset)-F(\emptyset)|\le2^{rp(n)}\,2^{-t(n)}\le\tfrac1{12}\qquad\text{for }t(n)=rp(n)+4.
\]
By Fact~\ref{fact:deterministic}, $F(\emptyset)=\val_\GG(V_x)$ is $\ge2/3$ or $\le1/3$, so $\tilde F(\emptyset)$ is $\ge7/12$ or $\le5/12$, and the comparison with $1/2$ decides correctly.

For the public-coin subclass, the upper bound is immediate and the lower bound is Theorem~\ref{thm:lower}.
\end{proof}

\begin{remark}[Where the theory enters]\label{rem:where-theory-enters}
The proof generalises Feldman's argument that $\IP\subseteq\PSPACE$ \cite{Fel86,HO02}, and the alternation of sums and maxima in Fact~\ref{fact:deterministic} is a ``game against nature'' in the sense of Papadimitriou \cite{Pap85}. {The theory $\GG$ determines the weights $W(q,a)$, which are computed using the path-sum representation, computable gates and the Kronecker structure of the verifier's own circuit.} The theorem therefore holds for boxworld, where the verifier may prepare and measure PR boxes internally; for $\GG_{\min}$ and $\GG_E$; for the theories $\GG_v$ and $\GG^{\mathrm Q}_v$ of Examples~\ref{ex:Gv} and~\ref{ex:GQv}, whatever the value of $v$; and for every other theory satisfying the standing assumptions.
\end{remark}

\begin{corollary}\label{cor:cl-iff}
For every theory $\GG$, $\GIP(\GG)=\PSPACE$ if and only if $\GIP(\GG)\subseteq\PSPACE$, i.e., if and only if every protocol with nonclassical messages can be simulated in polynomial space. Section~\ref{sec:beyond} shows that this can fail.
\end{corollary}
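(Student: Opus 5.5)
The corollary is essentially a repackaging of Theorem~\ref{thm:cl-universal} together with the trivial inclusion $\GIPcl(\GG)\subseteq\GIP(\GG)$, so the plan is short. First I will record the lower bound: by Theorem~\ref{thm:lower} (or directly by Theorem~\ref{thm:cl-universal}), $\PSPACE=\GIPcl(\GG)\subseteq\GIP(\GG)$ for every theory $\GG$ satisfying (T1)--(T4). Given this, the equality $\GIP(\GG)=\PSPACE$ holds exactly when the reverse inclusion $\GIP(\GG)\subseteq\PSPACE$ holds. The forward direction of the ``if and only if'' is immediate from the equality, and the backward direction combines the assumed inclusion with the lower bound.

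For the second formulation, I will split an arbitrary verifier according to its message systems. If every message is a designated classical register, the protocol lies in $\GIPcl(\GG)$. Theorem~\ref{thm:cl-universal} already places its problem in $\PSPACE$, by the polynomial-space evaluation of the value $F(\emptyset)$ from Fact~\ref{fact:deterministic}. Hence $\GIP(\GG)\subseteq\PSPACE$ can fail only because of protocols in which some message is nonclassical. So $\GIP(\GG)\subseteq\PSPACE$ is equivalent to the statement that every promise problem decided by such a protocol, with thresholds $(2/3,1/3)$ or any amplified pair via Proposition~\ref{prop:error}, is in $\PSPACE$. I read ``simulated in polynomial space'' in exactly this sense: deciding whether $\val_\GG(V_x)\ge2/3$ or $\le1/3$ with a polynomial-space machine.

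There is no real obstacle; the only point needing care is the interpretation of the informal phrase ``simulated in polynomial space''. I would fix it as deciding the promise problem defined by the verifier's value, rather than computing the value exactly, so that the equivalence is literally the definitional one. The closing sentence about failure is a forward reference to Theorem~\ref{thm:triangle} and needs no argument here.
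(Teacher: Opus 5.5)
Your proposal is correct and follows the paper's own reasoning. The paper states the corollary without a separate proof; it rests on two facts. First, the lower bound $\PSPACE\subseteq\GIPcl(\GG)\subseteq\GIP(\GG)$ of Theorem~\ref{thm:lower} makes the equality equivalent to the upper inclusion. Second, Theorem~\ref{thm:cl-universal} shows that classical-message protocols cannot violate that inclusion, so only protocols with nonclassical messages are at issue.
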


\begin{example}[Classical messages in $\GG_v$ and in boxworld]\label{ex:cl-in-Gv}
In the theory $\GG_v$ of Example~\ref{ex:Gv} the verifier may prepare copies of $e_0$ and measure $(f_y,u-f_y)$ internally, and the prover may send and receive only classical registers. Since $f_y(e_0)=0$, the verifier's own $W_v$-gates never produce a nontrivial outcome, and the real number $v$ plays no role: Theorem~\ref{thm:cl-universal} gives $\GIPcl(\GG_v)=\PSPACE$ for every $v$, even a noncomputable one. In boxworld the verifier may prepare PR boxes and measure them; its outcome probabilities are path sums with entries in $\{0,\pm1,\pm\tfrac12\}$, and again $\GIPcl(\text{boxworld})=\PSPACE$. Whether the class changes when the messages may be nonclassical is a different matter: for $\GG_v$ with a noncomputable $v$ it does (Theorem~\ref{thm:triangle}), for $\GG_v$ with an efficiently approximable $v$ it does not (Proposition~\ref{prop:presentation}), and for boxworld the question is open (Section~\ref{sec:discussion}).
\end{example}

\section{Quantum verifiers}\label{sec:qm}

We now let the verifier be an ordinary polynomial-time quantum circuit, with qubit registers as messages, and allow the prover to use the systems and channels of any theory $\GG$ containing quantum theory (Definition~\ref{def:containsQ}). The prover may hold systems that are not quantum systems, and its joint state with the verifier's qubits may lie outside the quantum state space, as in Example~\ref{ex:foil}. {We represent the prover's responses by \emph{quantum strategies} in the sense of Gutoski and Watrous} \cite{GW07}, also called \emph{quantum combs} \cite{CDP08,CDP09}. We use the name comb for the object and reserve ``strategy'' for the prover's strategy in the sense of Definition~\ref{def:prover}.

\subsection{Quantum combs}

\begin{definition}[Quantum combs]\label{def:comb}
Let $M_1,\dots,M_r$ and $M'_1,\dots,M'_r$ be qubit registers. An $r$-slot quantum comb with inputs $M_i$ and outputs $M'_i$ is a linear map $S:\bigotimes_iV_{M_i}\to\bigotimes_iV_{M'_i}$ for which there are linear maps $S^{(j)}:\bigotimes_{i\le j}V_{M_i}\to\bigotimes_{i\le j}V_{M'_i}$, $S^{(r)}=S$, $S^{(0)}=1$, such that (i) every $S^{(j)}$ is completely positive, and (ii) $(\id\otimes\Tr_{M'_j})\circ S^{(j)}=S^{(j-1)}\otimes\Tr_{M_j}$ for $1\le j\le r$. The comb \emph{induced} by a sequence of quantum channels with memory, $\Pi_i:M_i\otimes P_{i-1}\to M'_i\otimes P_i$, and an initial state $\tau_0$ on $P_0$ is
\[
S=(\id\otimes\Tr_{P_r})\circ(\Pi_r\otimes\id)\circ\dots\circ(\Pi_1\otimes\id)\circ(\tau_0\otimes\id),
\]
with the obvious reorderings of tensor factors, and $S^{(j)}$ is defined likewise with $r$ replaced by $j$.
\end{definition}

Condition (ii) is a causality condition: the output $M'_j$ of slot $j$, once discarded, leaves no trace of the input $M_j$, so the answers of the first $j-1$ rounds cannot depend on the questions of later rounds. In the language of Choi operators, (i) and (ii) are the semidefinite constraints that characterise the strategies of one player in a quantum game \cite{GW07}, and the value of a quantum interactive proof is the optimum of a semidefinite programme over them \cite{GW07,JJUW11}.

\begin{fact}[Realisation of quantum combs \cite{CDP09,GW07}]\label{fact:comb-realization}
Every $r$-slot quantum comb is induced by some sequence of quantum channels with finite-dimensional memory; indeed by isometries followed by a final partial trace. Conversely, every induced map is a quantum comb.
\end{fact}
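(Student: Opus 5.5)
The plan is to prove the two directions separately, working with Choi operators throughout. The converse (every induced map is a comb) is a direct check. Set $S^{(j)}$ to be the map induced by the first $j$ channels, followed by tracing out $P_j$. Each $S^{(j)}$ is a composition of CPTP maps tensored with identities, then a partial trace, so it is completely positive. For condition (ii), trace out $M'_j$ from $S^{(j)}$: this traces out both $M'_j$ and $P_j$ after $\Pi_j$. Trace preservation of $\Pi_j$ turns this into the trace over $M_j\otimes P_{j-1}$ applied before $\Pi_j$, which is exactly $S^{(j-1)}\otimes\Tr_{M_j}$. Only the bookkeeping of tensor-factor reorderings needs care here.

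For the realisation direction, I would build isometries $U_j:M_j\otimes P_{j-1}\to M'_j\otimes P_j$ by induction on $j$. The invariant is that the map induced by $U_1,\dots,U_j$, followed by tracing out $P_j$, equals $S^{(j)}$. Write $J_j$ for the Choi operator of $S^{(j)}$ on $M_{\le j}\otimes M'_{\le j}$. Condition (ii) becomes the identity
\[
\Tr_{M'_j}J_j=J_{j-1}\otimes\one_{M_j}.
\]
For the base case, $J_0=1$ and $P_0$ is trivial, so $\tau_0$ is the trivial state. For the inductive step, assume $U_1,\dots,U_{j-1}$ have been built, with memory $P_{j-1}$. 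The induced map with the memory kept has a purified Choi vector $|\psi_{j-1}\rangle\in M_{<j}\otimes M'_{<j}\otimes P_{j-1}\otimes R$, where $R$ is a reference copy of the inputs. By the invariant, $\Tr_{P_{j-1}}|\psi_{j-1}\rangle\langle\psi_{j-1}|=J_{j-1}$, so $P_{j-1}$ purifies $J_{j-1}$. Next, choose a purification $|\phi_j\rangle$ of $J_j$, with purifying system $P_j$ of dimension $\operatorname{rank}J_j$. Tensoring $|\psi_{j-1}\rangle$ with the maximally entangled vector on $M_j\otimes M_j$ gives a purification of $J_{j-1}\otimes\one_{M_j}$ with purifying system $P_{j-1}$. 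By the identity above, $|\phi_j\rangle$ (regarded on $M_{\le j}\otimes M'_{<j}$, with $M'_j\otimes P_j$ as the purifying system) is another purification of the same operator. Uhlmann's theorem then gives an isometry (or partial isometry, extended to an isometry) from $P_{j-1}$ to $M'_j\otimes P_j$ that maps one purification to the other. Because the Choi isomorphism is applied slotwise, this isometry acts as the required $U_j$ on $M_j\otimes P_{j-1}$. After $r$ steps, tracing out $P_r$ gives $S$, with all memories finite-dimensional.

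I expect the main obstacle to be the Uhlmann step. The difficulty is to ensure that the isometry obtained acts only on $P_{j-1}$ and on the new input $M_j$, not on the earlier registers or the reference system. This rests on the fact that both vectors purify the same operator on $M_{\le j}\otimes M'_{<j}$, so their purifying systems are related by an isometry acting on those systems alone. The maximally-entangled-vector trick converts this into an operation on the actual input $M_j$ together with $P_{j-1}$. I would handle the normalisation (Choi operators have trace $\prod\dim M_i$) by rescaling throughout, and I would handle rank deficiency by enlarging $P_j$ so that a genuine isometry exists.
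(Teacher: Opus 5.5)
Your proposal is correct and is essentially the standard proof behind the citation; the paper gives no argument of its own. The converse is the direct causality check, and your realisation direction is the Gutoski--Watrous induction by unitary equivalence of purifications, of which Chiribella, D'Ariano and Perinotti give an explicit version, with isometries built from square roots and pseudo-inverses of $J_j$ and $J_{j-1}$.

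One slip needs fixing. $\ket{\psi_{j-1}}$ tensored with the maximally entangled vector on $M_j\otimes\tilde M_j$ (with $\tilde M_j$ the reference copy) purifies $J_{j-1}\otimes\one_{\tilde M_j}$ with purifying system $M_j\otimes P_{j-1}$, not $P_{j-1}$ alone. This is exactly why Uhlmann's theorem returns a map $U_j:M_j\otimes P_{j-1}\to M'_j\otimes P_j$ acting on the right registers, so the obstacle you anticipate does not arise. No rescaling is needed either, since the argument works for unnormalised vectors.
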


\subsection{The theorem}

\begin{theorem}[Quantum verifiers]\label{thm:qm-universal}
Let $\GG$ be a theory containing quantum theory. For every quantum verifier $V$ and every input $x$,
\[
\val_\GG(V_x)=\val_{\mathrm{quantum}}(V_x),
\]
the value over quantum provers. Consequently $\GIPqm(\GG;k)=\QIP(k)$ for every $k$, and $\GIPqm(\GG)=\GIPqm(\GG;3)=\QIP=\PSPACE$.
\end{theorem}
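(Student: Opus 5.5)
The inequality $\val_\GG(V_x)\ge\val_{\mathrm{quantum}}(V_x)$ is immediate: by Definition~\ref{def:containsQ}(ii) every quantum prover (qubit memory, CPTP maps) is a prover strategy in $\GG$. For the reverse inequality I will show that every prover strategy $P$ in $\GG$ induces a linear map $S:\bigotimes_iV_{Q_i}\to\bigotimes_iV_{A_i}$ that is a quantum comb in the sense of Definition~\ref{def:comb}. I will also show that the acceptance probability against $V_x$ depends on $P$ only through $S$. Fact~\ref{fact:comb-realization} then yields a quantum prover inducing the same $S$, hence achieving the same acceptance probability. The class statements follow from Lemma~\ref{lem:simulation}: a $\GIPqm(\GG;k)$ verifier is exactly a $\QIP(k)$ verifier, and the equality of values preserves completeness and soundness in both directions. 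Then $\GIPqm(\GG;k)=\QIP(k)$, and Fact~\ref{fact:QIP} gives $\GIPqm(\GG)=\GIPqm(\GG;3)=\PSPACE$.

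\emph{Step 1: define $S^{(j)}$.} I will set $S^{(j)}=(\id\otimes u_{P_j})\circ(\Pi_j\otimes\id)\circ\dots\circ(\Pi_1\otimes\id)\circ(\tau_0\otimes\id)$, mapping $\bigotimes_{i\le j}V_{Q_i}\to\bigotimes_{i\le j}V_{A_i}$. This is well defined as a linear map by (T1), since all spaces are tensor products. The causality condition (ii) follows from $u_{A_jP_j}\circ\Pi_j=u_{Q_jP_{j-1}}$ together with $u_{A_jP_j}=u_{A_j}\otimes u_{P_j}$, which is (T2). So $(\id\otimes\Tr_{A_j})\circ S^{(j)}=S^{(j-1)}\otimes\Tr_{Q_j}$.

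\emph{Step 2: complete positivity.} This is the main obstacle. I will tensor $S^{(j)}$ with $\id_{V_R}$ for an arbitrary qubit register $R$ and apply it to a positive semidefinite operator $\xi$ on $Q_{\le j}R$. The composite $Q_{\le j}R$ is a qubit register, so $\xi\in\Vp{Q_{\le j}R}$ by Definition~\ref{def:containsQ}(i). I will then show inductively that applying $\tau_0\otimes\id$ and the channels $\Pi_i\otimes\id$ keeps the vector in the cone of the appropriate $\GG$-composite of $P_i$ with qubit registers:
\begin{itemize}
\item $\tau_0\otimes\xi$ lies in the cone because composites contain the minimal tensor product (Definition~\ref{def:composite}).
\item Each $\Pi_i$ is positive on all extensions with respect to all systems of $\GG$, in particular the qubit-register extension.
\end{itemize}
Finally, $\id\otimes u_{P_j}$ maps the cone of $A_{\le j}R\,P_j$ into the cone of $A_{\le j}R$ by Fact~\ref{fact:marginal}, and that cone is $\Pos$ by Definition~\ref{def:containsQ}(i). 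Hence $S^{(j)}\otimes\id_R$ is positive for every qubit $R$, so $S^{(j)}$ is completely positive by Choi's theorem (a single extension by $R\cong Q_{\le j}$ suffices). The delicate points are bookkeeping the reorderings of tensor factors via the symmetry of composition, and ensuring that the Hermitian identification of (T1) matches the $\GG$-composite. I expect this matching to be exactly what Definition~\ref{def:containsQ}(i) guarantees.

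\emph{Step 3: the acceptance probability depends only on $S$.} The verifier's circuit with its classical record is a sequence of CP instruments on qubit registers; call its retained register $R$. I will write the closed interaction as in Lemma~\ref{lem:product}: the unnormalised final state is obtained by interleaving the verifier's maps $M\otimes\id_P$ with $\Pi_i\otimes\id_R$. For a fixed record I will expand the verifier's outcome maps in product form, using a basis of operators on the message spaces. The probability is then a linear functional of the prover's map $(\id\otimes u_{P_r})\circ\Pi_r\circ\dots\circ\tau_0$, i.e.\ of $S$, with the verifier's operators fixed. Concretely, I will use the standard link-product identity $\Pr[z]=\langle$verifier's co-comb for record $z$, Choi$(S)\rangle$, which is a pure linear-algebra consequence of (T1), since everything is a multilinear contraction. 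The identical formula holds for the quantum prover realising $S$, so the two probabilities coincide record by record, and in particular for $\acc=1$.
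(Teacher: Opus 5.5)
Your proposal is correct and follows the paper's proof essentially step for step. It uses the same induced maps $S^{(j)}$, complete positivity from positivity of $\Pi_i\otimes\id$ on qubit extensions plus Choi's theorem, the causality condition from unit preservation, linearity of the acceptance probability in $S$ obtained by contracting the prover's indices first, and the comb realisation theorem. The one cosmetic difference is your citation of Lemma~\ref{lem:simulation}, which is stated for $\GIP$ rather than $\GIPqm$. This does not matter, because the direct argument you give alongside it (the same verifiers, with equal values, witness membership on both sides) is exactly what the paper uses.
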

\begin{proof}
Quantum strategies are strategies of $\GG$ by Definition~\ref{def:containsQ}(ii), so $\val_\GG\ge\val_{\mathrm{quantum}}$. For the converse, let $P=(\tau_0;P_0,\dots,P_r;\Pi_1,\dots,\Pi_r)$ be a strategy of $\GG$ against $V_x$, where $\Pi_i\in\Chan{M_iP_{i-1}}{M'_iP_i}$ with $M_i=Q^{\otimes m_i}$ the question register and $M'_i=Q^{\otimes m'_i}$ the answer register of round $i$. Define $S$ and $S^{(j)}$ from $(\tau_0,\Pi_i)$ by the formula of Definition~\ref{def:comb}; these are linear maps between the real spaces $\bigotimes_iV_{M_i}$ and $\bigotimes_iV_{M'_i}$, which by (T1) and Definition~\ref{def:containsQ}(i) are the spaces of Hermitian matrices on the corresponding qubit registers.

\emph{Claim 1: $S$ is a quantum comb.} Let $k\ge0$, $R=Q^{\otimes k}$, and let $\xi\in\Pos(M_1\cdots M_rR)$. By Definition~\ref{def:containsQ}(i), $\xi\in\Vp{M_1\cdots M_rR}$. Products of states lie in composites (Definition~\ref{def:composite}), so $\tau_0\otimes\xi\in\Vp{P_0M_1\cdots M_rR}$. Each $\Pi_i\otimes\id$ is positive on all extensions of $\GG$, and the systems $M_{i+1}\cdots M_rR$ and $M'_1\cdots M'_{i-1}$ on which it acts as the identity are systems of $\GG$; the final discard of $P_r$ is likewise an allowed channel. Hence $(S\otimes\id_R)(\xi)\in\Vp{M'_1\cdots M'_rR}=\Pos(M'_1\cdots M'_rR)$. Thus $S\otimes\id_{M_{2^k}(\C)}$ is positive for every $k$, so $S$ is completely positive by Choi's theorem \cite{Cho75}; the same argument applies to each $S^{(j)}$. Condition (ii) of Definition~\ref{def:comb} follows from causality: since $\Pi_j$ is a channel, $(\Tr_{M'_j}\otimes u_{P_j})\circ\Pi_j=\Tr_{M_j}\otimes u_{P_{j-1}}$, so discarding $M'_j$ and $P_j$ after $\Pi_j$ is the same as discarding $M_j$ and $P_{j-1}$ before it.

\emph{Claim 2: the acceptance probability depends only on $S$.} By (T1), all state spaces in the interaction are tensor products of the spaces of the systems involved, and every operation of the verifier and of the prover is a linear map between such tensor products. Expanding all maps in product bases, the acceptance probability is the full contraction of a tensor network. The prover's tensors ($\tau_0$, $\Pi_1,\dots,\Pi_r$, $u_{P_r}$) are joined to one another by the indices of $P_0,\dots,P_r$ and to the verifier's tensors only by the indices of $M_1,\dots,M_r,M'_1,\dots,M'_r$; the verifier's tensors carry no $P$-index. A tensor contraction may be carried out in any order, and contracting the $P$-indices first replaces the prover's tensors by the single tensor $S$. Hence the acceptance probability is a fixed linear functional, determined by $V_x$, evaluated on $S$.

\emph{Claim 3: a quantum prover induces $S$.} By Claim~1 and Fact~\ref{fact:comb-realization} there is a quantum strategy $\tilde P=(\rho_0;\tilde P_0,\dots,\tilde P_r;\tilde\Pi_1,\dots,\tilde\Pi_r)$ with finite-dimensional quantum memories that induces the same comb $S$. By Claim~2, $\Pr[V_x\leftrightarrow\tilde P\text{ accepts}]=\Pr[V_x\leftrightarrow P\text{ accepts}]$.

Taking the supremum over $P$ gives $\val_\GG(V_x)\le\val_{\mathrm{quantum}}(V_x)$. For the class statements, note that the quantum verifiers for $\GG$ are exactly the $\QIP$ verifiers, with the same count of messages, and that for each of them the two values coincide; so the same verifiers witness membership in $\GIPqm(\GG;k)$ and in $\QIP(k)$, and Fact~\ref{fact:QIP} completes the proof.
\end{proof}

\begin{remark}\label{rem:qm}
(i) The theorem needs neither a bound on the prover's memory, nor a purification principle in $\GG$, nor any information about the composites of quantum systems with the other systems of $\GG$. It uses only {local tomography (T1), that} allowed channels are positive on every extension and preserve the unit effect (T2), that products of states are states, and that composites of qubits are the ordinary quantum composites. The realisation theorem then supplies a quantum prover with the same acceptance probability. (ii) With classical messages the analogous statement is Lemma~\ref{lem:classical-sim}; there the induced object is a classical conditional distribution, and its realisation by a classical prover is immediate. (iii) The last hypothesis, that composites of qubits are the quantum composites, cannot be dropped: Example~\ref{ex:unot} exhibits a verifier that uses only single-qubit operations and whose value in $\GG_{\min}$ exceeds its quantum value.
\end{remark}

{For bipartite correlations, Barnum et al. show that local quantum measurements and no-signalling imply quantum correlations \cite{BBBEW10}. Theorem~\ref{thm:qm-universal} concerns sequential strategies with quantum messages.}

\subsection{Examples}

\begin{example}[One round]\label{ex:one-round}
In a single question--answer round the comb is the channel $S=(\id\otimes u_{P_1})\circ\Pi_1\circ(\id\otimes\tau_0)$ from the question register to the answer register, obtained by preparing the prover's memory, applying its channel and discarding the memory. Claim~1 says that positivity of $\Pi_1$ on every extension by a qubit register makes $S$ completely positive, so a prover of $\GG$ acts on the message exactly as some quantum channel. With several rounds one must also ensure that no answer depends on later questions; condition (ii) of Definition~\ref{def:comb} expresses exactly this.
\end{example}

\begin{example}[A prover of $\GG_E$]\label{ex:foil-steering}
Let the prover of the theory $\GG_E$ (Example~\ref{ex:foil}) prepare the state $\tfrac12\mathrm{SWAP}$ on $Q\otimes E$, send $Q$ to the verifier and keep $E$. Any measurement the prover later performs on $E$, with effects $(g_a)_a$ satisfying $0\le g_a\le\one$ and $\sum_ag_a=\one$, steers the verifier's qubit to the unnormalised states
\[
(\id_Q\otimes g_a)(\tfrac12\mathrm{SWAP})=\tfrac12\Tr_E\big[(\one\otimes g_a)\,\mathrm{SWAP}\big]=\tfrac12\,g_a,
\]
using $\mathrm{SWAP}=\sum_{i,j}\ket{i}\bra{j}\otimes\ket{j}\bra{i}$. A quantum prover that holds half of the maximally entangled state $\Phi^+$ and measures its half with the transposed effects $(g_a^{\mathsf T})_a$ steers the verifier's qubit to $(\id\otimes g_a^{\mathsf T})(\Phi^+)=\tfrac12g_a$, exactly the same states. In the language of Theorem~\ref{thm:qm-universal}, the two strategies induce the same comb, so the state outside the quantum cone gives the prover of $\GG_E$ no advantage.
\end{example}

\begin{example}[A spin-flip test]\label{ex:unot}
Let $\sigma_x,\sigma_y,\sigma_z$ be the Pauli matrices. The verifier tosses three fair coins $b_1,b_2,b_3$: it chooses the axis $a=z$ if $b_1=0$, and $a=x$ or $a=y$ according to $b_2$ if $b_1=1$, and the sign $s=(-1)^{b_3}$. It prepares a qubit in the eigenstate $\ket\psi$ of $\sigma_a$ with eigenvalue $s$, sends it to the prover, receives a qubit back, measures $\sigma_a$ on it, and accepts if and only if the outcome is $-s$. In words, the verifier asks the prover to flip the spin: to map $\ket\psi$ to its orthogonal complement. The six states $\ket0,\ket1,\ket\pm,\ket{\pm i}$ are prepared and measured with the gates $H$ and $T$ (using $S=T^2$ and $X=HT^4H$), $\ket0$-preparation and computational-basis measurement on a single qubit register, so this is a verifier for $\QC$ and also for $\GG_{\min}$ of Example~\ref{ex:Gmin}. Quantum theory has no universal spin flip \cite{BHW99,GP99}, and we now compute the value in both theories.

\emph{In $\QC$, and in every theory containing quantum theory, the value is $3/4$.} By Theorem~\ref{thm:qm-universal} we may take the prover to be quantum, and by Example~\ref{ex:one-round} its action on the message is a qubit channel $\Phi$. Write $\Phi$ in Bloch form: the state with Bloch vector $n$ is mapped to the state with Bloch vector $Mn+t$, where $M$ is a real $3\times3$ matrix. On the input $\ket\psi$ with Bloch vector $se_a$, the measurement of $\sigma_a$ gives the outcome $-s$ with probability $\tfrac12(1-M_{aa}-s\,e_a\cdot t)$, and averaging over $s=\pm1$ removes $t$. With the weights $w_z=\tfrac12$ and $w_x=w_y=\tfrac14$ of the three axes, the acceptance probability is
\[
\tfrac12-\tfrac12\big(\tfrac14M_{xx}+\tfrac14M_{yy}+\tfrac12M_{zz}\big).
\]
The diagonal $(M_{xx},M_{yy},M_{zz})$ of a qubit channel lies in the tetrahedron
\[
T=\conv\{(1,1,1),\ (1,-1,-1),\ (-1,1,-1),\ (-1,-1,1)\}.
\]
To see this, let $J=(\Phi\otimes\id)(\ketbra{\Phi^+})$ be the Choi state of $\Phi$, which is positive semidefinite with unit trace, let $\ket{B_k}=(\sigma_k\otimes\one)\ket{\Phi^+}$, $k\in\{0,x,y,z\}$ with $\sigma_0=\one$, be the Bell basis, and let $p_k=\bra{B_k}J\ket{B_k}\ge0$, so that $\sum_kp_k=1$. The identity $\Tr[(A\otimes B)J]=\tfrac12\Tr[A\,\Phi(B^{\mathsf T})]$ gives $M_{aa}=\tfrac12\Tr[\sigma_a\Phi(\sigma_a)]=\Tr[(\sigma_a\otimes\sigma_a^{\mathsf T})J]$, and $(\sigma_a\otimes\sigma_a^{\mathsf T})\ket{B_k}=(\sigma_a\sigma_k\sigma_a\otimes\one)\ket{\Phi^+}=\pm\ket{B_k}$ with the sign $+$ exactly when $k\in\{0,a\}$. Hence $M_{aa}=\sum_k\pm p_k$, and
\[
(M_{xx},M_{yy},M_{zz})=p_0(1,1,1)+p_x(1,-1,-1)+p_y(-1,1,-1)+p_z(-1,-1,1)\in T.
\]
The linear function $\tfrac14\lambda_x+\tfrac14\lambda_y+\tfrac12\lambda_z$ takes the values $1,-\tfrac12,-\tfrac12,0$ at the four vertices, so it is at least $-\tfrac12$ on $T$, and the acceptance probability is at most $\tfrac12+\tfrac14=\tfrac34$. The bound is attained by $\Phi(\rho)=\sigma_x\rho\sigma_x$, which flips $\ket0,\ket1,\ket{+i},\ket{-i}$ and fixes $\ket\pm$: its acceptance probability is $\tfrac12+\tfrac14=\tfrac34$.

\emph{In $\GG_{\min}$ the value is $1$.} The map $R(\rho)=\Tr(\rho)\one-\rho$ is a channel of $\GG_{\min}$ (Example~\ref{ex:Gmin}), and $R(\ketbra\psi)=\one-\ketbra\psi$ is the orthogonal complement of every pure qubit state $\ket\psi$. A prover applying $R$ to the message is accepted with probability $1$.

The two theories have the same elementary systems and the same single-qubit gates, and the verifier is the same circuit. What differs is the composition rule for two qubits, which decides whether $R$ is positive on all extensions. Theorem~\ref{thm:qm-universal} applies only when the composites of qubits are the quantum ones; then not even a prover of $\GG_E$ holding the state $\tfrac12\mathrm{SWAP}$ of Example~\ref{ex:foil-steering} can exceed $3/4$.
\end{example}

\subsection{Consequences}

\begin{corollary}[Instances]\label{cor:instances}
The general theorems specialise as follows.
\begin{enumerate}[label=(\roman*)]
\item For $\GG=$ classical theory, Theorem~\ref{thm:cl-universal} reads $\IP=\PSPACE$; for $\GG=\QC$, Theorem~\ref{thm:qm-universal} reads $\QIP=\PSPACE$ and $\QIP(3)=\PSPACE$.
\item For every theory $\GG$ containing quantum theory, $\GIPcl(\GG)=\GIPqm(\GG)=\PSPACE$, and $\GIP(\GG)\supseteq\PSPACE$, with equality if and only if all protocols outside the classical-message subclass and the quantum-verifier subclass can be simulated in polynomial space.
\item $\GIPqm(\GG_E)=\PSPACE$ for the theory of Example~\ref{ex:foil}.
\end{enumerate}
\end{corollary}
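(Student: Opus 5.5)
The plan is to prove the three items in turn, each by a direct appeal to the general theorems already established.

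For item (i), take $\GG$ to be classical theory. By Definition~\ref{def:classes}, $\GIPcl$ of classical theory is $\IP$, since every system is classical and a prover strategy is exactly a randomised prover with memory. Theorem~\ref{thm:cl-universal} then gives $\IP=\PSPACE$ directly. For $\GG=\QC$, Definition~\ref{def:classes} identifies $\GIPqm(\QC;k)$ with $\QIP(k)$. Theorem~\ref{thm:qm-universal} gives $\GIPqm(\QC)=\GIPqm(\QC;3)=\PSPACE$, so $\QIP=\QIP(3)=\PSPACE$. I will note explicitly that this instance is not an independent proof. The last step of Theorem~\ref{thm:qm-universal} invokes Fact~\ref{fact:QIP}, so for $\QC$ the ``specialisation'' only restates Fact~\ref{fact:QIP} inside the framework. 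The content added by the general theorem is the invariance under enlarging the prover's theory, not a new proof of $\QIP(3)=\PSPACE$.

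For item (ii), let $\GG$ contain quantum theory.
\begin{enumerate}[label=(\alph*)]
\item Theorem~\ref{thm:cl-universal} applies to every theory satisfying (T1)--(T4), so $\GIPcl(\GG)=\PSPACE$.
\item Theorem~\ref{thm:qm-universal} gives $\GIPqm(\GG)=\PSPACE$.
\item Since $\GIPcl(\GG)\subseteq\GIP(\GG)$ by definition, Theorem~\ref{thm:lower} gives $\PSPACE\subseteq\GIP(\GG)$.
\end{enumerate}
For the biconditional, write $\GIP(\GG)$ as the union of the classes witnessed by individual verifiers. The verifiers in the two named subclasses already yield only $\PSPACE$ languages. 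Hence $\GIP(\GG)=\PSPACE$ holds exactly when every language witnessed by one of the remaining verifiers lies in $\PSPACE$. That is the sense of ``can be simulated in polynomial space'', parallel to Corollary~\ref{cor:cl-iff}.

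For item (iii), I only need to check that $\GG_E$ contains quantum theory in the sense of Definition~\ref{def:containsQ}, and then apply item (ii).
\begin{enumerate}[label=(\alph*)]
\item Condition (i) holds because $C_{a,0}=\Pos_{2^a}$ by self-duality.
\item Condition (ii) holds because completely positive maps were shown in Example~\ref{ex:foil} to be positive on all extensions.
\item Condition (iii) holds because the gate set of $\GG_E$ includes the quantum gates.
\end{enumerate}

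None of these steps is a real obstacle. The only point needing care is the interpretive one in item (ii): making precise that ``equality iff the rest is simulable'' is a tautological decomposition of $\GIP(\GG)$ into the languages witnessed by individual verifiers, rather than a further theorem.
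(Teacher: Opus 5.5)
Your proposal is correct and matches the paper's proof: (i) is read off from the definitions of the classes together with Theorems~\ref{thm:cl-universal} and~\ref{thm:qm-universal}; (ii) combines Theorems~\ref{thm:lower}, \ref{thm:cl-universal} and~\ref{thm:qm-universal}; and (iii) is Theorem~\ref{thm:qm-universal} applied to $\GG_E$, whose containment of quantum theory you check explicitly where the paper simply cites Example~\ref{ex:foil}. Your caveat that the $\QC$ instance only restates the imported Fact~\ref{fact:QIP} agrees with the paper's own comment after the corollary, which makes the same point about Shamir's theorem in the classical instance.
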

\begin{proof}
(i) is the definition of the classes. (ii) combines Theorems~\ref{thm:lower}, \ref{thm:cl-universal} and~\ref{thm:qm-universal}. For (iii), apply Theorem~\ref{thm:qm-universal}, since $\GG_E$ contains quantum theory.
\end{proof}

The lower bound in (i) is Shamir's theorem, imported as a black box in Theorem~\ref{thm:lower}, and the upper bound $\QIP\subseteq\PSPACE$ is the theorem of Jain, Ji, Upadhyay and Watrous, imported in Theorem~\ref{thm:qm-universal}. What the general theorems add is the reduction of arbitrary provers to classical or quantum ones, and, for classical messages, a theory-independent polynomial-space upper bound.

{In quantum theory with classical systems, allowing classical registers as messages, in addition to qubit registers, does not change the class. We use this in Section~\ref{sec:beyond}.}

\begin{lemma}[Classical registers as messages in quantum theory]\label{lem:QC}
Let $\QC$ be quantum theory with classical systems, possibly with finitely many classical systems $\Delta_d$ of fixed dimensions adjoined as elementary systems, each with its standard coordinates and with the gates preparation of $\delta_1$, readout of the label and discard. Then $\GIP(\QC;k)=\QIP(k)$ for every $k$, and $\GIP(\QC)=\QIP=\PSPACE$. {The fixed-message equality holds with any fixed thresholds $0<s<c<1$ imposed on both sides; for polynomially many messages these thresholds give the same class $\PSPACE$.}
\end{lemma}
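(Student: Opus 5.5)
We prove the two inclusions $\QIP(k)\subseteq\GIP(\QC;k)$ and $\GIP(\QC;k)\subseteq\QIP(k)$ separately, both at fixed thresholds $(c,s)$ and with the number of messages preserved.

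\emph{The inclusion $\QIP(k)\subseteq\GIP(\QC;k)$.} A $\QIP$ verifier is a verifier for $\QC$. Its provers in $\QC$ act on qubit messages but may use classical systems internally. A classical register is simulated by a diagonal qubit register, and a channel involving it is a CPTP map, so such a prover is already a quantum prover. Hence the two values coincide. Alternatively, $\QC$ contains quantum theory, and Theorem~\ref{thm:qm-universal} applies directly.

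\emph{The inclusion $\GIP(\QC;k)\subseteq\QIP(k)$.} Given a verifier $V$ for $\QC$ whose messages may be classical--quantum registers, we build a quantum verifier $V'$ with the same number of messages and the same value.
\begin{enumerate}[label=(\arabic*)]
\item Encode each classical system $\Delta_d$, whether it is a message or private, by $\lceil\log_2 d\rceil$ qubits.
\item Replace the gates on $\Delta_d$ (preparation of $\delta_1$, readout, discard, classical reversible gates) by their computational-basis quantum counterparts.
\item Whenever $V'$ receives a message register that $V$ treats as classical, it first measures those qubits in the computational basis. This forces the register to be classical and realises the pinching channel.
\item If a label lies outside $\{1,\dots,d\}$ because of padding, $V'$ maps it to a fixed label.
\end{enumerate}
This gives a polynomial-time uniform $\QIP$ verifier.

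\emph{Completeness.} Any $\QC$ prover for $V$ becomes a quantum prover for $V'$ by the same encoding, with identical statistics.

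\emph{Soundness.} Take any quantum prover $P'$ for $V'$. Composing each of its channels with the dephasing on the classical-message qubits, together with the label clamping, gives a CPTP map whose outputs are classical--quantum states. This is a legitimate $\QC$ prover $P$ for $V$. The acceptance probability of $P$ against $V$ equals that of $P'$ against $V'$, because $V'$ applies exactly this dephasing first and thereafter behaves as $V$ under the encoding.

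Outgoing classical questions need no special care, since a prover may treat them as it likes. For questions, too, we let the verifier measure before sending, so the transcript semantics are unchanged.

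\emph{The main step.} The main step is to check that dephasing by the verifier commutes with the prover's strategy at the level of the induced comb. I expect this to follow from Fact~\ref{fact:causal} and the linearity argument of Claim~2 in Theorem~\ref{thm:qm-universal}. The only thing to check is that the verifier's measurement can be absorbed into the prover's last channel of that round.

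\emph{Thresholds and the unbounded case.} The identity of values gives the fixed-message equality for any fixed thresholds $0<s<c<1$. For polynomially many messages, Proposition~\ref{prop:error} together with Fact~\ref{fact:QIP} gives $\GIP(\QC)=\QIP=\PSPACE$, independently of the thresholds.
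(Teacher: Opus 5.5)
Your proposal is correct and follows essentially the same route as the paper. The easy inclusion comes from Theorem~\ref{thm:qm-universal} (or your direct embedding), and the converse encodes classical labels as computational-basis qubits that $V'$ measures on receipt, translates provers in both directions, and handles thresholds via Proposition~\ref{prop:error} and Fact~\ref{fact:QIP}. The step you flag as the main one needs no comb-level argument: because $V'$ measures the label qubits immediately on receipt and that dephasing is idempotent, appending it to the prover's channel changes nothing, which is exactly how the paper concludes.
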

\begin{proof}
$\QIP(k)\subseteq\GIP(\QC;k)$ because a $\QIP$ verifier is a $\QC$-verifier and, by Theorem~\ref{thm:qm-universal}, its value over $\QC$-provers equals its quantum value. For the converse, let $V$ be a verifier for $\QC$. Its message registers are composites of qubits and classical systems, i.e., classical--quantum registers with a classical label $c$ and a quantum part. Define a quantum verifier $V'$ by replacing each classical label register in a message by a qubit register of $\lceil\log_2d\rceil$ qubits per elementary classical factor, which $V'$ prepares in the computational-basis state $\ket c$ when $V$ would send the label $c$, and which $V'$ measures in the computational basis immediately upon receipt, treating the outcome as the label received (an outcome that does not encode a label is treated as the label $1$); the rest of the circuit is unchanged, the verifier's internal classical registers being part of the classical control of the quantum circuit. Then $V'$ has the same number of messages as $V$, and we claim $\val_{\mathrm{quantum}}(V'_x)=\val_{\QC}(V_x)$.

Given a $\QC$-prover $P$ for $V$, with channels $\Pi_i$, define a quantum prover $P'$ for $V'$ whose $i$-th channel first measures the label qubits of the incoming question in the computational basis (which does not disturb the computational-basis state $\ket c$ that $V'$ sends), then applies $\Pi_i$, and finally encodes the classical label of the answer as a computational-basis state; this is a CPTP map. (The classical parts of the prover's private memory are encoded in computational-basis states in the same way, so that $P'$ has quantum memory.) By induction on the rounds, the joint states in $V'\leftrightarrow P'$ are the encodings of those in $V\leftrightarrow P$, so the acceptance probabilities coincide. Conversely, given a quantum prover $P'$ for $V'$, define a $\QC$-prover $P$ for $V$ whose $i$-th channel reads the classical label of the question, prepares $\ket c$ on label qubits, applies the channel of $P'$, and measures the label qubits of the answer in the computational basis, recording the outcome in the classical answer register with the same convention for outcomes that do not encode a label. Since $V'$ measures the label qubits immediately anyway, the interactions $V\leftrightarrow P$ and $V'\leftrightarrow P'$ again have the same joint states after every round and the same acceptance probability. Hence $\val_{\QC}(V_x)=\val_{\mathrm{quantum}}(V'_x)$, and $L\in\GIP(\QC;k)$ implies $L\in\QIP(k)$ with the same thresholds. For polynomially many messages, Proposition~\ref{prop:error} applied to $\QC$ reduces arbitrary constant thresholds to $(2/3,1/3)$, and Fact~\ref{fact:QIP} gives $\PSPACE$.
\end{proof}

\section{Arbitrary messages: beyond polynomial space}\label{sec:beyond}

The standing assumptions constrain the verifier's gates but not the state spaces of the systems it measures. We now use this freedom to construct theories in which the geometry of a message state space encodes a real number that is hard, or impossible, to compute. A prover can send states that allow the verifier to estimate this number, even when the theory is isomorphic to quantum theory with classical systems and contains both quantum theory and classical theory as subtheories. {We then show that the phenomenon disappears when the transported verifier gates can be simulated efficiently by the standard gates.}

\subsection{The triangular theory and the problem \texorpdfstring{$L_v$}{L\_v}}\label{sec:triangle}

Recall the triangular system $W_v$ of Example~\ref{ex:triangle} and the theory $\GG_v$ of Example~\ref{ex:Gv}. For $v\in(0,1)$ define the promise problem $L_v=(L_{v,\mathrm{yes}},L_{v,\mathrm{no}})$ by
\[
L_{v,\mathrm{yes}}=\{t\in\{0,1\}^{+}:v\ge\bar t+|t|^{-2}\},\qquad L_{v,\mathrm{no}}=\{t\in\{0,1\}^{+}:v\le\bar t-|t|^{-2}\},
\]
where $\bar t=\sum_{j=1}^{|t|}t_j2^{-j}\in[0,1)$ is the binary fraction with digits $t$. The promise gap shrinks polynomially with the input length, so an algorithm for $L_v$ must in effect approximate $v$ to inverse-polynomial precision. A real number $v$ is \emph{computable} if some algorithm, given $k$, outputs a rational number within $2^{-k}$ of $v$ \cite{Tur37,Wei00}; a sequence of rationals $(w_m)$ with $|v-w_m|\le\eps(m)$ for a computable $\eps(m)\to0$ then shows that $v$ is computable whenever $m\mapsto w_m$ is computable.

\begin{theorem}[One prover message and a noncomputable coordinate]\label{thm:triangle}
Let $v\in(0,1)$ and let $\GG_v$ be the theory of Example~\ref{ex:Gv}. Then:
\begin{enumerate}[label=(\roman*)]
\item $\GG_v$ satisfies (T1)--(T4), and $\GIPcl(\GG_v)=\PSPACE$.
\item $L_v\in\GMA(\GG_v)$: there is a one-message protocol in which the prover sends $|t|^4$ copies of $W_v$ and the verifier measures each copy with $(f_y,u-f_y)$.
\item If $v$ is not computable, then $L_v$ is undecidable. Hence $\GMA(\GG_v)$ contains an undecidable promise problem, $\GIP(\GG_v)$ strictly contains $\PSPACE$, and $\GIPcl(\GG_v)\subsetneq\GIP(\GG_v)$.
\end{enumerate}
\end{theorem}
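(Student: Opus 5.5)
Part (i) comes directly from what is already in the paper. That $\GG_v$ satisfies (T1)--(T4) was checked in Example~\ref{ex:Gv}: every gate entry lies in $\{0,\pm1,\tfrac12\}$ and $v$ appears in no gate. Given the standing assumptions, Theorem~\ref{thm:cl-universal} yields $\GIPcl(\GG_v)=\PSPACE$ for every $v$, computable or not, as Example~\ref{ex:cl-in-Gv} already observes.

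For (ii) I would use the following verifier. On input $t$ with $n=|t|$, it expects a single message consisting of $N=n^4$ copies of $W_v$. It measures each copy with $(f_y,u-f_y)$, counts the $f_y$-outcomes $K$, and accepts iff $K/N\ge\bar t$. This is a polynomial-time uniform circuit: the threshold $\bar t N$ is a rational computed from $t$, and only the designated gate is used.

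\emph{Completeness.} The honest prover sends $e_2^{\otimes N}$, which is a state of the minimal composite. The outcomes are then i.i.d.\ Bernoulli$(v)$. If $v\ge\bar t+n^{-2}$, Hoeffding's inequality bounds the rejection probability by $e^{-2Nn^{-4}}=e^{-2}<1/3$.

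\emph{Soundness.} Here I use that every composite in $\GG_v$ is minimal. Any state of $W_v^{\otimes N}$, including the reduced state the prover hands over (its private system is discarded, and composites with classical factors are direct sums by Fact~\ref{fact:classical-factor}), is therefore a convex mixture of products $e_{i_1}\otimes\dots\otimes e_{i_N}$. The acceptance probability is linear in the state, so it suffices to bound it on each product. On such a product the outcomes are independent Bernoulli with means $f_y(e_{i_j})\in\{0,v\}$. Each mean is at most $v\le\bar t-n^{-2}$, so by stochastic domination (Fact~\ref{fact:domination}, or monotonicity of binomial tails) together with Hoeffding, $\Pr[K/N\ge\bar t]\le e^{-2}<1/3$.

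The point I expect to need most care is that the prover's message is required to be a state of $W_v^{\otimes N}$, possibly correlated with private systems. For the general correlated case I would condition on the prover's memory by applying the discard $u$ to it. Since the verifier acts as $M\otimes\id$ under (T1), only the marginal on the message matters. That marginal lies in the cone of $W_v^{\otimes N}$, which by construction is the minimal tensor product, and hence has the product-vertex decomposition used above.

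For (iii), I would suppose an algorithm $\mathcal A$ halting on all inputs decides $L_v$ and derive that $v$ is computable. Given $m$, compute $\mathcal A$ on each $t$ of length $m$. On those $t$ where the promise holds, $\mathcal A$'s answer is correct; on the others its answer is arbitrary, but such $t$ satisfy $|v-\bar t|<m^{-2}$. Take $t^*$ to be the largest $t$ (in the order of $\bar t$) that $\mathcal A$ accepts, with a default if none is accepted, and set $w_m=\bar t^*$.

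I would then check that $|v-w_m|\le m^{-2}+2^{-m}\cdot O(1)$. Every $t$ with $\bar t\le v-m^{-2}$ is a yes-instance and hence accepted, so $w_m\ge v-m^{-2}-2^{-m}$. Every accepted $t$ is not a no-instance, so $\bar t<v+m^{-2}$. This gives a computable sequence with a computable error bound tending to $0$, contradicting noncomputability of $v$ by the remark preceding the theorem.

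Finally, every problem in $\PSPACE$ is decidable, so $L_v\notin\PSPACE$. Combined with (ii), $\GIP(\GG_v)\supseteq\GMA(\GG_v)\ni L_v$. Theorem~\ref{thm:lower} gives $\PSPACE\subseteq\GIP(\GG_v)$, and part (i) then gives the strict inclusions $\PSPACE=\GIPcl(\GG_v)\subsetneq\GIP(\GG_v)$.
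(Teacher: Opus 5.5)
Your proof is correct. Parts (i) and (iii), and the protocol and completeness in (ii), follow the paper's argument; your decider-to-approximation step is Lemma~\ref{lem:approx}. You should fix the default as $w_m=0$, which also covers the case $v<m^{-2}$, where no string of length $m$ is a yes-instance.

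The soundness step in (ii) takes a different route. You use that the cone of $W_v^{\otimes N}$ in $\GG_v$ is the minimal tensor product. The marginal message state is then a mixture of products of vertices, and by linearity you only need to bound product inputs, on which the outcomes are independent Bernoulli variables with means in $\{0,v\}$.

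The paper never decomposes the message state. It uses Fact~\ref{fact:marginal} to see that, conditioned on the earlier outcomes $s_{<i}$, wire $i$ is in some state of $W_v$, so $\Pr[s_i=1\,|\,s_{<i}]\le v$. It then handles the possibly correlated outcomes with Fact~\ref{fact:domination}.

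Your version is more elementary: convexity plus independence, with no conditioning. The paper's version needs only that composites lie inside the maximal tensor product, which is why it is reused verbatim for $\GG^{\mathrm Q}_v$ in Theorem~\ref{thm:Qtriangle}(iii). Your argument also extends there, but only with the extra observation that $\Vp{W_v}$ is a simplex cone. By Fact~\ref{fact:classical-factor} transported through $L_v$, every composite of copies of $W_v$ is then generated by products of vertices, whatever composition rule the theory uses.
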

\begin{proof}
(i) was verified in Example~\ref{ex:Gv}; the class statement is Theorem~\ref{thm:cl-universal}.

(ii) On input $t$ with $m=|t|$, let $k=m^4$ and $\eps=m^{-2}$. The verifier expects the message system $W_v^{\otimes k}$, measures wire $i$ with $(f_y,u-f_y)$ to obtain $s_i\in\{0,1\}$, and accepts if and only if $\sum_is_i\ge k\bar t$. This is a polynomial-time uniform circuit of $\GG_v$: it uses $k$ measurement gates, classical counting, and one comparison with the rational number $k\bar t$.

\emph{Completeness.} If $v\ge\bar t+\eps$, the honest prover sends $e_2^{\otimes k}$, a state of $W_v^{\otimes k}$. The product effects $f_y^{(i)}$ evaluated on the product state give independent Bernoulli variables with $\Pr[s_i=1]=f_y(e_2)=v$, so by Hoeffding's inequality \cite{Hoe63}
\[
\Pr\Big[\sum_is_i<k\bar t\Big]\le\Pr\Big[\sum_is_i-kv<-k\eps\Big]\le e^{-2k\eps^2}=e^{-2}<\tfrac13.
\]

\emph{Soundness.} Suppose $v\le\bar t-\eps$, and let $\omega\in\Om{W_v^{\otimes k}}$ be the state of the message. The prover's private system, on which the verifier does not act, may be discarded by Fact~\ref{fact:causal}, and by Fact~\ref{fact:marginal} the resulting marginal is a state of $W_v^{\otimes k}$. For $s\in\{0,1\}^i$ write $f^{s}=f^{s_1}\otimes\dots\otimes f^{s_i}\otimes u\otimes\dots\otimes u$ with $f^1=f_y$ and $f^0=u-f_y$; then $\Pr[s_1\cdots s_i=s]=f^s(\omega)$, since each measurement acts on its own wire. By Fact~\ref{fact:marginal}, conditioned on outcomes $s_{<i}$ of positive probability, wire $i$ is in a state $\omega_i\in\Om{W_v}$, and
\[
\Pr[s_i=1\,|\,s_{<i}]=f_y(\omega_i)\le\max_{\Om{W_v}}f_y=v,
\]
because the maximum of the linear functional $f_y$ over the triangle is attained at a vertex (Fact~\ref{fact:convex}(iv)) and $f_y$ takes the values $0,0,v$ on the vertices. By Fact~\ref{fact:domination} and Hoeffding's inequality,
\[
\Pr\Big[\sum_is_i\ge k\bar t\Big]\le\Pr\big[\mathrm{Bin}(k,v)\ge k\bar t\big]\le\Pr\big[\mathrm{Bin}(k,v)\ge kv+k\eps\big]\le e^{-2k\eps^2}=e^{-2}<\tfrac13.
\]

(iii) Suppose that a total algorithm $D$ accepts every $t\in L_{v,\mathrm{yes}}$ and rejects every $t\in L_{v,\mathrm{no}}$. For each $m$ let $w_m$ be the largest $\bar t$ over the strings $t\in\{0,1\}^m$ accepted by $D$, or $0$ if none is accepted. Lemma~\ref{lem:approx} below shows that $|v-w_m|<m^{-2}+2^{-m}$; since $m\mapsto w_m$ is computable, $v$ is computable. Contrapositively, if $v$ is not computable, then $L_v$ is undecidable. It lies in $\GMA(\GG_v)\subseteq\GIP(\GG_v)$ by (ii), and $\PSPACE$ contains only decidable problems. The last claim follows from (i).
\end{proof}

\begin{lemma}[Approximating $v$ from a decider]\label{lem:approx}
Let $v\in(0,1)$ and let $D$ be a total algorithm that accepts every string of $L_{v,\mathrm{yes}}$ and rejects every string of $L_{v,\mathrm{no}}$. Let $w_m=\max\{\bar t:t\in\{0,1\}^m,\ D\text{ accepts }t\}$, with $w_m=0$ if $D$ accepts no string of length $m$. Then $|v-w_m|<m^{-2}+2^{-m}$. If moreover $D$ runs in space $s(m)\ge m$ on inputs of length $m$, then $w_m$ is computable from $m$ in space $O(s(m))$.
\end{lemma}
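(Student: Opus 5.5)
The plan is to prove the two inequalities $w_m<v+m^{-2}+2^{-m}$ and $w_m>v-m^{-2}-2^{-m}$ separately, each from one half of the promise. Then I will handle the space bound by direct enumeration. Throughout, fix $m\ge1$. The grid $\{\bar t:t\in\{0,1\}^m\}$ is $\{j2^{-m}:0\le j<2^m\}$, with spacing $2^{-m}$, covering $[0,1-2^{-m}]$.

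\emph{Upper bound.} If $D$ accepts no string of length $m$, then $w_m=0<v$ and the bound is trivial. Otherwise let $t$ attain the maximum, so $w_m=\bar t$ and $D$ accepts $t$. Since $D$ rejects every string of $L_{v,\mathrm{no}}$, $t\notin L_{v,\mathrm{no}}$, i.e.\ $v>\bar t-m^{-2}$. Hence $w_m<v+m^{-2}$, which is stronger than needed.

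\emph{Lower bound.} If $v<m^{-2}+2^{-m}$, then $w_m\ge0>v-m^{-2}-2^{-m}$ and we are done. Otherwise $v-m^{-2}\ge2^{-m}>0$. Let $j=\lfloor 2^m(v-m^{-2})\rfloor$ and take $t$ with $\bar t=j2^{-m}$. This is a legitimate grid point: $j\ge0$, and $j2^{-m}\le v-m^{-2}<1$ gives $j<2^m$. Then $\bar t\le v-m^{-2}$, so $t\in L_{v,\mathrm{yes}}$ and $D$ accepts $t$. Therefore $w_m\ge\bar t>v-m^{-2}-2^{-m}$. The only care needed is this case split, ensuring the chosen $j$ lies in range. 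It is the one place where the extra $2^{-m}$ slack is used, and I expect it to be the main (mild) obstacle.

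\emph{Space bound.} To compute $w_m$, enumerate $t\in\{0,1\}^m$ in lexicographic order (equivalently, increasing $\bar t$) using an $m$-bit counter. Run $D$ on each $t$, reusing $O(s(m))$ space, and keep the last accepted $t$, which is $m$ bits. Output $\bar t$ as the $m$-bit binary fraction, or $0$ if no string was accepted. The total space is $O(m+s(m))=O(s(m))$ since $s(m)\ge m$. Computability of $m\mapsto w_m$ follows in particular because $D$ is total.
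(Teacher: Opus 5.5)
Your proof is correct and follows the paper's argument. The upper bound holds because every accepted $t$ lies outside $L_{v,\mathrm{no}}$, the lower bound because the largest grid point with $\bar t\le v-m^{-2}$ lies in $L_{v,\mathrm{yes}}$, and the space bound comes from enumerating the $2^m$ strings while reusing $D$'s workspace. The only difference is cosmetic: you split cases at $v<m^{-2}+2^{-m}$ rather than at $v<m^{-2}$ as the paper does. Either threshold works, since the paper's case $v\ge m^{-2}$ merely allows $j=0$, i.e.\ $t=0^m$.
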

\begin{proof}
Every accepted $t$ satisfies $\bar t<v+m^{-2}$: otherwise $v\le\bar t-m^{-2}$, so $t\in L_{v,\mathrm{no}}$ and $D$ rejects it. Hence $w_m<v+m^{-2}$. If $v\ge m^{-2}$, let $t'$ be the largest $m$-bit string with $\bar t'\le v-m^{-2}$; then $\bar t'>v-m^{-2}-2^{-m}$ and $t'\in L_{v,\mathrm{yes}}$, so $D$ accepts $t'$ and $w_m\ge\bar t'>v-m^{-2}-2^{-m}$. If $v<m^{-2}$, then $w_m\ge0>v-m^{-2}$. Computing $w_m$ requires enumerating the $2^m$ strings of length $m$ and running $D$ on each while keeping the current maximum, which uses space $O(s(m)+m)$.
\end{proof}

{\paragraph{Relation to earlier one-message bounds.}
Lee and Hoban state $\GMA\subseteq\cls{A_0PP}\subseteq\PP$ for causal, locally tomographic theories in which all mathematically allowed states are physical \cite[Theorem~5.0.4]{LH16}. Their classes are defined for languages, whereas $L_v$ is a promise problem, so Theorem~\ref{thm:triangle} does not directly contradict that statement. The singular-value argument in their proof does not establish the corresponding bound here. In general, the acceptance probability $uM\rho$ need not be bounded by the largest singular value of $M$: for the classical outcome map
\[
M=\begin{pmatrix}1/2&0\\1/2&0\end{pmatrix},\qquad u=(1,1),\qquad \rho=(1,0)^{\mathsf T},
\]
we have $uM\rho=1$ but $\sigma_{\max}(M)=1/\sqrt2$. This is a valid outcome map: its entries are nonnegative and its column sums are at most one. Moreover, the change of coordinates used in that proof need not preserve efficient computability of gate entries. The examples here show that such an upper bound for promise problems requires an additional hypothesis beyond (T1)--(T4).}

\begin{example}[The protocol at $v=1/2$]\label{ex:half}
For $v=\tfrac12$ the state space of $W_{1/2}$ is the triangle with vertices $(0,0,1)$, $(1,0,1)$ and $(1,\tfrac12,1)$. A mixture of the vertices with probabilities $p_0,p_1,p_2$ gives $f_y=p_2/2$, so the verifier's measurement can be simulated in the canonical coordinates of $\Delta_3$ by reading the vertex label and tossing a fair coin when the label is $2$. The protocol of Theorem~\ref{thm:triangle}(ii) then estimates the bias $\tfrac12$ of this coin, and $L_{1/2}$ is trivially in $\P$. For a general $v$ the same simulation needs a coin of bias $v$, which is where the difficulty resides: the verifier's measurement gate has the fixed integer coordinates $(0,1,0)$ in the given presentation, but its maximum over the state space is $v$.
\end{example}

\begin{example}[A noncomputable coordinate]\label{ex:halting}
Let $H\subseteq\N$ be the set of indices of Turing machines that halt on the empty input, and let $v_H=\sum_{i\in H}2^{-i-2}$. Then $v_H\in(0,\tfrac12]$, its binary digits are the indicator of $H$, and $v_H$ is not computable: an algorithm computing $v_H$ to precision $2^{-k}$ for every $k$ would decide $H${. Since $H$ is neither finite nor cofinite, $v_H$ is not dyadic, so its binary digits can be recovered successively from sufficiently precise approximations \cite{Tur37,Wei00}.} Theorem~\ref{thm:triangle} then says that the verifier of $\GG_{v_H}$, whose gates have integer and dyadic entries, decides the undecidable problem $L_{v_H}$ with the help of a single message. {The honest prover sends copies of the fixed pure state $e_2$, independently of the contents of the input.} The verifier reads off the answer from the statistics of a fixed measurement on them.
\end{example}

\subsection{A theory containing quantum theory and classical theory}\label{sec:Qtriangle}

The theory $\GG_v$ contains no quantum systems. The next theorem adjoins $W_v$ to quantum theory and obtains a theory that contains both quantum theory and classical theory as subtheories, is isomorphic to $\QC$ with $\Delta_3$ adjoined, and has $\GIPcl=\GIPqm=\PSPACE$ by Theorems~\ref{thm:cl-universal} and~\ref{thm:qm-universal}, and yet has $\GIP\supsetneq\PSPACE$ for noncomputable $v$, and also for the computable $v_T$ of Theorem~\ref{thm:computable} below.

\begin{theorem}[Quantum theory with a triangular system adjoined]\label{thm:Qtriangle}
Let $v\in(0,1)$ and let $\GG^{\mathrm Q}_v$ be the theory of Example~\ref{ex:GQv}.
\begin{enumerate}[label=(\roman*)]
\item $\GG^{\mathrm Q}_v$ satisfies (T1)--(T4), contains quantum theory in the sense of Definition~\ref{def:containsQ} and classical theory in the sense of (T3), and is isomorphic as a theory to $\QC$ with $\Delta_3$ adjoined as an elementary system.
\item $\GIPcl(\GG^{\mathrm Q}_v)=\GIPqm(\GG^{\mathrm Q}_v)=\PSPACE$; the quantum subtheory has $\GIP=\QIP=\PSPACE$, and the classical subtheory has $\GIP=\IP=\PSPACE$.
\item $L_v\in\GMA(\GG^{\mathrm Q}_v)$.
\item If $v$ is not computable, then $\GIP(\GG^{\mathrm Q}_v)$ contains an undecidable promise problem, so $\GIP(\GG^{\mathrm Q}_v)\supsetneq\PSPACE$.
\end{enumerate}
\end{theorem}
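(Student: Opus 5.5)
The plan is to prove the four items in order, reusing the structure of Theorem~\ref{thm:triangle} and of the general theorems. For (i), the theory $\GG^{\mathrm Q}_v$ is defined by transporting the systems, cones, unit effects, channels and instruments of $\QC$ with $\Delta_3$ adjoined through $\id_{V_A}\otimes L_v^{\otimes c}$. So the maps $\iota_S=\id\otimes (L_v^{-1})^{\otimes c}$ are order isomorphisms preserving units and composition, and they carry allowed channels onto allowed ones; this gives the isomorphism. (T1)--(T3) transfer along it: local tomography holds because the transport is a tensor product of linear maps, causality because units are preserved, and classical containment because the classical and quantum factors are untouched ($c=0$). (T4) concerns only the gate set, and each gate's entries either come from $\QC$ or lie in $\{0,\pm1,\tfrac12\}$, as noted in Example~\ref{ex:GQv}. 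Containment of quantum theory in the sense of Definition~\ref{def:containsQ} holds because for $c=0$ the transport is the identity, so qubit composites are $\Pos_{2^n}$, all CPTP maps and CP instruments are allowed, and the quantum gates belong to the gate set.

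For (ii), I would invoke Theorem~\ref{thm:cl-universal} for $\GIPcl$ and Theorem~\ref{thm:qm-universal} for $\GIPqm$; the latter applies by (i). The quantum subtheory is $\QC$ with its standard gate set, so Lemma~\ref{lem:QC} gives $\GIP=\QIP=\PSPACE$ there. The classical subtheory is classical theory, whose interactive-proof class is $\IP=\PSPACE$ by Fact~\ref{fact:IP} and Definition~\ref{def:classes}.

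For (iii), I would run exactly the verifier of Theorem~\ref{thm:triangle}(ii). The prover sends $W_v^{\otimes k}$ with $k=|t|^4$, the verifier measures each wire with $(f_y,u-f_y)$, and it accepts iff $\sum_i s_i\ge k\bar t$. Completeness is unchanged: the honest state $e_2^{\otimes k}$ is a product state, and Hoeffding's inequality applies. Soundness is the step needing care, since the prover's memory may now be quantum and may be entangled, in the transported sense, with the message. The argument of Theorem~\ref{thm:triangle} uses only the following. First, discarding the prover's memory leaves a state of $W_v^{\otimes k}$, by Facts~\ref{fact:causal} and~\ref{fact:marginal}. Second, conditioning on earlier outcomes of product effects leaves wire $i$ in a state of $W_v$, again by Fact~\ref{fact:marginal}, so $\Pr[s_i=1\mid s_{<i}]\le\max_{\Om{W_v}}f_y=v$. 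Third, Fact~\ref{fact:domination} together with Hoeffding's inequality finishes the bound. All three hold in any theory satisfying (T1)--(T2), so the same bound $e^{-2}<1/3$ follows.

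For (iv), combine (iii) with Theorem~\ref{thm:triangle}(iii), or directly with Lemma~\ref{lem:approx}: $L_v$ is undecidable when $v$ is noncomputable, while every problem in $\PSPACE$ is decidable; together with the lower bound of Theorem~\ref{thm:lower}, this gives strict containment. I expect the main obstacle to be the bookkeeping in (i), namely checking that the transported composites satisfy Definition~\ref{def:composite}, i.e.\ that they lie between the min and max tensor products of the transported factors, and that transport commutes with composition. Both follow because order isomorphisms that are tensor products preserve $\tmin$ and $\tmax$.
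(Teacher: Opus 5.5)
Your proposal is correct and follows essentially the same route as the paper: transport through $\id\otimes L_v^{\otimes c}$ for (i), the general theorems together with Lemma~\ref{lem:QC} and $\IP=\PSPACE$ for (ii), the protocol of Theorem~\ref{thm:triangle}(ii) with soundness resting only on conditional wire states lying in $\Om{W_v}$ and on $\max_{\Om{W_v}}f_y=v$ for (iii), and Theorem~\ref{thm:triangle}(iii) for (iv). You spell out slightly more of the bookkeeping in (i), namely that tensor products of order isomorphisms preserve minimal and maximal tensor products. The paper instead adds an equivalent soundness argument in canonical coordinates: a classical label in $\{0,1,2\}^k$ together with a coin of bias $v$.
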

\begin{proof}
(i) The standing assumptions and the isomorphism were verified in Example~\ref{ex:GQv}. Definition~\ref{def:containsQ}(i) holds because the transport maps act as the identity on qubit factors, so composites of qubits are the ordinary ones; (ii) holds because all CPTP maps of $\QC$ are transported; (iii) holds by the choice of the gate set.

(ii) Theorems~\ref{thm:cl-universal} and~\ref{thm:qm-universal} apply. The subtheory generated by qubits and classical bits with the quantum and classical gates is $\QC$, whose class is $\QIP$ by Lemma~\ref{lem:QC}, and the subtheory generated by classical bits with the classical gates is classical theory, whose class is $\IP$.

(iii) Run the protocol of Theorem~\ref{thm:triangle}(ii). Completeness is unchanged. For soundness, the proof of Theorem~\ref{thm:triangle} used only that the conditional state of each wire is a state of $W_v$ (Fact~\ref{fact:marginal}, which applies because every composite of $\GG^{\mathrm Q}_v$ lies inside the maximal tensor product) and that $\max_{\Om{W_v}}f_y=v$; both hold here. Equivalently, in the canonical coordinates the message is a classical register over $\{0,1,2\}^k$, possibly correlated with the prover's quantum memory, and each measurement outputs $1$ with probability $v$ when the label is $2$ and with probability $0$ otherwise, so $\Pr[s_i=1\,|\,s_{<i}]\le v$.

(iv) By (iii) and Theorem~\ref{thm:triangle}(iii).
\end{proof}

Parts (ii) and (iv) show that classical-message protocols, quantum-verifier protocols, and protocols restricted to the quantum or classical subtheories all remain in $\PSPACE$; the separation uses protocols in which the verifier measures the triangular systems sent by the prover.

\subsection{Computable presentations do not suffice}\label{sec:computable}

We next ask whether computability of the theory in the given coordinates suffices for $\GIP=\PSPACE$. It does not: a computable coordinate can still separate the class from $\PSPACE$. We use the unary space hierarchy theorem \cite{HLS65,AB09}, whose proof we recall in Appendix~\ref{app:hierarchy}.

\begin{fact}[Unary space hierarchy]\label{fact:hierarchy}
There is a set $T\subseteq\N$ such that the unary language $\{1^n:n\in T\}$ lies in $\DSPACE(2^{n^2})$ but not in $\DSPACE(2^{cn})$ for any constant $c$.
\end{fact}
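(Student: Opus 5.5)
The plan is to diagonalise against all $\DSPACE(2^{cn})$ machines and pad to get a unary language, so that the usual hierarchy argument carries over to unary inputs.

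\emph{Step 1: the diagonalising machine.} Let $D$ be the following machine. On input $1^n$ it decodes $n$ as a pair $(i,c)$, for instance by writing the binary representation of $n$ as a code for the machine $M_i$ followed by a padding code for a constant $c$. Every pair must recur for infinitely many $n$; we get this by letting the decoding ignore trailing padding. $D$ then simulates $M_i$ on $1^n$ within space $2^{n^2}$. It keeps a step counter of $2^{O(2^{n^2})}$ bits' worth of configurations, which guarantees halting, since a machine confined to space $S$ that runs longer than $2^{O(S)}$ steps is looping. $D$ accepts exactly when the simulation halts and rejects within the budget.

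Let $T=\{n: D\text{ accepts }1^n\}$.

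\emph{Step 2: the upper bound.} Deciding $T$ requires computing the space bound $2^{n^2}$, which is space-constructible from the unary input, and running a universal simulation with constant-factor overhead. Together these give membership of $\{1^n:n\in T\}$ in $\DSPACE(2^{n^2})$, after absorbing constants by the tape-compression (linear speedup in space) theorem.

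\emph{Step 3: the lower bound.} Suppose $M_i$ decides the language in space $2^{cn}$. The universal simulation costs at most $a_i2^{cn}$ space, where $a_i$ depends only on $M_i$. Choose $n$ encoding $(i,c)$ and large enough that $a_i2^{cn}\le 2^{n^2}$ and that the time cutoff is never reached. Infinitely many such $n$ exist by Step 1. For such an $n$, the simulation by $D$ completes and $D$ flips the answer of $M_i$, a contradiction.

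The main obstacle is the bookkeeping of Step 1. The unary input carries only $\log n$ bits of description, so the encoding must make every machine appear infinitely often. The simulation also has to be guaranteed to halt within the allotted space, for which I would rely on the configuration-counting cutoff.
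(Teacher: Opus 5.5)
Your proposal is correct and follows essentially the same route as the paper's appendix: diagonalise over an enumeration in which every machine recurs for infinitely many $n$, simulate $M_i$ on $1^n$ within roughly $2^{n^2}$ cells using a configuration-counting cutoff, invert the answer, and use tape compression for the upper bound. The differences are only bookkeeping. The paper fixes the work alphabet so that the simulation overhead is uniform and its step cutoff is the exact configuration count. You instead cap $D$'s own region at $2^{n^2}$ cells and absorb the machine-dependent overhead $a_i$ by taking $n$ large (note that the decoded constant $c$ is never used, and your step counter should count up to $2^{O(2^{n^2})}$, i.e.\ occupy $O(2^{n^2})$ cells, not have that many bits).
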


\begin{theorem}[A computable coordinate outside polynomial space]\label{thm:computable}
Let $T$ be as in Fact~\ref{fact:hierarchy}, modified to contain $0$, and let $v_T=\sum_{i\in T}4^{-i-1}\in[\tfrac14,\tfrac13]$. Then:
\begin{enumerate}[label=(\roman*)]
\item $v_T$ is computable; {the state cones of $\GG_{v_T}$ and of $\GG^{\mathrm Q}_{v_T}$ have finite descriptions by linear or positive-semidefinite constraints with computable coefficients, and their unit effects, verifier gates and isomorphisms with classical theory and with $\QC$ have computable entries;}
\item $L_{v_T}\in\GMA(\GG_{v_T})\cap\GMA(\GG^{\mathrm Q}_{v_T})$;
\item $L_{v_T}\notin\PSPACE$;
\item $L_{v_T}$ is decidable, in space $2^{O(\log^2m)}$ on inputs of length $m$.
\end{enumerate}
Consequently $\GIP(\GG_{v_T})\supsetneq\PSPACE$ and $\GIP(\GG^{\mathrm Q}_{v_T})\supsetneq\PSPACE$ unconditionally, although both theories have computable presentations, $\GG^{\mathrm Q}_{v_T}$ contains quantum theory and classical theory, $\GIPcl=\PSPACE$ in both, and $\GIPqm(\GG^{\mathrm Q}_{v_T})=\PSPACE$.
\end{theorem}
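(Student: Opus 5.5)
The plan is to prove the four items in order, all of them through Lemma~\ref{lem:approx}, and then read off the consequences from Theorems~\ref{thm:triangle} and~\ref{thm:Qtriangle}.

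\emph{Item (i).} Since $T$ is decidable (it lies in $\DSPACE(2^{n^2})$), we compute $v_T$ to precision $4^{-k}$ by deciding membership of $0,\dots,k-1$ in $T$ and summing. The tail is at most $\sum_{i\ge k}4^{-i-1}<4^{-k}$. The bounds $v_T\in[\tfrac14,\tfrac13]$ come from $0\in T$ and from the geometric series. The cone of $W_{v_T}$ is cut out by three linear inequalities: $y\ge0$, $x-y/v\ge0$ and $z-x\ge0$, the facets dual to the vertices. Their coefficients involve $1/v_T$, which is computable because $v_T\ge\tfrac14$. The isomorphism $L_{v_T}$ and its inverse therefore have computable entries. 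The gates are those of Examples~\ref{ex:Gv} and~\ref{ex:GQv}.

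\emph{Item (ii).} This is Theorem~\ref{thm:triangle}(ii) together with Theorem~\ref{thm:Qtriangle}(iii).

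\emph{Item (iii).} I argue by contradiction. Suppose $L_{v_T}\in\PSPACE$, decided by $D$ in space $m^c$. By Lemma~\ref{lem:approx}, $w_m$ is computable in space $O(m^c)$ with $|v_T-w_m|<m^{-2}+2^{-m}$. To decide whether $n\in T$, I need $v_T$ to precision well below $4^{-n-1}$, say $4^{-n-2}$.

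\begin{itemize}
\item Choose $m=2^{\Theta(n)}$, for instance $m=4^{n+3}$, so that $m^{-2}+2^{-m}\le 4^{-n-2}$.
\item Computing $w_m$ then costs space $m^{c}=2^{O(n)}$.
\item From $w_m$, recover the first $n+1$ base-4 digits of $v_T$.
\end{itemize}

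Digit recovery needs some care, because a carry ambiguity arises when $v_T$ lies near a grid boundary. This is handled by the base-4 digits being $0$ or $1$ only. The fractional part $4^{n+1}v_T \bmod 1$ equals $\sum_{i>n,\,i\in T}4^{n-i}\in[0,\tfrac13]$. An approximation error below $\tfrac1{16}$ at that scale therefore lets me round unambiguously: take $\lfloor 4^{n+1}w_m+\tfrac18\rfloor$ and read off its last digit. This decides $\{1^n:n\in T\}$ in space $2^{O(n)}$, contradicting Fact~\ref{fact:hierarchy}.

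This step is the main obstacle. The point to get right is that the space hierarchy gap is measured in unary $n$ while the inputs to $L_{v_T}$ have length $m$ exponential in $n$. The precision $m^{-2}$ must be exponentially small in $n$, so $m=2^{O(n)}$, which turns a polynomial-in-$m$ bound into exactly $2^{O(n)}$. I would also check that writing $m$ in binary and enumerating $m$-bit strings stays within this budget.

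\emph{Item (iv).} On input $t$ of length $m$, compute $v_T$ to precision $2^{-2m}$. This requires deciding membership in $T$ for all $i\le m$, each in space $2^{i^2}\le 2^{m^2}$. That is too much for the claimed bound, so I instead use the promise gap: only precision about $m^{-2}/2$ is needed, which requires the digits $i\le \log_4 m^2+O(1)=O(\log m)$. Deciding those costs space $2^{O(\log^2 m)}$. Then compare $\bar t$ with the approximation, accepting if and only if the approximation is at least $\bar t$. With error below $m^{-2}$ this is correct on the promise.

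\emph{Consequences.} They follow from (ii), (iii), Theorem~\ref{thm:lower}, Theorem~\ref{thm:cl-universal} and Theorem~\ref{thm:Qtriangle}(ii).
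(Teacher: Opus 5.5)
Your proof is correct and is essentially the paper's. Parts (i), (ii) and the consequences are argued the same way. Part (iv) uses the same truncation to the first $O(\log m)$ base-$4$ digits; your two-sided error below $m^{-2}$ with threshold $\bar t$ works as well as the paper's one-sided truncation with threshold $\bar t+\frac12m^{-2}$. Part (iii) rests on the same step: apply Lemma~\ref{lem:approx} at length $m=2^{\Theta(n)}$, so that a space-$m^c$ decider for $L_{v_T}$ becomes a space-$2^{O(n)}$ decider for $\{1^n:n\in T\}$. The only variation is in how the bit is extracted. You recover $\lfloor4^{n+1}v_T\rfloor$ in one shot by rounding, which is safe because the tail lies in $[0,\frac13]$ and so there is no carry ambiguity. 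The paper instead loops over $j\le i$ with $m_j=2^{j+3}$ and tests $w_{m_j}-T_{j-1}\ge\frac23\cdot4^{-j-1}$ using the bits already found.

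One small fix: the rounding $\lfloor4^{n+1}w_m+\frac18\rfloor$ needs $|v_T-w_m|<4^{-n-3}$, i.e.\ error below $\frac1{16}$ at scale $4^{n+1}$. The bound $4^{-n-2}$ you state allows error up to $\frac14$ at that scale, and then the rounding can return $N-1$ when the tail is below $\frac18$. Your concrete choice $m=4^{n+3}$, which gives $m^{-2}=4^{-2n-6}$, does meet the stronger bound, so the argument goes through once the stated target is corrected.
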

\begin{proof}
(i) The base-$4$ digits of $v_T$ are the indicators $[i\in T]$, computable in space $2^{i^2}$; {the coefficients in these descriptions and the entries of the remaining maps are built from $v_T$ and the fixed computable coefficients of $\QC$ by finitely many rational operations.}

(ii) Theorems~\ref{thm:triangle}(ii) and~\ref{thm:Qtriangle}(iii).

(iii) Suppose that $L_{v_T}$ is decided by a total algorithm $D$ in space $m^d$. We decide $T$ on input $1^i$ in space $2^{O(i)}$, contradicting Fact~\ref{fact:hierarchy}. For $j=0,1,\dots,i$ in turn: let $m_j=2^{j+3}$; compute $w_{m_j}$ by Lemma~\ref{lem:approx}, in space $O(m_j^d)=2^{O(j)}$; compute $T_{j-1}=\sum_{l<j,\,l\in T}4^{-l-1}$ from the bits $[l\in T]$, $l<j$, found in the earlier iterations and kept in $j$ bits of storage; and declare $j\in T$ if and only if $w_{m_j}-T_{j-1}\ge\tfrac23\cdot4^{-j-1}$. The test is correct: by Lemma~\ref{lem:approx}, $|v_T-w_{m_j}|<m_j^{-2}+2^{-m_j}$, and $m_j^{-2}=\tfrac1{16}4^{-j-1}$ while $2^{-m_j}\le\tfrac18\cdot4^{-j-1}$ for all $j\ge0$, so $|v_T-w_{m_j}|<\tfrac14\cdot4^{-j-1}$. If $j\in T$, then $v_T-T_{j-1}\ge4^{-j-1}$, so $w_{m_j}-T_{j-1}>\tfrac34\cdot4^{-j-1}$; if $j\notin T$, then $v_T-T_{j-1}\le\sum_{l>j}4^{-l-1}=\tfrac13\cdot4^{-j-1}$, so $w_{m_j}-T_{j-1}<\tfrac7{12}\cdot4^{-j-1}$. The space used is $2^{O(i)}$ in total, since each iteration reuses the same work space and the numbers involved have $2^{O(i)}$ bits.

(iv) On input $t$ of length $m$, compute the digits $[j\in T]$ for $j\le J=\lceil\log_4(12m^2)\rceil$, each in space $2^{j^2}\le2^{O(\log^2m)}$, and let $\tilde v=\sum_{j\le J,\,j\in T}4^{-j-1}$; then $0\le v_T-\tilde v\le\tfrac13\cdot4^{-J-1}<\tfrac14m^{-2}$. Accept if and only if $\tilde v\ge\bar t+\tfrac12m^{-2}$. If $t\in L_{v_T,\mathrm{yes}}$, then $\tilde v\ge\bar t+\tfrac34m^{-2}$, and if $t\in L_{v_T,\mathrm{no}}$, then $\tilde v\le v_T\le\bar t-m^{-2}$, so the decision is correct.

The class statements follow from (ii)--(iii), Theorem~\ref{thm:lower}, and Theorems~\ref{thm:cl-universal} and~\ref{thm:qm-universal}.
\end{proof}

\begin{remark}[Complexity of the separating problem]\label{rem:where}
{The separating problem $L_{v_T}$ lies outside $\PSPACE$ and is decidable in quasi-polynomial space, unlike the undecidable witnesses of Theorem~\ref{thm:triangle}(iii).} A polynomial-space decider for $L_{v_T}$ would give an exponential-space decider for the unary language of $T$, which the space hierarchy theorem forbids. Nothing in the argument depends on the details of $T$ beyond Fact~\ref{fact:hierarchy}.
\end{remark}

\subsection{The role of the presentation}\label{sec:presentation}

The theories $\GG_v$ and $\GG^{\mathrm Q}_v$ are isomorphic to classical theory with $\Delta_3$ adjoined and to $\QC$ with $\Delta_3$ adjoined. With the standard gates on $\Delta_3$ (preparation of $\delta_1$, readout of the label and discard), the classes of these two theories are $\IP=\PSPACE$ (all of their messages are classical registers, so Theorem~\ref{thm:cl-universal} applies) and $\QIP=\PSPACE$ (Lemma~\ref{lem:QC}). The isomorphism $L_v$ changes the coordinates in which the verifier's gates are written: the measurement $(f_y,u-f_y)$, with integer coordinates in $\GG_v$, becomes the measurement with effects $(0,0,v)$ and $(1,1,1-v)$ on $\Delta_3$, which reads the label and tosses a coin of bias $v$ when the label is $2$. When $v$ is efficiently approximable, this instrument can be implemented efficiently with fair coins, and the separation disappears. We say that $v$ is \emph{efficiently approximable} if a rational number within $2^{-k}$ of $v$ can be computed in time $\poly(k)$.

\begin{proposition}[Efficient presentations give $\PSPACE$]\label{prop:presentation}
If $v\in(0,1)$ is efficiently approximable, then $\GIP(\GG_v)=\GIP(\GG^{\mathrm Q}_v)=\PSPACE$. More precisely, every $k$-message verifier for $\GG_v$, respectively for $\GG^{\mathrm Q}_v$, can be replaced by a $k$-message classical verifier with classical messages, respectively by a $k$-message verifier for $\QC$, whose value on every input is within $1/12$ of the original value.
\end{proposition}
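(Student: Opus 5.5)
We transport everything through the isomorphism $L_v$ (tensored with the identity on qubit and classical factors). This turns $\GG_v$ into classical theory with $\Delta_3$ adjoined, and $\GG^{\mathrm Q}_v$ into $\QC$ with $\Delta_3$ adjoined. Let $V$ be a $k$-message verifier for $\GG_v$ (respectively $\GG^{\mathrm Q}_v$). The transported verifier $L_v V L_v^{-1}$ uses the same circuit with gates replaced by their images. Preparation of $e_0$ becomes preparation of $\delta_1$, the discard $u$ becomes the standard discard, and the classical and quantum gates are unchanged. The only nonstandard gate is the transported measurement $(f_y,u-f_y)$, with effects $(0,0,v)$ and $(1,1,1-v)$. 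Since $L_v$ is an isomorphism of theories mapping provers onto provers and preserving all outcome statistics, the transported verifier facing transported provers has exactly the same acceptance probabilities. Hence its value equals $\val(V_x)$, and its messages are classical registers (respectively classical--quantum registers).

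Next we replace the bias-$v$ coin by an efficient gadget. Each use of the transported measurement reads the $\Delta_3$ label with the standard readout and, if the label is $2$, outputs $1$ with probability $v$. The new verifier $V'$ instead outputs $1$ with probability $\tilde v$, a dyadic rational with $|\tilde v-v|\le 2^{-b}$ computed in time $\poly(b)$ by efficient approximability. It realises this coin by comparing $b$ fair coins with the binary expansion of $\tilde v$. Take $b=\lceil\log_2(12N)\rceil$, where $N\le p(n)$ bounds the number of gate applications of $V_x$. Then $V'$ is polynomial-time uniform, uses only standard gates, and has the same message structure. For $\GG_v$ the messages are classical registers over $\{0,1,2\}$, encodable in bits with the standard readout. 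For $\GG^{\mathrm Q}_v$, $V'$ is a verifier for $\QC$ with $\Delta_3$ adjoined.

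The main step is to compare values uniformly over all provers. We fix any prover $P$ in the transported theory and run both verifiers against it. The two closed circuits differ only in replacing, at each of at most $N$ positions, the instrument $\mathcal M_v$ by $\mathcal M_{\tilde v}$. Both are classically controlled instruments on $\Delta_3$ with flagged outputs, and the difference of their flagged channels has total variation at most $|v-\tilde v|$ on every input, including inputs correlated with any extension: by Fact~\ref{fact:classical-factor} the joint state is a direct sum $\sum_i\delta_i\otimes\beta_i$, and the two channels act only on the label part. A telescoping hybrid argument over the gate positions, using causality (Fact~\ref{fact:causal}) so that later operations are channels and do not increase the difference in unit-effect norm, bounds the change in acceptance probability by $N2^{-b}\le1/12$. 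This holds for every $P$, so $|\val(V'_x)-\val(V_x)|\le1/12$. I expect this hybrid step to be the delicate point, namely checking that the error is measured in a norm contracted by all allowed channels of the prover, including the nonquantum ones. The direct-sum structure of composites with a classical factor is what makes this work.

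Finally, gaps $(2/3,1/3)$ become at least $(7/12,5/12)$. For $\GG_v$, $V'$ is a classical verifier with classical messages, and by Lemma~\ref{lem:classical-sim} its value against classical-theory provers equals its value against all provers, so $L\in\IP$ with constant thresholds $0<s<c<1$. For $\GG^{\mathrm Q}_v$, Lemma~\ref{lem:QC}, with its statement for arbitrary fixed thresholds, gives $L\in\GIP(\QC;k)=\QIP(k)$. With polynomially many messages, Proposition~\ref{prop:error} restores $(2/3,1/3)$, and $\IP=\QIP=\PSPACE$ gives the upper bound. The lower bound $\PSPACE\subseteq\GIP$ is Theorem~\ref{thm:lower}.
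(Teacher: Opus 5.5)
Your proposal is correct and follows essentially the same route as the paper's proof: transport through the isomorphism, replace the bias-$v$ coin by a bias-$\tilde v$ coin built from $b=\lceil\log_2(12N)\rceil$ fair coins, bound the change in acceptance probability by $N2^{-b}\le 1/12$ uniformly over provers, and conclude with $\IP=\PSPACE$ (respectively Lemma~\ref{lem:QC}), Proposition~\ref{prop:error} and Theorem~\ref{thm:lower}. The only difference is cosmetic: you bound each hybrid step in the base norm using the direct-sum structure of Fact~\ref{fact:classical-factor}, whereas the paper uses statistical distance in the classical case and the diamond norm in the quantum case. The paper's choice suffices because after transport every prover is already a classical or a $\QC$ prover, so your concern about nonquantum prover channels does not arise, although your more general argument is also valid.
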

\begin{proof}
The lower bound is Theorem~\ref{thm:lower}. For the upper bound, let $V$ be a verifier for $\GG_v$ with at most $L=\poly(n)$ gates. Transport the whole interaction through the isomorphism of Example~\ref{ex:Gv}: every system of $\GG_v$ becomes a classical system, every prover strategy becomes a classical randomised prover with memory (positive unit-preserving maps between classical cones are stochastic matrices), and conversely every classical randomised prover is the transport of a strategy of $\GG_v$. The verifier's gates become classical gates, the preparation of $\delta_1\in\Delta_3$, the discard, and the measurement with effects $(0,0,v)$ and $(1,1,1-v)$, and its messages become classical registers. Since the isomorphism preserves all probabilities, the value of $V$ over $\GG_v$-provers equals the value of the transported verifier $V_{\mathrm{can}}$ over classical provers. $V_{\mathrm{can}}$ is not yet a verifier for classical theory with $\Delta_3$ adjoined, because the measurement with effects $(0,0,v)$ and $(1,1,1-v)$ is not in the standard gate set. Fix $m=\lceil\log_2(12L)\rceil$, compute in time $\poly(m)$ a dyadic rational $\tilde v$ with $m$ binary digits and $|\tilde v-v|\le2^{-m}$, and let $V'$ be $V_{\mathrm{can}}$ with every such measurement replaced by the following circuit: read the label; if it is $2$, output the outcome of a coin of bias $\tilde v$, implemented by comparing $m$ fair coin tosses with the binary digits of $\tilde v$; otherwise output $0$. Then $V'$ is a polynomial-time uniform verifier for the theory $\GG'$ of classical theory with $\Delta_3$ adjoined, all of its messages are classical registers, and it has the same number of messages as $V$. The two measurements differ, on every input state of $\Delta_3$, by at most $|v-\tilde v|\le2^{-m}$ in statistical distance, and replacing one stochastic map in a classical probabilistic process by another changes the final distribution by at most the largest statistical distance between their outputs on a common input. Hence for every classical prover $P$, $|\Pr[V'_x\leftrightarrow P]-\Pr[V_{\mathrm{can},x}\leftrightarrow P]|\le L\,2^{-m}\le1/12$, and $\val_{\mathrm{classical}}(V'_x)$ is within $1/12$ of $\val_{\GG_v}(V_x)$. If $V$ witnesses $L\in\GIP(\GG_v)$, then $V'$ witnesses $L\in\GIPcl(\GG')$ with completeness $\ge7/12$ and soundness $\le5/12$, so $L\in\PSPACE$ by Proposition~\ref{prop:error} and Theorem~\ref{thm:cl-universal}.

For $\GG^{\mathrm Q}_v$ the argument is the same with $\QC$ in place of classical theory: the transported prover strategies are exactly the $\QC$-prover strategies, the transported verifier $V_{\mathrm{can}}$ is a $\QC$-verifier except for the coin of bias $v$, and $V'$ is a $\QC$-verifier. Replacing one measurement in the interaction by another whose flagged channel is at diamond-norm distance at most $\delta$ changes the final joint state by at most $\delta$ in trace norm, hence the acceptance probability against any fixed prover by at most $\delta/2$. The flagged channels of the two measurements in question differ on every input only in the weights, $v$ versus $\tilde v$, assigned to the two outcomes when the label is $2$, so their diamond-norm distance is at most $2|v-\tilde v|\le2\cdot2^{-m}$. Hence $\val_{\QC}(V'_x)$ is within $L\,2^{-m}\le1/12$ of $\val_{\GG^{\mathrm Q}_v}(V_x)$, and $L\in\GIP(\QC)=\QIP=\PSPACE$ by Lemma~\ref{lem:QC}.
\end{proof}

\begin{remark}[{Dependence on the verifier's gate set}]\label{rem:presentation}
Theorems~\ref{thm:triangle}, \ref{thm:Qtriangle} and~\ref{thm:computable} and Proposition~\ref{prop:presentation} concern one family of theories with a fixed operational content. Whether its interactive-proof class is $\PSPACE$, a class that strictly contains $\PSPACE$ and is witnessed by a problem decidable in quasi-polynomial space, or a class containing undecidable problems depends only on the real number $v$ that fixes the coordinates of the state space relative to those of the verifier's gates. The standing assumptions (T1)--(T4) cannot distinguish these cases, because they constrain the gates and not the state spaces. Any upper bound on $\GIP(\GG)$ for arbitrary messages therefore needs a hypothesis on the message state spaces in the coordinates in which the gates are specified, for instance that the maximum of an effect with efficiently computable coordinates over the states of a message system can be approximated efficiently. This is the sense in which the class depends on how the theory specifies its efficient operations. The construction is related to the observation of Adleman, DeMarrais and Huang that $\BQP$ with arbitrary real gate amplitudes contains undecidable problems \cite{ADH97}; there the noncomputable number sits in a gate, here it sits in the state space while the gates remain effective, which is why the phenomenon appears only with nonclassical messages and disappears for classical messages (Theorem~\ref{thm:cl-universal}) and for quantum verifiers (Theorem~\ref{thm:qm-universal}).
\end{remark}

\section{Discussion}\label{sec:discussion}

Table~\ref{tab:summary} collects the results. Throughout, $\GG$ is a theory satisfying (T1)--(T4).

\begin{table}[t]
\centering
\small
\renewcommand{\arraystretch}{1.12}
\begin{tabular}{@{}>{\raggedright\arraybackslash}p{0.34\textwidth}>{\raggedright\arraybackslash}p{0.44\textwidth}>{\raggedright\arraybackslash}p{0.16\textwidth}@{}}
\toprule
Setting & Result & Where\\
\midrule
Classical messages, any $\GG$ & $\GIPcl(\GG)=\PSPACE$, also for the public-coin subclass & Thm.~\ref{thm:cl-universal}\\
\addlinespace
Quantum verifier and messages, $\GG\supseteq$ quantum theory & $\val_\GG=\val_{\mathrm{quantum}}$; $\GIPqm(\GG;k)=\QIP(k)$; $\GIPqm(\GG)=\GIPqm(\GG;3)=\PSPACE$ & Thm.~\ref{thm:qm-universal}\\
\addlinespace
Instances & {specialises to the known equalities} $\IP=\PSPACE$ and $\QIP=\QIP(3)=\PSPACE$; $\GIP(\QC)=\QIP$ & Cor.~\ref{cor:instances}, Lem.~\ref{lem:QC}\\
\midrule
$\GG_v$, $v$ noncomputable & $\GMA(\GG_v)$ contains an undecidable problem; $\GIPcl(\GG_v)=\PSPACE\subsetneq\GIP(\GG_v)$ & Thm.~\ref{thm:triangle}\\
\addlinespace
$\GG^{\mathrm Q}_v\supseteq$ quantum and classical theory & $\GIPcl=\GIPqm=\PSPACE$; $\GIP\supsetneq\PSPACE$ for noncomputable $v$ & Thm.~\ref{thm:Qtriangle}\\
\addlinespace
$\GG_{v_T}$, $\GG^{\mathrm Q}_{v_T}$, $v_T$ computable & $\GIP\supsetneq\PSPACE$ unconditionally; the witness is in $\DSPACE(2^{O(\log^2n)})$ & Thm.~\ref{thm:computable}\\
\addlinespace
$\GG_v$, $\GG^{\mathrm Q}_v$, $v$ efficiently approximable & $\GIP=\PSPACE$ & Prop.~\ref{prop:presentation}\\
\bottomrule
\end{tabular}
\caption{Summary of results.}\label{tab:summary}
\end{table}

\paragraph{Message systems.}
A prover may compute with arbitrary systems of the theory. As long as the messages are classical registers, the class is $\PSPACE$, and as long as the verifier is a quantum circuit exchanging qubit registers, a prover with additional resources is no stronger than a quantum prover. We prove both statements by reducing the prover's behaviour on the message registers to a strategy of the corresponding subtheory: causality and positivity give classical conditional distributions for classical messages, and complete positivity together with the realisation theorem for quantum combs gives a quantum strategy for quantum messages. {The known bounds $\IP=\PSPACE$ and $\QIP=\PSPACE$ are inputs to these proofs.} Example~\ref{ex:unot} shows that the hypothesis on the composites of qubits in Theorem~\ref{thm:qm-universal} cannot be dropped.

\paragraph{Specification of efficient operations.}
The assumptions of the computation-in-general-theories literature constrain the gates and leave the state spaces free. Section~\ref{sec:beyond} shows that this allows a prover to communicate a noncomputable, or merely slowly computable, real number through the geometry of a message, even in theories isomorphic to quantum theory with classical systems, {whereas the isomorphic theory equipped with its standard gate set has $\GIP=\PSPACE$.} A polynomial-space upper bound for arbitrary messages therefore needs a hypothesis relating the message state spaces to the coordinates of the gates; Proposition~\ref{prop:presentation} gives one such hypothesis for the theories at hand. For theories such as boxworld, whose message state spaces are polytopes with rational vertices in the given coordinates, no such obstruction is visible, but we do not know whether $\GIP(\text{boxworld})=\PSPACE$.

\paragraph{Open questions.}

\begin{question}[Effectiveness hypotheses]\label{q:effective}
Which hypothesis on the message state spaces, formulated in the coordinates of (T4), implies $\GIP(\GG)=\PSPACE$ for a natural class of theories including theories that are not isomorphic to quantum or classical theory? Is $\GIP(\text{boxworld})=\PSPACE$? No analogue of the realisation theorem for quantum combs is known for boxworld, so the strategies of a boxworld prover have no known finite description of the kind that underlies the quantum upper bound \cite{GW07,JJUW11}.
\end{question}

\begin{question}[Constant numbers of classical messages]\label{q:constant}
With a classical verifier, a constant number $k\ge2$ of classical messages gives exactly $\AM$ (Fact~\ref{fact:AM}). With a nonclassical verifier and classical messages, is $\GIPcl(\GG;k)=\AM$ for every theory $\GG$? {Already for $\GG=\QC$, a positive answer would imply $\BQP\subseteq\AM$, since the verifier may ignore the prover.} The verifier's transition weights are path sums rather than uniform counts, and the set-lower-bound protocol of Goldwasser and Sipser \cite{GS86} does not obviously apply to them.
\end{question}

\begin{question}[Several provers]\label{q:multi}
With two or more separated provers that share a state of a composite system, classical messages no longer make the class theory-independent: classical provers with shared randomness give $\MIP=\NEXP$ \cite{BFL91}, entangled quantum provers give $\MIPs=\RE$ \cite{JNV21}, and provers restricted only by the no-signalling condition give exactly $\PSPACE$ for two provers and one round (the inclusion of $\PSPACE$ is from Ref.~\cite{IKM09} and the converse from Ref.~\cite{Ito10}) and $\EXP$ for polynomially many provers \cite{KRR14}. The composition rule of the theory then influences the class. What are the multi-prover classes of boxworld and of other theories satisfying (T1)--(T4)?
\end{question}

\begin{question}[Theories without local tomography]\label{q:LT}
Real quantum theory violates (T1) (Example~\ref{ex:realQ}). {Using real simulations \cite{MMG09}, McKague proves $\QIP_{\R}(k)=\QIP(k)$ for every $k$ \cite[Theorem~1]{McK13}, so its interactive-proof class is $\PSPACE$.} Which parts of Theorems~\ref{thm:cl-universal} and~\ref{thm:qm-universal} survive when local tomography is replaced by a weaker structural assumption on composites?
\end{question}

\begin{question}[Round reduction]\label{q:rounds}
Quantum theory compresses every interactive proof to three messages \cite{KW00}, whereas classical theory does not unless $\PSPACE=\AM$. Since classical theory satisfies (T1)--(T4), no round-reduction theorem can follow from the standing assumptions alone. Which additional hypotheses on a theory, for instance the availability of coherent operations on a qubit to the verifier, imply $\GIP(\GG)=\GIP(\GG;3)$?
\end{question}

\section*{Acknowledgements}
The author acknowledges support from the Agency for Science, Technology and Research (A*STAR), Singapore, and from the Hartree Postdoctoral Fellowship at the Joint Center for Quantum Information and Computer Science (QuICS), University of Maryland. The author thanks Antonios Varvitsiotis and Rahul Jain for preliminary discussions during the early stages of this project.

\section*{Project history and use of generative AI}
This project began in 2023, following the work of Canyakmaz, Lin, Piliouras and Varvitsiotis on multiplicative updates over symmetric cones \cite{CLPV23}, and included preliminary discussions with Antonios Varvitsiotis and Rahul Jain. The author has continued developing the project since then. AI tools were used more recently to help polish the draft and, during revision, to assist with reference checking and critical examination of the exposition and mathematical arguments.

\bibliographystyle{alpha}
\bibliography{refs}

\appendix
\section{The unary space hierarchy theorem}\label{app:hierarchy}

We recall the proof of Fact~\ref{fact:hierarchy}, which is the standard space hierarchy diagonalisation \cite{HLS65,AB09} carried out on unary inputs.

Let $g(n)=2^{n^2}$, which is space constructible: a machine can mark off $g(n)$ cells on input $1^n$ using space $O(g(n))$. Fix an enumeration $M_1,M_2,\dots$ of deterministic Turing machines with a read-only input tape and one work tape over fixed input and work alphabets, in which every machine occurs infinitely often (for instance, enumerate pairs consisting of a machine description and a padding integer) and in which the description of $M_n$ has $O(n)$ bits and is computable from $1^n$ in space $O(n)$. Define
\[
T=\{n\in\N: M_n\text{ on input }1^n\text{, with work space limited to }g(n)\text{ cells, halts and rejects}\}.
\]
A universal machine decides $T$ on input $1^n$ as follows: mark $g(n)$ cells, compute the description of $M_n$, and simulate $M_n$ on $1^n$ within the marked region, using a counter of $O(g(n)+n)$ bits to abort the simulation if it runs for more than $(n+2)\,|\Gamma|^{g(n)}\,g(n)\,|Q|$ steps, where $\Gamma$ and $Q$ are the work alphabet and the state set of $M_n$ (a machine that runs longer within bounded space repeats a configuration and never halts); accept if and only if the simulation halts and rejects. Because the work alphabet is fixed, the simulated tape is stored in $O(g(n))$ cells; the machine description, state and head positions require only $O(n+\log g(n))$ additional cells. We therefore obtain a uniform $O(g(n))$ space bound, with a constant independent of $M_n$, and by tape compression $\{1^n:n\in T\}\in\DSPACE(g(n))$.

Suppose that $T$, as a unary language, were decided by a machine $M$ using space $f(n)\le2^{cn}$ for some constant $c$; converting $M$ to the fixed work alphabet costs at most a constant factor in space, so we may assume that $M$ is one of the enumerated machines. Then $M=M_n$ for infinitely many $n$. Since $2^{cn}\le2^{n^2}$ for all large $n$, for large $n$ with $M_n=M$ the simulation of $M$ on $1^n$ within $g(n)$ cells is faithful and halts, because $M$ halts on every input. Then $n\in T$ if and only if $M$ rejects $1^n$, if and only if $1^n\notin L(M)=\{1^n:n\in T\}$, a contradiction. Hence $\{1^n:n\in T\}\in\DSPACE(2^{n^2})\setminus\bigcup_c\DSPACE(2^{cn})$. Changing finitely many elements of $T$, as in Theorem~\ref{thm:computable}, affects neither statement.

\end{document}